\theoremstyle{plain}
\newtheorem{theorem}{Theorem}
\newtheorem{lemma}[theorem]{Lemma}
\newtheorem{proposition}[theorem]{Proposition}
\newtheorem{definition}[theorem]{Definition}
\newtheorem{corollary}[theorem]{Corollary}
\theoremstyle{definition}
\newtheorem{example}{Example}
\newtheorem{construction}{Construction}
\newtheorem{remark}{Remark}
\def\BibTeX{{\rm B\kern-.05em{\sc i\kern-.025em b}\kern-.08em
		T\kern-.1667em\lower.7ex\hbox{E}\kern-.125emX}}
\begin{document}

\begin{sloppypar}
\title{Calculating the I/O Cost of Linear  Repair Schemes for RS Codes Evaluated on Subspaces via Exponential Sums}
	
\author{Zhongyan Liu, Jingke Xu, Zhifang Zhang%
	\thanks{Zhongyan Liu and Zhifang Zhang are both with the Academy of Mathematics and Systems Science, CAS, and University of Chinese Academy of Sciences, Beijing, China. e-mail: liuzhongyan@amss.ac.cn, zfz@amss.ac.cn.%
		
		Jingke Xu is with School of Information Science and Engineering, Shandong Agricultural University, Tai’an 271018, China. e-mail: xujingke@sdau.edu.cn.}}
\maketitle
	
\thispagestyle{empty}
\begin{abstract}
The I/O cost, defined as the amount of data accessed at helper nodes during the repair process, is a crucial metric for repair efficiency of Reed-Solomon (RS) codes. Recently, a formula that relates the I/O cost to the Hamming weight of some linear spaces was proposed in [Liu\&Zhang-TCOM2024]. In this work, we introduce an effective method for calculating the Hamming weight of such linear spaces using exponential sums. With this method, we derive lower bounds on the I/O cost for RS codes evaluated on a $d$-dimensional subspace of $\mathbb{F}_{q^\ell}$ with $r=2$ or $3$ parities. These bounds are exactly matched in the cases $r=2,\ell-d+1\mid\ell$ and $r=3,d=\ell$ or $\ell-d+2\mid\ell$, via the repair schemes designed in this work. We refer to schemes that achieve the lower bound as I/O-optimal repair schemes. Additionally, we characterize the optimal repair bandwidth of I/O-optimal repair schemes for full-length RS codes with two parities, and build an I/O-optimal repair scheme for full-length RS codes with three parities, achieving lower repair bandwidth than previous schemes.
\end{abstract}
\begin{IEEEkeywords}
Distributed storage system, Reed-Solomon codes, Optimal access, Exponential sums
\end{IEEEkeywords}
	
\section{Introduction}\label{Sec1}
To ensure fault-tolerant storage with low redundancy, Maximum Distance Separable (MDS) codes are extensively used in distributed storage systems (DSSs). Specifically, a data file of $k$ blocks is encoded into a codeword of $n$ blocks using an $[n,k]$ MDS code, which is then distributed across $n$ storage nodes each storing one block. Due to the frequent occurrence of node failures, node repair is a central issue in code-based DSSs, where data stored on failed nodes must be recovered by downloading information from surviving nodes (i.e., helper nodes). A trivial repair approach for the $[n,k]$ MDS code involves connecting to any $k$ helper nodes and downloading all the data stored on these nodes. However, the trivial scheme incurs a high repair cost, as the repair bandwidth equals the entire file size. In the repair process, the repair bandwidth refers to the total amount of data transmitted from the helper nodes. Minimizing the repair bandwidth becomes one of the driving forces behind research in distributed storage codes. Dimakis et al. \cite{csbound} established the cut-set bound to characterize the optimal repair bandwidth for MDS array codes, which has since prompted extensive research aimed at developing MDS array codes with optimal repair bandwidth \cite{overview}.

Reed-Solomon (RS) codes are the most widely used family of MDS codes to date, and thus constructing efficient repair schemes for RS codes is of great significance to practical use. Guruswami and Wootters \cite{RSrepair} proposed the first repair scheme for RS codes that achieves a lower repair bandwidth than the trivial approach. Their scheme is based on the framework of linear repair schemes for scalar MDS codes proposed by Shanmugam et al. in \cite{scalarMDS}, i.e., treating scalar MDS codes over $\mathbb{F}_{q^\ell}$ as MDS array codes over $\mathbb{F}_q$ with sub-packetization $\ell$. Later, Dau and Milenkovic \cite{obRS} extended the scheme in \cite{RSrepair} to a broader parameter regime. Although the schemes in \cite{RSrepair,obRS} have been proven to achieve the optimal repair bandwidth in the full-length cases, they still fall short of the cut-set bound due to the constraint $\ell={\rm O}(\log n)$.  Tamo et al. \cite{RScsbound1} first designed the RS code with a repair bandwidth that matches the cut-set bound for sufficiently large $\ell\gtrsim n^n$. Then, a series of works \cite{RStradeoff,RStradeoff2,RStradeoff3} are devoted to providing a tradeoff between the sub-packetization and repair bandwidth. Also for practical reasons, some works \cite{RS1410, RSoverF2} studied repair schemes for RS codes with the parameters currently used in modern storage systems. In a recent work \cite{subspacelocators}, Berman et al. generalized the repair schemes in \cite{obRS} to repair RS codes evaluated on a subspace, leading to a lower repair bandwidth than \cite{obRS}. Additionally, the repair of multiple node failures for RS codes has been studied both in the centralized model \cite{CenRS} and the cooperative model \cite{CoopRS}.

Besides repair bandwidth, the I/O cost, which is the volume of data accessed at all helper nodes during the repair process, is also an important metric for repair efficiency. The I/O cost must be at least as high as the repair bandwidth, so a lower bound on the repair bandwidth naturally provides a lower bound on the I/O cost. However, calculating the I/O cost is more complex because it is also influenced by the selection of the basis of $\mathbb{F}_{q^\ell}$ over $\mathbb{F}_q$. Dau et al. \cite{fullr=2} first derived a lower bound on the I/O cost for repairing full-length RS codes with two parities, improving upon the repair bandwidth bound established in \cite{RSrepair,obRS}. They also presented repair schemes that achieve this bound. However, their bound and repair schemes are limited to the field of characteristic 2. Later, Li et al. \cite{shortr=2} extended the lower bound to RS codes evaluated on a subspace with two parities over finite fields of characteristic 2, but failed to provide matching repair schemes. Based on the construction in \cite{RScsbound1}, Chen et al. \cite{oaRS} built a family of RS codes by further enlarging $\ell$ by a factor exponential in $n$, ensuring that both the I/O cost and repair bandwidth meet the cut-set bound. Recently, a formula that relates the I/O cost to the Hamming weight of some linear spaces was proposed in \cite{I/OFormula}. Using this formula, the authors in \cite{I/OFormula} established lower bounds on the I/O cost for full-length RS codes with two or three parities. Moreover, they constructed linear repair schemes for full-length RS codes via $q$-polynomials, achieving a reduced I/O cost compared to the schemes in \cite{RSrepair,obRS}.

Although both the scalar codes and array codes have been constructed such that their I/O cost and repair bandwidth simultaneously achieve the cut-set bound \cite{oaRS,longMSR,Ye-OAMSR}, it is proved that they require a large sub-packetization with $\ell\geq (n-k)^{\lceil\frac{n-1}{n-k}\rceil}$ \cite{spbound}. For small $\ell$, the tradeoff between repair bandwidth and I/O cost is still unknown. A first attempt along this line of research was made in \cite{IOfulllength}, where the authors showed that the bandwidth-optimal repair schemes for the full-length RS codes in \cite{RSrepair,obRS} incur a trivial I/O cost. Later in \cite{fullr=2}, the authors proved that the I/O cost of any bandwidth-optimal repair schemes for full-length RS codes with two parities over finite fields of characteristic 2 is trivial.

\subsection{Contributions}
In this work, we focus on the I/O cost of linear repair schemes for RS codes evaluated on a $d$-dimensional subspace of $\mathbb{F}_{q^\ell}$. Both the I/O cost and repair bandwidth are measured in the number of symbols in $\mathbb{F}_q$ throughout. Our contributions are outlined as follows.
\begin{enumerate}
\item \textbf{An effective method for calculating the I/O cost.} Specifically, we define $\ell$ special parity-check polynomials corresponding to each linear repair scheme, referred to as normalized polynomials (see Definition \ref{def}). These normalized polynomials simplify the calculation of I/O cost, and play an important role in subsequently characterizing the optimal repair bandwidth for I/O-optimal repair schemes. Based on the normalized polynomials, we utilize exponential sums to calculate the I/O cost and obtain a concise formula in Theorem \ref{thmformula}.
\item \textbf{Improved lower bounds on the I/O cost.} By applying our formula and the Weil bound for exponential sums, we first derive a lower bound on the I/O cost for full-length RS codes over $\mathbb{F}_{q^\ell}$ with $r\leq {\rm Char}(\mathbb{F}_{q^\ell})$ parities, improving upon the lower bound of repair bandwidth derived in \cite{RSrepair,obRS}. Furthermore, we provide a more precise estimate on the I/O cost for RS codes evaluated on a $d$-dimensional subspace of $\mathbb{F}_{q^\ell}$ with two or three parities, establishing lower bounds in Theorem \ref{r=2} and Theorem \ref{r=3}, respectively. In the cases where $r=2,\ell-d+1\mid\ell$ and $r=3$, $d=\ell$ or $\ell-d+2\mid\ell$, our bounds are tight due to the repair schemes built in Construction \ref{cons2}. The comparison between our bounds and all known bounds on the I/O cost is presented in Table \ref{table1}.

\item \textbf{Lower bounds on the repair bandwidth for I/O-optimal repair schemes of RS codes with two or three parities.} We fully determine the optimal repair bandwidth of I/O-optimal repair schemes for full-length RS codes with two parities. For full-length RS codes with three parities, we build a repair scheme in Construction \ref{cons1}, which has a lower repair bandwidth than the scheme in \cite{I/OFormula} while achieving the optimal I/O cost. The results on the repair bandwidth derived in this work are summarized in Table \ref{table-RB}.
\end{enumerate}

\begin{table}[H]
	\renewcommand\arraystretch{1.45}
	\centering
	\captionsetup{justification=centering}
	\caption{\scriptsize Lower bounds on the I/O cost for RS code ${\rm RS}(\mathcal{A},n-r)$ over $\mathbb{F}_{q^\ell}$,\\ where $\mathcal{A}$ is a $d$-dimensional $\mathbb{F}_{q}$-subspace of $\mathbb{F}_{q^\ell}$ and $n=q^d$. }\label{table1}
	\begin{tabular}{|c|c|c|c|c|c|}
		\hline  $r$ & Reference & $d$ & Lower bound of I/O & Tight: Y/N & $q$\\ \hline
		\multirow{2}{*}{$r\leq {\rm Char}(\mathbb{F}_{q^\ell})$}  & \cite{RSrepair,obRS} & \multirow{2}{*}{$d=\ell$} & $(n-1)\ell-\frac{r-1}{q-1}(q^\ell-1)$ & N &\multirow{2}{*}{ALL}  \\ \cline{2-2}\cline{4-5}
		& {\bf Cor. \ref{coro11}} &   & $(n\!-\!1)\ell-q^{\ell\!-\!1}\!-(r\!-\!2)(q\!-\!1)q^{\frac{\ell}{2}-1}$    &  $r=2$ & \\ \hline
		\multirow{3}{*}{$r=2$}     &\citep[Thm. 6]{I/OFormula}  & $d=\ell$ & $(n-1)\ell-q^{\ell-1}$ & Y & ALL\\ \cline{2-6}
		&\citep[Thm. 1]{shortr=2}  & \multirow{2}{*}{$d\leq\ell$} & $(n-1)\ell-(\ell-d+1)2^{d-1}$ &  ${\ell\!-\!d\!+\!1\mid\ell}^{~*}$ &  2\\ \cline{2-2}\cline{4-6}
		& {\bf Thm. \ref{r=2}}   &  & $ (n-1)\ell-(\ell-d+1)q^{d-1}$ &  $\ell\!-\!d\!+\!1\mid\ell$ & ALL\\ \hline
		\multirow{2}{*}{$r=3$}     & \citep[Thm. 7]{I/OFormula}  & $d=\ell$ &$(n-1)\ell-2^\ell-2^{\ell-3}$ & N & \multirow{2}{*}{2}\\  \cline{2-5}
		& {\bf Thm. \ref{r=3}} & $d\leq\ell$ & $(n-1)\ell-(\ell-d+2)2^{d-1}$ & $d=\ell$ or $\ell\!-\!d+2\mid\ell$ & \\ \hline
		
	\end{tabular}
	
	\vspace{6pt}
		\footnotesize{$^*$\cite{shortr=2} derived the same lower bound as our Thm. \ref{r=2}, however, it restricted to ${\rm Char}(\mathbb{F}_{q^\ell})=2$ and did not provide a repair scheme matching the lower bound. In the column Tight: Y/N,  Y means the bound is tight for all parameters, N means the bound is not tight,  and the parameter conditions indicate when the bound is tight.}
\end{table}

\begin{table}[H]
	\renewcommand\arraystretch{1.45}
	\centering
	\captionsetup{justification=centering}
	\caption{\scriptsize Lower bounds on the repair bandwidth of I/O-optimal repair schemes for ${\rm RS}(\mathcal{A},n-r)$ over $\mathbb{F}_{q^\ell}$, \\ where $\mathcal{A}$ is a $d$-dimensional $\mathbb{F}_{q}$-subspace of $\mathbb{F}_{q^\ell}$ and $n=q^d$. }\label{table-RB}
	\begin{tabular}{|c|c|c|c|c|c|}
		\hline
		\multicolumn{1}{|c|}{$r$} &  This paper &$d$    & $q$  & Lower bound of repair bandwidth & Tight: Y/N \\ \hline
		\multirow{3}{*}{$r=2$} &\multirow{3}{*}{\bf Thm. \ref{b-r=2-d}}  & \multirow{2}{*}{$d=\ell$}    &$>2$    &   $(n-1)\ell-q^{\ell-1}$       & Y    \\ \cline{4-6}
		
		& & &2  &   $(n-1)\ell-3\cdot 2^{\ell-2}$       & Y     \\ \cline{3-6}
		& & $d<\ell,\ell-d+1|\ell$ &      ALL & $(n-1)d-q^{2d-\ell-1}$ &Unknown \\ \hline
		\multicolumn{1}{|c|}{$r=3$} & {\bf Thm. \ref{b-r=3-d}} & $d=\ell$ or $\ell\!-\!d\!+\!2\mid\ell$   & 2  & $(n-1)(d-1)-2^{2d-\ell-1}+\lfloor2^{3d-2\ell-4}\rfloor$  & Unknown \\ \hline
	\end{tabular}
\end{table}

\subsection{Organization}	
The remaining of the paper is organized as follows. Section \ref{Sec2} introduces preliminaries of repair schemes. Section \ref{Sec3} introduces exponential sums for an effective estimation of the I/O cost, and thereby derives Theorem \ref{thmformula} and Corollary \ref{coro11}. Section \ref{Sec4} further derives lower bounds on the I/O cost for RS codes with two or three parities, and also characterizes the repair bandwidth for I/O-optimal schemes. Section \ref{Sec5} presents the construction of I/O-optimal repair schemes. Finally, Section \ref{Sec6} concludes the paper.

\section{Preliminaries}\label{Sec2}
For positive integers $m\leq n$, denote $[n]=\{1,...,n\}$ and $[m,n]=\{m,m+1,...,n\}$. Let $B=\mathbb{F}_q$ be the finite field of $q$ elements and $F=\mathbb{F}_{q^\ell}$ be the extension field of $B$ with degree $\ell>1$. For any $\alpha\in F$,
the trace ${\rm Tr}_{F/B}(\alpha)$ of $\alpha$ over $B$ is defined by ${\rm Tr}_{F/B}(\alpha)=\sum^{\ell-1}_{i=0}\alpha^{q^i}$. If $B$ is the prime subfield of $F$, then ${\rm Tr}_{F/B}(\alpha)$ is called the absolute trace of $\alpha$ and simply denoted by ${\rm Tr}(\alpha)$. All vectors throughout are treated as row vectors and denoted by bold letters in italics, such as ${\bm c},{\bm g}$, etc.  For a vector of length $m$, say, $\bm{x}=(x_1,...,x_m)$, define ${\rm supp}(\bm{x})=\{j\in[m]:x_j\neq 0\}$ and ${\rm wt}(\bm{x})=|{\rm supp}(\bm{x})|$. Furthermore, for a set of vectors $W\subseteq B^m$, define ${\rm supp}(W)=\bigcup_{{\bm x}\in W}{\rm supp}({\bm x})$ and ${\rm wt}(W)=\sum_{\bm{x}\in W}{\rm wt}(\bm{x})$. Moreover, let ${\rm span}_{B}(W)$ be the $B$-linear space spanned by the vectors in $W$, and $\dim_B(W)$ denote the rank of a set of vectors in $W$ over $B$.

\subsection{Linear repair schemes for scalar MDS codes}

First recall some basics about the vector representation of elements of $F$ over $B$.
Let $\mathcal{B}=\{\beta^{(1)},...,\beta^{(\ell)}\}$ be a basis of $F$ over $B$ and $\hat{\mathcal{B}}=\{\gamma^{(1)},...,\gamma^{(\ell)}\}$ be the dual basis of $\mathcal{B}$. It is well known that $\alpha=\sum_{i=1}^\ell{\rm Tr}_{F/B}(\alpha\gamma^{(i)})\beta^{(i)}$ for any $\alpha\in F$. Thus, the vector representation of elements of $F$ over $B$ with respect to $\mathcal{B}$ is defined by a map $\Phi_{\mathcal{B}}: F\rightarrow B^\ell$ where \begin{equation*}\Phi_{\mathcal{B}}(\alpha)=({\rm Tr}_{F/B}(\alpha\gamma^{(1)}),...,{\rm Tr}_{F/B}(\alpha\gamma^{(\ell)}))\;,~\forall \alpha\in F.\label{eq1-}\end{equation*}
For simplicity, the elements in $F$ are called symbols, and those in $B$ are called subsymbols.

Let $\mathcal{C}$ be an $[n,k]$ linear scalar MDS code over $F$. To repair a failed  node $i^*$ which stores $\textbf{c}_{i^*}$ for each codeword $(\textbf{c}_1,...,\textbf{c}_n)\in\mathcal{C}$, the trivial repair scheme is to download any $k$ symbols in $\{\textbf{c}_{i}\}_{i\neq i^*}$. A nontrivial repair scheme is to vectorize the scalar MDS code and download partial subsymbols from each helper node. More specifically,
\begin{equation*}
   \Phi_{\mathcal{B}}(\mathcal{C})=\{(\Phi_{\mathcal{B}}(\textbf{c}_1),...,\Phi_{\mathcal{B}}(\textbf{c}_n)):(\textbf{c}_1,...,\textbf{c}_n)\in\mathcal{C}\}	
\end{equation*}
is an $(n,k;\ell)$ linear MDS array code over $B$. Then the node repair problem falls into the linear array code $\Phi_{\mathcal{B}}(\mathcal{C})$.
\begin{lemma}\label{lem1}
	The dual code of $\Phi_{\mathcal{B}}(\mathcal{C})$ is $\Phi_{\hat{\mathcal{B}}}(\mathcal{C}^\bot)$.
\end{lemma}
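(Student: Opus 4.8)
\emph{Proof proposal.} The plan is to establish the inclusion $\Phi_{\hat{\mathcal{B}}}(\mathcal{C}^\bot)\subseteq\Phi_{\mathcal{B}}(\mathcal{C})^\bot$ by a direct orthogonality computation, and then upgrade it to an equality by counting dimensions over $B$. Since $\Phi_{\mathcal{B}}$ and $\Phi_{\hat{\mathcal{B}}}$ act coordinatewise as $B$-linear bijections $F\to B^\ell$, they extend to $B$-linear bijections $F^n\to B^{n\ell}$; hence $\dim_B\Phi_{\mathcal{B}}(\mathcal{C})=k\ell$, so $\dim_B\Phi_{\mathcal{B}}(\mathcal{C})^\bot=(n-k)\ell$, while $\dim_B\Phi_{\hat{\mathcal{B}}}(\mathcal{C}^\bot)=(n-k)\ell$ as well. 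Thus a one-sided inclusion already forces equality, and no extra argument involving the MDS property is needed.

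For the inclusion I would take arbitrary $\bm{c}=(c_1,\dots,c_n)\in\mathcal{C}$ and $\bm{d}=(d_1,\dots,d_n)\in\mathcal{C}^\bot$ and evaluate the standard $B$-inner product of $\Phi_{\mathcal{B}}(\bm{c})$ and $\Phi_{\hat{\mathcal{B}}}(\bm{d})$ in $B^{n\ell}$. Two small facts drive the computation: first, the dual basis of $\hat{\mathcal{B}}$ is again $\mathcal{B}$, so $\Phi_{\hat{\mathcal{B}}}(\alpha)=({\rm Tr}_{F/B}(\alpha\beta^{(1)}),\dots,{\rm Tr}_{F/B}(\alpha\beta^{(\ell)}))$; second, each subsymbol ${\rm Tr}_{F/B}(c_i\gamma^{(j)})$ lies in $B$ and may therefore be pulled inside a trace. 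This gives, for each fixed $i$,
\[
\sum_{j=1}^\ell {\rm Tr}_{F/B}(c_i\gamma^{(j)})\,{\rm Tr}_{F/B}(d_i\beta^{(j)})
={\rm Tr}_{F/B}\!\Big(d_i\sum_{j=1}^\ell{\rm Tr}_{F/B}(c_i\gamma^{(j)})\beta^{(j)}\Big)
={\rm Tr}_{F/B}(d_ic_i),
\]
where the last equality is the reconstruction identity $\alpha=\sum_{j}{\rm Tr}_{F/B}(\alpha\gamma^{(j)})\beta^{(j)}$ applied to $\alpha=c_i$. Summing over $i$ yields ${\rm Tr}_{F/B}\big(\sum_{i=1}^n c_id_i\big)={\rm Tr}_{F/B}(0)=0$ because $\bm{d}\in\mathcal{C}^\bot$, so $\Phi_{\mathcal{B}}(\bm{c})$ and $\Phi_{\hat{\mathcal{B}}}(\bm{d})$ are orthogonal over $B$, as desired.

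I do not expect a genuine obstacle in this argument. The one place to be careful is the bookkeeping of which basis sits inside each trace: by definition $\Phi_{\mathcal{B}}$ is built from the \emph{dual} basis $\hat{\mathcal{B}}$, so $\Phi_{\hat{\mathcal{B}}}$ is built from $\mathcal{B}$, and the two play complementary roles in the telescoping step above. It is also worth fixing the conventions explicitly at the outset --- the array-code dual is taken with respect to the ordinary dot product on $B^{n\ell}$, and $\mathcal{C}^\bot$ with respect to the ordinary $F$-bilinear form on $F^n$ --- since the whole proof amounts to the observation that $\Phi_{\mathcal{B}}$ and $\Phi_{\hat{\mathcal{B}}}$ are mutually adjoint for these two pairings.
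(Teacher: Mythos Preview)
Your proposal is correct and follows essentially the same route as the paper: establish that the per-coordinate $B$-inner product $\Phi_{\hat{\mathcal{B}}}(\theta)\cdot\Phi_{\mathcal{B}}(\alpha)$ equals ${\rm Tr}_{F/B}(\theta\alpha)$, sum over coordinates to get the inclusion $\Phi_{\hat{\mathcal{B}}}(\mathcal{C}^\bot)\subseteq\Phi_{\mathcal{B}}(\mathcal{C})^\bot$, and conclude by matching $B$-dimensions. The only cosmetic difference is that the paper verifies the key identity by expanding both elements in dual bases and noting that $({\rm Tr}_{F/B}(\gamma^{(i)}\beta^{(j)}))_{i,j}$ is the identity matrix, whereas you pull a trace-valued scalar inside ${\rm Tr}_{F/B}$ and invoke the reconstruction formula; these are two phrasings of the same dual-basis computation.
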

\begin{proof}
We first prove that $\Phi_{\hat{\mathcal{B}}}(\theta)\Phi_{\mathcal{B}}(\alpha)^\top={\rm Tr}_{F/B}(\theta\alpha)$ for all $\theta,\alpha\in F$. Writing $\theta$ and $\alpha$ as the combinations with respect to the basis $\hat{\mathcal{B}}$ and $\mathcal{B}$, respectively, i.e.,  $\theta=\Phi_{\hat{\mathcal{B}}}(\theta)(\gamma^{(1)},...,\gamma^{(\ell)})^\top$ and $\alpha=(\beta^{(1)},...,\beta^{(\ell)})\Phi_{\mathcal{B}}(\alpha)^\top$, it follows $$\theta\alpha=\Phi_{\hat{\mathcal{B}}}(\theta)\Big((\gamma^{(1)},...,\gamma^{(\ell)})^\top\cdot(\beta^{(1)},...,\beta^{(\ell)})\Big)\Phi_{\mathcal{B}}(\alpha)^\top\;.$$
Due to the $B$-linearity of ${\rm Tr}_{F/B}$, it has ${\rm Tr}_{F/B}(\theta\alpha)=\Phi_{\hat{\mathcal{B}}}(\theta)\cdot\big({\rm Tr}(\gamma^{(i)}\beta^{(j)})\big)_{i,j}\cdot\Phi_{\mathcal{B}}(\alpha)^\top=\Phi_{\hat{\mathcal{B}}}(\theta)\Phi_{\mathcal{B}}(\alpha)^\top$
where the last equality is because $\big({\rm Tr}(\gamma^{(i)}\beta^{(j)})\big)_{i,j}$ equals the identity matrix.

Then, for every ${\bm c}=(\textbf{c}_1,...,\textbf{c}_n)\in\mathcal{C}$ and ${\bm g}=(\textbf{g}_1,...,\textbf{g}_n)\in\mathcal{C}^\perp$,
$$\Phi_{\hat{\mathcal{B}}}({\bm g})\Phi_{\mathcal{B}}({\bm c})^\top=\sum_{j=1}^{n}\Phi_{\hat{\mathcal{B}}}(\textbf{g}_j)\Phi_{\mathcal{B}}(\textbf{c}_j)^\top=\sum_{j=1}^{n}{\rm Tr}_{F/B}(\textbf{g}_j\textbf{c}_j)={\rm Tr}_{F/B}\big(\sum_{j=1}^{n}\textbf{g}_j\textbf{c}_j\big)=0.$$
Therefore, $\Phi_{\hat{\mathcal{B}}}(\mathcal{C}^\perp)\subseteq\Phi_{\mathcal{B}}(\mathcal{C})^\perp$. Noting that $\dim_B(\Phi_{\hat{\mathcal{B}}}(\mathcal{C}^\perp))=(n-k)\ell=\dim_B(\Phi_{\mathcal{B}}(\mathcal{C})^\perp)$, the proof completes.
\end{proof}

Next we introduce the linear repair scheme for node $i^*$.
Suppose ${\bm c}=(\textbf{c}_1,...,\textbf{c}_n)\in\mathcal{C}$ and ${\bm g}^{(j)}=(\textbf{g}_1^{(j)},\textbf{g}_2^{(j)},...,\textbf{g}_n^{(j)})\in\mathcal{C}^\bot$ for all $j\in[\ell]$. By Lemma \ref{lem1}, it has that for $j\in[\ell]$,
$$\Phi_{\hat{\mathcal{B}}}({\bm g}^{(j)})\Phi_{\mathcal{B}}({\bm c})^\top=\sum_{i=1}^{n}\Phi_{\hat{\mathcal{B}}}(\textbf{g}_i^{(j)})\Phi_{\mathcal{B}}(\textbf{c}_i)^\top=0.$$That is,
\begin{equation}\label{eq1}
	W_{i^*}\Phi_{\mathcal{B}}(\textbf{c}_{i^*})^\top=-\sum_{i\neq i^*}W_i\Phi_{\mathcal{B}}(\textbf{c}_{i})^\top\;,
\end{equation}
where for $i\in[n]$,
\begin{equation}\label{W_j}
	W_i=\begin{pmatrix}
		\Phi_{\hat{\mathcal{B}}}(\textbf{g}_{i}^{(1)})\\
		\Phi_{\hat{\mathcal{B}}}(\textbf{g}_{i}^{(2)})\\
		\vdots\\
		\Phi_{\hat{\mathcal{B}}}(\textbf{g}_{i}^{(\ell)})
	\end{pmatrix}=\begin{pmatrix}
	{\rm Tr}_{F/B}(\textbf{g}_{i}^{(1)}\beta^{(1)}) & {\rm Tr}_{F/B}(\textbf{g}_{i}^{(1)}\beta^{(2)}) & \cdots &{\rm Tr}_{F/B}(\textbf{g}_{i}^{(1)}\beta^{(\ell)})\\
	{\rm Tr}_{F/B}(\textbf{g}_{i}^{(2)}\beta^{(1)})& {\rm Tr}_{F/B}(\textbf{g}_{i}^{(2)}\beta^{(2)})& \cdots &{\rm Tr}_{F/B}(\textbf{g}_{i}^{(2)}\beta^{(\ell)})\\
	\vdots & \vdots & \ddots & \vdots \\
	{\rm Tr}_{F/B}(\textbf{g}_{i}^{(\ell)}\beta^{(1)})&{\rm Tr}_{F/B}(\textbf{g}_{i}^{(\ell)}\beta^{(2)})&\cdots &{\rm Tr}_{F/B}(\textbf{g}_{i}^{(\ell)}\beta^{(\ell)})
	\end{pmatrix}.
\end{equation}
We say $\{{\bm g}^{(j)}\}_{j=1}^\ell\subseteq \mathcal{C}^\bot$ defines a repair scheme for node $i^*$, if one can solve $\Phi_{\mathcal{B}}(\textbf{c}_{i^*})$ from  the linear system \eqref{eq1}. Obviously, a necessary condition for the repair is ${\rm rank}(W_{i^*})=\ell$, or equivalently, $\dim_B\big(\{\textbf{g}_{i^*}^{(j)}\}_{j=1}^\ell\big)=\ell$. Moreover, terms on the right hand of  \eqref{eq1} should be obtained from the helper nodes during the  repair process. Thus, the helper node $i$  for $i\neq i^*$ needs to access ${\rm nz}(W_i)$ subsymbols and transmit only ${\rm rank}(W_i)$ subsymbols to node $i^*$, where ${\rm nz}(W_i)$ denotes the number of nonzero columns in $W_i$.  Therefore,  the I/O cost of the repair scheme is
\begin{equation}\label{eq3}
    \gamma_{I/O}=\sum_{i\in[n]\setminus\{i^*\}}{\rm nz}(W_i)=\sum_{i\in[n]}{\rm nz}(W_i)-\ell,
\end{equation}
where the last equality follows from the necessary condition ${\rm nz}(W_{i^*})={\rm rank}(W_{i^*})=\ell$ for the repair process.
Meanwhile, the repair bandwidth of the repair scheme is
\begin{equation}\label{eq4}
    b=\sum_{i\in[n]\setminus\{i^*\}}{\rm rank}(W_i)=\sum_{i\in[n]}{\rm rank}(W_i)-\ell.
\end{equation}

Formally, the linear repair scheme for scalar MDS codes over $F$ is defined as follows.
\begin{definition}[linear repair scheme] Let $\mathcal{C}$ be an $[n,k]$ MDS code over $F$, and $B\subseteq F$ be a subfield of $F$ with $[F:B]=\ell$. A linear repair scheme  for node $i^*$ over $B$ is characterized by $\ell$ dual codewords ${\bm g}^{(1)},...,{\bm g}^{(\ell)}\in\mathcal{C}^{\bot}$, where ${\bm g}^{(j)}=(\textbf{g}_1^{(j)},...,\textbf{g}_n^{(j)})$, satisfying $\dim_B\big(\{\textbf{g}_{i^*}^{(j)}\}_{j=1}^\ell\big)=\ell$. Moreover, the repair bandwidth is $b=\sum_{i\in[n]}{\rm rank}(W_i)-\ell$ and the I/O cost with respect to $\mathcal{B}$ is $\gamma_{I/O}=\sum_{i\in[n]}{\rm nz}(W_i)-\ell$, where $W_i$ is defined as in (\ref{W_j}).
\end{definition}
\begin{remark}The I/O cost depends on the choice of basis $\mathcal{B}$. That is, the same repair scheme (i.e., ${\bm g}^{(1)},...,{\bm g}^{(\ell)}\in\mathcal{C}^{\bot}$) may have different I/O cost with respect to different bases. However, the repair bandwidth is totally determined by the repair scheme, because ${\rm rank}(W_i)$ remains unchanged under different bases.
\end{remark}

In \cite{I/OFormula}, the authors transformed the problem of calculating the I/O cost to the calculation of the Hamming weight of some linear space by using the following lemma.
\begin{lemma}\label{wt}
	Let $G$ be a $k\times m$ matrix over $B$. Then,
    \begin{equation*}
		{\rm nz}(G)=\frac{1}{q^{k-1}(q-1)}\sum_{{\bm u}\in B^k}{\rm wt}({\bm u}G).
    \end{equation*}	
\end{lemma}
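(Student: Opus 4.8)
The plan is to compute the double sum $\sum_{\bm u\in B^k}{\rm wt}(\bm u G)$ directly by exchanging the order of summation and reducing it to a count over a single column. Write $G=(\bm g_1\mid\cdots\mid\bm g_m)$, where $\bm g_j\in B^k$ is the $j$-th column of $G$. For a row vector $\bm u\in B^k$, the $j$-th entry of $\bm u G$ is the scalar $\bm u\bm g_j\in B$, so ${\rm wt}(\bm u G)=\sum_{j=1}^m\delta_j(\bm u)$, where $\delta_j(\bm u)=1$ if $\bm u\bm g_j\neq 0$ and $\delta_j(\bm u)=0$ otherwise. Summing over all $\bm u\in B^k$ and swapping the two sums gives $\sum_{\bm u\in B^k}{\rm wt}(\bm u G)=\sum_{j=1}^m N_j$, where $N_j=\bigl|\{\bm u\in B^k:\bm u\bm g_j\neq 0\}\bigr|$.

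Next I would evaluate $N_j$. If $\bm g_j=\bm 0$, then $\bm u\bm g_j=0$ for every $\bm u$, hence $N_j=0$; these are precisely the columns excluded from ${\rm nz}(G)$. If $\bm g_j\neq\bm 0$, the map $\bm u\mapsto\bm u\bm g_j$ is a nonzero $B$-linear functional on $B^k$, so it is surjective onto $B$ and its kernel is a hyperplane of size $q^{k-1}$; therefore $N_j=q^k-q^{k-1}=q^{k-1}(q-1)$. Substituting back, $\sum_{\bm u\in B^k}{\rm wt}(\bm u G)=q^{k-1}(q-1)\cdot\bigl|\{j\in[m]:\bm g_j\neq\bm 0\}\bigr|=q^{k-1}(q-1)\cdot{\rm nz}(G)$, and dividing both sides by $q^{k-1}(q-1)$ yields the claimed identity.

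This lemma is elementary and there is no genuine obstacle; the only points requiring attention are separating the zero columns from the nonzero ones and invoking the standard fact that a nonzero linear functional on $B^k$ has a codimension-one kernel. If one prefers to phrase the argument in the exponential-sum language used later in the paper, the count $N_j$ can instead be recovered from the orthogonality relation $\sum_{\psi}\psi(\bm u\bm g_j)=q$ when $\bm u\bm g_j=0$ and $0$ otherwise, where $\psi$ ranges over the additive characters of $B$; summing this over $\bm u\in B^k$ gives the same value of $N_j$. Either way the computation is immediate.
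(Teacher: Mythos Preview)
Your proof is correct. The paper does not spell out its own proof of this lemma but simply remarks that it extends \cite[Lemma 4]{I/OFormula} ``using a similar proof''; your column-by-column double counting (reducing to the fact that a nonzero linear functional on $B^k$ has a kernel of size $q^{k-1}$) is precisely that standard argument.
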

Note that the rows of $G$ in \citep[Lemma 4]{I/OFormula} are linearly independent over $B$. Here we extend this result to general $k\times m$ matrices  using a similar proof.

\begin{lemma}[\cite{I/OFormula}]\label{I/Oformula}
Suppose $\{{\bm g}^{(j)}\}_{j=1}^\ell\subseteq \mathcal{C}^\bot$ defines a repair scheme for node $i^*$. Then the I/O cost of the repair scheme with respect to $\mathcal{B}$ is
	$$\gamma_{I/O}=\frac{\sum_{{\bm u}\in B^\ell}{\rm wt}({\bm u}G_{i^*})}{q^{\ell-1}(q-1)}-\ell,$$
where $G_{i^*}=(W_1~W_2~\cdots~W_n)$ and $W_i, i\in[n]$, is defined as in (\ref{W_j}).
\end{lemma}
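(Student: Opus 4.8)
The statement is essentially a repackaging of two facts already available in the excerpt: the I/O-cost expression $\gamma_{I/O}=\sum_{i\in[n]}{\rm nz}(W_i)-\ell$ from \eqref{eq3}, and the counting identity of Lemma \ref{wt}. So the plan is to first observe that $G_{i^*}$ is nothing but the horizontal concatenation $(W_1\ W_2\ \cdots\ W_n)$ of the blocks $W_i$, each of which is an $\ell\times\ell$ matrix over $B$; hence every column of $G_{i^*}$ lies in exactly one block $W_i$, and a column of $G_{i^*}$ is nonzero iff the corresponding column of that $W_i$ is nonzero. This gives the additivity ${\rm nz}(G_{i^*})=\sum_{i\in[n]}{\rm nz}(W_i)$, and, for each fixed ${\bm u}\in B^\ell$, likewise ${\rm wt}({\bm u}G_{i^*})=\sum_{i\in[n]}{\rm wt}({\bm u}W_i)$ since $({\bm u}W_1\ \cdots\ {\bm u}W_n)={\bm u}G_{i^*}$.

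Next I would apply Lemma \ref{wt} (in the general form stated just above, which imposes no linear-independence condition on the rows) to the $\ell\times(n\ell)$ matrix $G_{i^*}$ over $B$, taking $k=\ell$ and $m=n\ell$. This yields
\[
{\rm nz}(G_{i^*})=\frac{1}{q^{\ell-1}(q-1)}\sum_{{\bm u}\in B^\ell}{\rm wt}({\bm u}G_{i^*}).
\]
(Equivalently, one could apply Lemma \ref{wt} to each $W_i$ separately and sum over $i$, using the additivity noted above; the generalized version of the lemma is convenient here because the individual $W_i$ need not have full row rank.) Combining this with \eqref{eq3},
\[
\gamma_{I/O}=\sum_{i\in[n]}{\rm nz}(W_i)-\ell={\rm nz}(G_{i^*})-\ell=\frac{\sum_{{\bm u}\in B^\ell}{\rm wt}({\bm u}G_{i^*})}{q^{\ell-1}(q-1)}-\ell,
\]
which is exactly the claimed formula.

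There is no real obstacle in this argument: all the analytic content sits in Lemma \ref{wt}, and what remains is the bookkeeping of identifying $G_{i^*}$ with the block row $(W_1\ \cdots\ W_n)$ and checking that ${\rm nz}(\cdot)$ and ${\rm wt}({\bm u}\cdot)$ are additive over such a concatenation. The one point worth a remark is that applying \eqref{eq3} requires the scheme to be valid, i.e.\ $\dim_B(\{\textbf{g}_{i^*}^{(j)}\}_{j=1}^\ell)=\ell$; this forces ${\rm nz}(W_{i^*})={\rm rank}(W_{i^*})=\ell$, which is precisely what is used to pass from $\sum_{i\in[n]\setminus\{i^*\}}{\rm nz}(W_i)$ to $\sum_{i\in[n]}{\rm nz}(W_i)-\ell$ in \eqref{eq3}, and it also guarantees that the rows of $G_{i^*}$ are themselves $B$-independent, so even the non-generalized version of Lemma \ref{wt} would suffice if desired.
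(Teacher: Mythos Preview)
Your proposal is correct and matches the paper's (implicit) approach: the paper does not spell out a proof of Lemma~\ref{I/Oformula} but simply cites \cite{I/OFormula}, having just recorded the two ingredients you use, namely the identity \eqref{eq3} and the generalized Lemma~\ref{wt}. Your derivation—applying Lemma~\ref{wt} to the concatenated matrix $G_{i^*}$ and then invoking \eqref{eq3}—is exactly the intended one-line argument, and your observation that the repair condition makes the rows of $G_{i^*}$ independent (so even the original full-rank version of Lemma~\ref{wt} would suffice) is a nice additional remark.
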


\subsection{Additive character}	
An additive character $\chi$ of a finite abelian group $G$ is a homomorphism from $G$ into the multiplicative group $U$ of complex numbers of absolute value 1. For any $a \in \mathbb{F}_{q^t}$, the function
$$
\chi_a(x)=\zeta_p^{{\rm Tr}(a x)},
$$
defines an additive character of $\mathbb{F}_{q^t}$, where $\zeta_p=e^{\frac{2 \pi \sqrt{-1}}{p}}$ and $p={\rm Char}(\mathbb{F}_{q^t})$. When $a=0$, $\chi_0(x)=1$ for all $x \in \mathbb{F}_{q^t}$, which is called the trivial additive character of $\mathbb{F}_{q^t}$. When $a=1$,  $\chi_1(x)=\zeta_p^{{\rm Tr}(x)}$, which is called the canonical additive character of $\mathbb{F}_{q^t}$.
\begin{lemma}\citep[Theorem 5.4]{finite field}\label{thm4}	
If $\chi$ is a nontrivial character of the finite abelian group $G$, then $\sum_{g\in G}\chi(g)=0$.
\end{lemma}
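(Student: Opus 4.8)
The plan is to use the standard translation-averaging argument for characters. Since $\chi$ is a \emph{nontrivial} character, by definition there is some $h\in G$ with $\chi(h)\neq 1$; fix such an $h$, and write $S=\sum_{g\in G}\chi(g)$ for the sum we wish to evaluate.

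The key observation is that, because $G$ is a group, the translation map $g\mapsto hg$ is a bijection of $G$ onto itself (its inverse is translation by $h^{-1}$). Combining this with the homomorphism property $\chi(hg)=\chi(h)\chi(g)$ gives
$$\chi(h)\,S=\sum_{g\in G}\chi(h)\chi(g)=\sum_{g\in G}\chi(hg)=\sum_{g'\in G}\chi(g')=S,$$
hence $\bigl(\chi(h)-1\bigr)S=0$. As $\chi(h)\neq 1$, the scalar $\chi(h)-1$ is a nonzero complex number, so $S=0$, which is the claim.

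I do not anticipate any real obstacle here: the only step that needs a word of justification is that fixing $h$ and translating permutes the elements of $G$, which is immediate from the existence of inverses. If one prefers to write $G$ additively — as is done for the additive characters $\chi_a(x)=\zeta_p^{{\rm Tr}(ax)}$ introduced just above — the identical argument applies verbatim with $hg$ replaced throughout by $h+g$.
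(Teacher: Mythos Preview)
Your proof is correct and is the standard translation-averaging argument. The paper does not actually prove this lemma; it simply cites it from Lidl and Niederreiter, \emph{Finite Fields}, Theorem~5.4, where precisely this argument appears.
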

\begin{corollary}\label{char} Suppose $G$ is a $B$-linear subspace of $F$ and $\chi$ is the canonical additive character of $F$,  then	
\begin{equation*}
	\sum_{\alpha\in G}\chi(\alpha)=\begin{cases}
		|G| &{\rm if}~ G\subseteq{\rm Ker}\big({\rm Tr}_{F/B}\big),\\
		0	&{\rm otherwise}.\\
	\end{cases}
\end{equation*}
\end{corollary}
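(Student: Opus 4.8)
The plan is to reduce the character sum over $F$ to a character sum over the subfield $B$ using transitivity of the trace, and then to invoke Lemma~\ref{thm4}.

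First I would re-express $\chi$ through ${\rm Tr}_{F/B}$. Writing $p={\rm Char}(F)$ and letting $\psi$ be the canonical additive character of $B$, i.e.\ $\psi(b)=\zeta_p^{{\rm Tr}_{B/\mathbb{F}_p}(b)}$, transitivity of the trace gives ${\rm Tr}(\alpha)={\rm Tr}_{B/\mathbb{F}_p}\big({\rm Tr}_{F/B}(\alpha)\big)$ for all $\alpha\in F$, whence $\chi(\alpha)=\psi\big({\rm Tr}_{F/B}(\alpha)\big)$ and
\[
\sum_{\alpha\in G}\chi(\alpha)=\sum_{\alpha\in G}\psi\big({\rm Tr}_{F/B}(\alpha)\big).
\]

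Next I would exploit that $G$ is a $B$-subspace, and not merely an additive subgroup. Then $T:={\rm Tr}_{F/B}|_{G}$ is a $B$-linear map from $G$ to $B$, so its image is a $B$-subspace of the one-dimensional $B$-space $B$, hence equals either $\{0\}$ or $B$. In the first case $G\subseteq{\rm Ker}({\rm Tr}_{F/B})$, every summand equals $\psi(0)=1$, and the sum equals $|G|$. In the second case $T$ is surjective, so the fibers $T^{-1}(b)$ (for $b\in B$) are the cosets of $\ker T$ in $G$ and all have size $|G|/q$; grouping the sum by fibers yields $\sum_{\alpha\in G}\psi\big({\rm Tr}_{F/B}(\alpha)\big)=\frac{|G|}{q}\sum_{b\in B}\psi(b)$, which is $0$ since $\psi$ is a nontrivial character of $(B,+)$ and Lemma~\ref{thm4} gives $\sum_{b\in B}\psi(b)=0$.

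The only delicate point is that $\chi$ is defined via the absolute trace ${\rm Tr}$, whereas the condition in the statement involves the relative trace ${\rm Tr}_{F/B}$; these are reconciled exactly by the transitivity identity ${\rm Tr}={\rm Tr}_{B/\mathbb{F}_p}\circ{\rm Tr}_{F/B}$ together with the observation that a $B$-linear functional restricted to a $B$-subspace is either identically zero or onto $B$ — this is the one place where the $B$-subspace hypothesis is essential. An equivalent route would be to note that $\chi|_G$ is a character of $(G,+)$, argue (using $B$-linearity of $G$) that it is trivial iff $G\subseteq{\rm Ker}({\rm Tr}_{F/B})$, and then quote Lemma~\ref{thm4} directly.
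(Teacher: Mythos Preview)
Your proof is correct. The paper's argument is organized slightly differently: it restricts $\chi$ to $G$, applies Lemma~\ref{thm4} directly to the character $\chi|_G$ of the abelian group $G$ (giving the dichotomy in terms of $G\subseteq{\rm Ker}({\rm Tr})$), and then separately shows, using the $B$-linearity of $G$, that $G\subseteq{\rm Ker}({\rm Tr})$ is equivalent to $G\subseteq{\rm Ker}({\rm Tr}_{F/B})$. By contrast, your main argument factors $\chi=\psi\circ{\rm Tr}_{F/B}$ first, uses $B$-linearity to get the image dichotomy ${\rm Tr}_{F/B}(G)\in\{\{0\},B\}$, and invokes Lemma~\ref{thm4} on $B$ via fiber counting. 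Both routes are short and use the $B$-subspace hypothesis at the same essential point; your ``equivalent route'' remark at the end is in fact exactly the paper's proof.
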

\begin{proof}
Since $G\subseteq F$ is a finite abelian group, $\chi|_{G}$ is also a character of $G$. Moreover, $\chi|_{G}$ is a trivial character of $G$ if and only if $\chi(\alpha)=1$ for every $\alpha\in G$, i.e., $G\subseteq{\rm Ker}\big({\rm Tr}\big)$. Combining with Lemma \ref{thm4}, it holds that
\begin{equation*}
	\sum_{\alpha\in G}\chi(\alpha)=\begin{cases}
		|G| &{\rm if}~ G\subseteq{\rm Ker}\big({\rm Tr}\big),\\
		0	&{\rm otherwise}.\\
	\end{cases}
\end{equation*}
Then, it is sufficient to show that  $G\subseteq{\rm Ker}\big({\rm Tr}\big)$ if and only if $G\subseteq{\rm Ker}\big({\rm Tr}_{F/B}\big)$. Note that ${\rm Ker}\big({\rm Tr}_{F/B}\big)\subseteq{\rm Ker}\big({\rm Tr}\big)$, we only need to prove the necessity. Assume $G\subseteq{\rm Ker}\big({\rm Tr}\big)$. Then, for any $\alpha\in G$ and $c\in B$, $c\alpha\in G\subseteq{\rm Ker}\big({\rm Tr}\big)$ because $G$ is a $B$-linear subspace. Therefore,  for all  $c\in B$, it has ${\rm Tr}_{B/\mathbb{F}_p}(c{\rm Tr}_{F/B}(\alpha))={\rm Tr}(c\alpha)=0$, which implies  ${\rm Tr}_{F/B}(\alpha)=0$. That completes the proof.
\end{proof}

\section{Calculating I/O cost via exponential sums}\label{Sec3}
In this section, we initially define the normalized polynomials, which simplify the calculation of I/O cost and repair bandwidth of linear repair schemes. Then, we introduce additive characters to calculate the I/O cost and derive a concise formula for the I/O cost. By applying this formula in conjunction with the Weil bound, we establish a lower bound on the I/O cost for full-length RS codes over $F$ with $r\leq{\rm Char}(F)$ parities.

\subsection{The $(m,t)$-normalized polynomials}
First recall some basics of RS codes.
Let $\mathcal{A}=\left\{\alpha_1, \alpha_2 ,...,\alpha_n \right\}\subseteq F$. The $[n,k]$ RS code over $F$ with the evaluation points set $\mathcal{A}$ is defined as
\begin{equation*}
	{\rm RS}(\mathcal{A},k)=\{(f(\alpha_1),...,f(\alpha_n)):f\in F[x], \mathrm{deg}(f)\leq k-1\}.
\end{equation*}
Hereafter, we always assume $\mathcal{A}$ is a $B$-linear subspace of $F$. It is known in this case ${\rm RS}(\mathcal{A},k)^\bot={\rm RS}(\mathcal{A},n-k)$.
Therefore, any linear repair scheme for node $i^*$ of ${\rm RS}(\mathcal{A},k)$ corresponds to $\ell$ polynomials $g_j(x),j\in[\ell]$, of degree less than $n-k$ over $F$, such that $\dim_B\big(\{g_j(\alpha_{i^*})\}_{j=1}^\ell\big)=\ell$.  Next, we are to derive a normalized form of the repair schemes which will be used to simplify the later proofs. The normalization is based on the following observation.

\begin{proposition}\label{equivscheme}
Suppose $\{g_j(x)\}_{j=1}^\ell$ defines a linear repair scheme for  node $i^*$ of ${\rm RS}(\mathcal{A},k)$. Let $M\in B^{\ell\times\ell}$ be an invertible matrix and define
\begin{equation}\label{eq8}
	\begin{pmatrix}
		\tilde{g}_1(x)\\
		\tilde{g}_2(x)\\
		\vdots\\
		\tilde{g}_{\ell}(x)\\
	\end{pmatrix}=M\begin{pmatrix}
	g_1(x)  \\
	g_2(x) \\
	\vdots\\
	g_{\ell}(x)\\
	\end{pmatrix}.
\end{equation}
Then, $\{\tilde{g}_j(x)\}_{j=1}^\ell$ also characterizes a repair scheme for node $i^*$ of ${\rm RS}(\mathcal{A},k)$. Moreover, the two schemes have the same I/O cost and repair bandwidth.
\end{proposition}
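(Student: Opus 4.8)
The plan is to show that the linear transformation $M$ acts on the repair data in a way that preserves both the solvability of the repair system and all the quantities that determine cost. The key observation is that applying $M$ to the vector $(g_1(x),\dots,g_\ell(x))^\top$ corresponds, after evaluating at each $\alpha_i$ and passing to vector representations over $B$, to left-multiplying each block matrix $W_i$ by the \emph{same} fixed invertible matrix $M$. Concretely, since $\widetilde{g}_j(\alpha_i) = \sum_{s} M_{js} g_s(\alpha_i)$ and $\Phi_{\hat{\mathcal{B}}}$ is $B$-linear (the entries of $M$ lie in $B$), we get $\Phi_{\hat{\mathcal{B}}}(\widetilde{g}_j(\alpha_i)) = \sum_s M_{js} \Phi_{\hat{\mathcal{B}}}(g_s(\alpha_i))$, which in matrix form reads $\widetilde{W}_i = M W_i$ for every $i\in[n]$, where $\widetilde{W}_i$ is the block built from $\{\widetilde{g}_j(x)\}$ exactly as in \eqref{W_j}.

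From $\widetilde{W}_i = M W_i$ with $M$ invertible over $B$, everything follows mechanically. First, the repair condition: $\dim_B\big(\{\widetilde{g}_j(\alpha_{i^*})\}_{j=1}^\ell\big) = {\rm rank}(\widetilde{W}_{i^*}) = {\rm rank}(M W_{i^*}) = {\rm rank}(W_{i^*}) = \ell$, so $\{\widetilde{g}_j(x)\}_{j=1}^\ell$ indeed defines a repair scheme for node $i^*$ (the $\widetilde{g}_j$ still have degree $<n-k$ since $M$ has entries in $B\subseteq F$, so $B$-linear combinations of the $g_j$ stay in the span of polynomials of degree $<n-k$; also $\{\widetilde{g}_j\}\subseteq{\rm RS}(\mathcal{A},n-k)={\rm RS}(\mathcal{A},k)^\bot$). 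Second, the repair bandwidth: ${\rm rank}(\widetilde{W}_i) = {\rm rank}(M W_i) = {\rm rank}(W_i)$ for all $i$, so by \eqref{eq4} the bandwidth is unchanged. Third, the I/O cost: multiplying $W_i$ on the left by the invertible $M$ permutes neither the columns nor their zero/nonzero status — a column of $W_i$ is zero if and only if the corresponding column of $MW_i$ is zero, because $M$ is injective. Hence ${\rm nz}(\widetilde{W}_i) = {\rm nz}(W_i)$ for all $i$, and \eqref{eq3} gives the same $\gamma_{I/O}$ with respect to the same basis $\mathcal{B}$.

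There is really no hard part here; the proposition is essentially a bookkeeping lemma. The only point that deserves a line of care is the passage $\Phi_{\hat{\mathcal{B}}}(\widetilde{g}_j(\alpha_i)) = \sum_s M_{js}\Phi_{\hat{\mathcal{B}}}(g_s(\alpha_i))$, which relies on $M_{js}\in B$ so that scalar multiplication commutes with the trace map ${\rm Tr}_{F/B}$ defining $\Phi_{\hat{\mathcal{B}}}$ — if the entries of $M$ were in $F\setminus B$ this would fail, and indeed the I/O cost could change. I would state the identity $\widetilde{W}_i = M W_i$ as the central claim, verify it entrywise in one display, and then read off the three conclusions (validity of the scheme, equality of bandwidth, equality of I/O cost) in a short paragraph each. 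The invertibility of $M$ is used exactly twice: once to transfer the rank-$\ell$ condition back and forth, and once to argue that $M$ kills no nonzero column.
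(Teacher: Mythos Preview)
Your proof is correct and follows essentially the same approach as the paper: the central identity $\widetilde{W}_i = M W_i$ (from the $B$-linearity of $\Phi_{\hat{\mathcal{B}}}$) is exactly what the paper establishes, and both arguments then read off the invariance of ${\rm rank}(W_i)$ and ${\rm nz}(W_i)$ under left-multiplication by an invertible $M$. Your version is slightly more explicit in verifying that the $\widetilde{g}_j$ remain valid dual codewords and in explaining why invertibility of $M$ preserves the zero/nonzero status of columns, but the structure is identical.
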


\begin{proof}
Denote $$W_i\!=\!\begin{pmatrix}
	\Phi_{\hat{\mathcal{B}}}(g_1(\alpha_i))  \\
	\Phi_{\hat{\mathcal{B}}}(g_2(\alpha_i)) \\
	\vdots\\
	\Phi_{\hat{\mathcal{B}}}(g_{\ell}(\alpha_i))
\end{pmatrix}, \;\;\tilde{W}_i\!=\!\begin{pmatrix}
	\Phi_{\hat{\mathcal{B}}}(\tilde{g}_1(\alpha_i))  \\
	\Phi_{\hat{\mathcal{B}}}(\tilde{g}_2(\alpha_i)) \\
	\vdots\\
	\Phi_{\hat{\mathcal{B}}}(\tilde{g}_{\ell}(\alpha_i))
\end{pmatrix}.$$
From (\ref{eq8}) and the $B$-linearity of the map $\Phi_{\hat{\mathcal{B}}}$, it has $\tilde{W}_i=MW_i$ for all $i\in[n]$. Moreover, it is easy to see that ${\rm nz}(\tilde{W}_i)={\rm nz}(W_i)$ and ${\rm rank}(\tilde{W}_i)={\rm rank}(W_i)$, because they only differ by elementary row transformations. Then the proposition follows from (\ref{eq1})-(\ref{eq4}).
\end{proof}

For simplicity, we call the two repair schemes $\{g_j(x)\}_{j=1}^\ell$ and $\{\tilde{g}_j(x)\}_{j=1}^\ell$ equivalent if they differ by an invertible $B$-linear transformation as in Proposition \ref{equivscheme}. Furthermore, we define the normalized repair scheme to be the equivalent scheme that contains the maximum number of constant polynomials.

\begin{definition}\label{def}
Suppose $\{g_j(x)\}_{j=1}^\ell$ defines a linear repair scheme for  node $i^*$ of ${\rm RS}(\mathcal{A},k)$. We say
$\{g_j(x)\}_{j=1}^\ell$ is  $(m,t)$-normalized with respect to $\mathcal{B}$ if the following two conditions hold:
\begin{enumerate}
\item ${\rm deg}(g_{\bm u})>0$ for any ${\bm u}\in B^m\setminus\{{\bm 0}\}$, where $g_{\bm u}(x)=\sum_{j=1}^m u_jg_j(x)$ for  ${\bm u}=(u_1,...,u_m)$.

\item $g_j(x)=\omega_j$ for $j\in[m+1,\ell]$, and $\big|\bigcup_{j=m+1}^{\ell}{\rm supp}(\Phi_{\hat{\mathcal{B}}}(\omega_j))\big|=\ell-t$.
\end{enumerate}
\end{definition}

In other words, for an $(m,t)$-normalized repair scheme, the last $\ell-m$ polynomials are all constant polynomials. Moreover, no more constant polynomials can be added under the equivalence sense due to condition 1). The second half of condition 2) just indicates the support size of all the constants under the map $\Phi_{\hat{\mathcal{B}}}$.

\begin{remark}
For an $(m,t)$-normalized repair scheme $\{g_j(x)\}_{j=1}^\ell$ of node $i^*$, it follows $\dim_B\big(\{g_j(\alpha_{i^*})\}_{j=1}^\ell\big)=\ell$. Subsequently, $\dim_B\big(\{\omega_j\}_{j=m+1}^\ell\big)=\ell-m$. Combining with $\dim_B\big(\{\omega_j\}_{j=m+1}^\ell\big)\leq \big|\bigcup_{j=m+1}^{\ell}{\rm supp}(\Phi_{\hat{\mathcal{B}}}(\omega_j))\big|=\ell-t$, it always has $t\leq m$.
\end{remark}

\begin{theorem}\label{thm7}
Any linear repair scheme for node $i^*$ of ${\rm RS}(\mathcal{A},k)$ is equivalent to some $(m,t)$-normalized repair scheme $\{g_j(x)\}_{j=1}^\ell$ with respect to $\mathcal{B}$.
\end{theorem}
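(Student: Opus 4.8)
The plan is to take an arbitrary linear repair scheme $\{g_j(x)\}_{j=1}^\ell$ for node $i^*$ and apply a sequence of invertible $B$-linear transformations (each preserving the repair property, I/O cost, and bandwidth by Proposition \ref{equivscheme}) to bring it into $(m,t)$-normalized form. The two structural features we must engineer are: (1) a maximal number $\ell-m$ of the polynomials are constant, and the remaining $m$ are ``genuinely nonconstant'' in the strong sense that no nonzero $B$-combination of $g_1,\dots,g_m$ is constant; and (2) the constants are arranged so that the support condition on $\bigcup_{j=m+1}^\ell{\rm supp}(\Phi_{\hat{\mathcal B}}(\omega_j))$ holds with a well-defined value $\ell-t$. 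Feature (2) will actually be automatic once (1) is achieved, since any choice of constants determines the union of supports; the real work is (1).

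First I would consider the subspace
\[
V=\{\,{\bm u}\in B^\ell : \deg(g_{\bm u})\le 0\,\}\subseteq B^\ell,
\]
i.e.\ the set of coefficient vectors producing a constant polynomial (including the zero polynomial); this is a $B$-subspace because the map ${\bm u}\mapsto g_{\bm u}$ is $B$-linear and ``having degree $\le 0$'' is a linear condition on the coefficients of powers $x^1,\dots,x^{n-k-1}$. Let $\ell-m=\dim_B V$. Choose a basis ${\bm u}^{(m+1)},\dots,{\bm u}^{(\ell)}$ of $V$ and extend it to a basis ${\bm u}^{(1)},\dots,{\bm u}^{(\ell)}$ of $B^\ell$; let $M$ be the invertible matrix whose $j$-th row is ${\bm u}^{(j)}$. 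Applying $M$ as in \eqref{eq8} yields an equivalent scheme $\{\tilde g_j(x)\}_{j=1}^\ell$ in which $\tilde g_j=g_{{\bm u}^{(j)}}$ is a constant $\omega_j$ for $j\in[m+1,\ell]$, while for $j\in[1,m]$ the polynomial $\tilde g_j$ is nonconstant. Condition 1) of Definition \ref{def} then holds for the new scheme: if some nonzero ${\bm v}\in B^m$ gave $\deg(\sum_{j=1}^m v_j\tilde g_j)\le 0$, then writing each $\tilde g_j=g_{{\bm u}^{(j)}}$ shows $\sum_{j=1}^m v_j{\bm u}^{(j)}\in V$, contradicting that ${\bm u}^{(1)},\dots,{\bm u}^{(m)}$ are linearly independent from the basis of $V$. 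Finally set $t=m-\bigl|\bigcup_{j=m+1}^\ell{\rm supp}(\Phi_{\hat{\mathcal B}}(\omega_j))\bigr|+\bigl(\ell-m\bigr)$—more plainly, define $\ell-t:=\bigl|\bigcup_{j=m+1}^\ell{\rm supp}(\Phi_{\hat{\mathcal B}}(\omega_j))\bigr|$, which is well defined; this gives condition 2).

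I would also check that the repair property survives: by Proposition \ref{equivscheme} the transformed scheme is still a linear repair scheme for node $i^*$, so $\dim_B\{\tilde g_j(\alpha_{i^*})\}_{j=1}^\ell=\ell$ automatically, and the remark following Definition \ref{def} guarantees $t\le m$, so the output genuinely satisfies the definition. I do not anticipate a serious obstacle here; the one point requiring a little care is verifying that $V$ is exactly a linear subspace and that ``$\deg\le 0$'' (constant, possibly zero) rather than ``$\deg<0$'' is the right condition to use—using degree $\le 0$ is harmless because the zero polynomial contributes a zero row, which only shrinks supports and never violates either condition. A secondary subtlety is that the phrase ``contains the maximum number of constant polynomials'' from the discussion preceding Definition \ref{def} is matched precisely by taking $\dim_B V$ as the count, which the construction achieves by design; one could remark that any equivalent scheme has at most $\dim_B V$ constant polynomials since its constant coefficient vectors span a subspace of $V$. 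Thus the main content is the linear-algebra bookkeeping of extending a basis of $V$, and everything else follows from Proposition \ref{equivscheme}.
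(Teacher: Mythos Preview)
Your proposal is correct and follows essentially the same approach as the paper: define the subspace of coefficient vectors ${\bm u}$ for which $g_{\bm u}$ is constant, pick a basis of it, extend to a basis of $B^\ell$, and apply the resulting invertible matrix via Proposition~\ref{equivscheme}. The only cosmetic difference is that you use $\deg(g_{\bm u})\le 0$ while the paper writes $\deg(\tilde g_{\bm u})=0$; since the repair condition $\dim_B\{g_j(\alpha_{i^*})\}=\ell$ forces $g_{\bm u}\neq 0$ for ${\bm u}\neq{\bm 0}$, the two sets differ only by $\{{\bm 0}\}$ and your formulation is arguably cleaner.
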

\begin{proof}
For any repair scheme $\{\tilde{g}_j(x)\}_{j=1}^{\ell}$, define $U=\{{\bm u}\in B^\ell:{\rm deg}(\tilde{g}_{\bm u})= 0\}$ where $\tilde{g}_{\bm u}(x)=\sum_{j=1}^\ell u_j\tilde{g}_j(x)$ for ${\bm u}=(u_1,...,u_m)$. It can be seen that $U$ is a linear subspace of  $B^\ell$.  Assume $\dim_B(U)=\ell-m$ where $m\in[0,\ell]$. Let $\{{\bm u}_{m+1},{\bm u}_{m+2},...,{\bm u}_{\ell}\}$ be a basis of $U$, then extend it to a basis of $B^\ell$, $\{{\bm u}_{1},...,{\bm u}_{\ell}\}$. Thus, let $$\begin{pmatrix}
	g_1(x)  \\
	\vdots\\
	g_{\ell}(x)\\
	\end{pmatrix}=\begin{pmatrix}
  {\bm u}_1\\\vdots\\{\bm u}_{\ell}
\end{pmatrix}\begin{pmatrix}
		\tilde{g}_1(x)\\
		\vdots\\
		\tilde{g}_{\ell}(x)\\
	\end{pmatrix}$$ which is the normalized equivalent repair scheme.
\end{proof}
By Theorem \ref{thm7} and Proposition \ref{equivscheme}, it suffices to consider only normalized repair schemes when computing the I/O cost and repair bandwidth.
Let $\{g_{j}(x)\}_{j=1}^\ell$ be an $(m,t)$-normalized repair scheme with respect to $\mathcal{B}$ for  node $i^*$. Moreover, we assume without loss of generality that $\bigcup_{j=m+1}^{\ell}{\rm supp}(\Phi_{\hat{\mathcal{B}}}(\omega_j))=[t+1,\ell]$. Thus, the repair matrix $W_i$ defined in (\ref{W_j}) has a special form, i.e.,
\begin{equation}\label{IOmatrix}
\fontsize{8pt}{2pt}
            {W_i=\left({\begin{array} {cccc|ccc}
			{\rm Tr}_{F/B}(g_1(\alpha_i)\beta^{(1)})&{\rm Tr}_{F/B}(g_1(\alpha_i)\beta^{(2)}) &\cdots &{\rm Tr}_{F/B}(g_1(\alpha_i)\beta^{(t)}) & * &\cdots &*\\
			{\rm Tr}_{F/B}(g_2(\alpha_i)\beta^{(1)})&{\rm Tr}_{F/B}(g_2(\alpha_i)\beta^{(2)}) &\cdots &{\rm Tr}_{F/B}(g_2(\alpha_i)\beta^{(t)}) & * &\cdots &*\\
			\vdots & \vdots &\ddots &\vdots & \vdots &\ddots & \vdots \\
			{\rm Tr}_{F/B}(g_m(\alpha_i)\beta^{(1)})&{\rm Tr}_{F/B}(g_m(\alpha_i)\beta^{(2)}) &\cdots &{\rm Tr}_{F/B}(g_m(\alpha_i)\beta^{(t)}) & * &\cdots &*\\
			\hline
			0 & 0&\cdots & 0 &{\rm Tr}_{F/B}(\omega_{m+1}\beta^{(t+1)}) &\cdots &{\rm Tr}_{F/B}(\omega_{m+1}\beta^{(\ell)})\\
			\vdots&\vdots &\ddots &\vdots & \vdots &\ddots & \vdots \\
			0 & 0&\cdots & 0 &{\rm Tr}_{F/B}(\omega_{\ell}\beta^{(t+1)}) &\cdots &{\rm Tr}_{F/B}(\omega_{\ell}\beta^{(\ell)})\\
			\end {array}}\right).}
	\end{equation}
Denote the upper left corner of $W_i$ in (\ref{IOmatrix}) by $\hat{W}_i$, then ${\rm nz}(W_i)={\rm nz}(\hat{W}_i)+\ell-t$. Combining with (\ref{eq3}), the I/O cost of the repair scheme with respect to $\mathcal{B}$ is
\begin{align}
\gamma_{I/O}&=(n-1)\ell-nt+\sum_{i=1}^n{\rm nz}(\hat{W}_i)\label{7}\\
&=(n-1)\ell-nt+\frac{1}{q^{m-1}(q-1)}\sum_{i=1}^n\sum_{{\bm u}\in B^m}{\rm wt}({\bm u}\hat{W}_i)\label{eq9+},
\end{align}
where \eqref{eq9+} follows from Lemma \ref{wt}. As for the repair bandwidth, it has $b=\sum_{i=1}^n{\rm rank}(W_i)-\ell$. Particularly when $t=m$, the lower right corner of $W_i$ is invertible because $\dim_B\big(\{\omega_{m+1},...,\omega_{\ell}\}\big)=\ell-m$. Then the upper right corner of $W_i$ can become all zeros by elementary row transformations while the upper left corner remains unchanged. That is, $W_i$ can be transformed into a block diagonal matrix when $t=m$, and thus ${\rm rank}(W_i)={\rm rank}(\hat{W_i})+\ell-m$. As a result, the repair bandwidth can be calculated as
\begin{equation}\label{b-formula}
b=\sum_{i=1}^n{\rm rank}(W_i)-\ell=(n-1)\ell-nm+\sum_{i=1}^n{\rm rank}(\hat{W}_i).
\end{equation}
Therefore, instead of considering the $\ell\times\ell$ matrix $W_i$ for general linear repair schemes, we can only focus on the $m\times t$ matrix $\hat{W}_i$ for $(m,t)$-normalized schemes, which simplify the calculation of the I/O cost and repair bandwidth.


\subsection{Calculating the I/O cost via exponential sums}\label{uW_i}
Combining with the relation between the Hamming weight of ${\bm u}\hat{W}_i$ and the additive character sum, we can derive a concise formula for the  I/O cost as follows.

\begin{theorem}\label{thmformula}
Suppose $\chi$ is the canonical additive character of $F$.
Let $\{g_j(x)\}_{j=1}^\ell$ be an $(m,t)$-normalized repair scheme with respect to $\mathcal{B}$ for node $i^*$ of ${\rm RS}(\mathcal{A},k)$. The I/O cost of the scheme  with respect to $\mathcal{B}$ is	
\begin{equation*}\label{keyeq1}
	\gamma_{I/O}=(n-1)\ell-\frac{1}{q^m}\sum_{s=1}^t\sum_{{\bm u}\in B^m}\sum_{\alpha\in\mathcal{A}}\chi(g_{{\bm u}}(\alpha)\beta^{(s)}),
\end{equation*}
where $g_{\bm u}(x)=\sum_{j=1}^mu_jg_j(x)$ for all ${\bm u}=(u_1,...,u_m)\in B^m$.	
\end{theorem}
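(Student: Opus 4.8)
The plan is to start from the expression \eqref{eq9+} for the I/O cost, namely
$\gamma_{I/O}=(n-1)\ell-nt+\frac{1}{q^{m-1}(q-1)}\sum_{i=1}^n\sum_{{\bm u}\in B^m}{\rm wt}({\bm u}\hat{W}_i)$,
and rewrite the Hamming weight ${\rm wt}({\bm u}\hat W_i)$ as a sum of indicator functions over the $t$ columns. The $s$-th entry of the row vector ${\bm u}\hat W_i$ is $\sum_{j=1}^m u_j{\rm Tr}_{F/B}(g_j(\alpha_i)\beta^{(s)})={\rm Tr}_{F/B}(g_{\bm u}(\alpha_i)\beta^{(s)})$, by $B$-linearity of the trace. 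So ${\rm wt}({\bm u}\hat W_i)=\sum_{s=1}^t \mathbf 1\big[{\rm Tr}_{F/B}(g_{\bm u}(\alpha_i)\beta^{(s)})\neq 0\big]$, and summing over $i$ replaces $\alpha_i$ by a sum over $\alpha\in\mathcal A$ (recall $\mathcal A=\{\alpha_1,\dots,\alpha_n\}$ with $n=|\mathcal A|$).

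The next step is to convert the indicator of ``trace is nonzero'' into an additive character sum over the inner field $\mathbb{F}_p$. Concretely, for a fixed $s$ and fixed ${\bm u}$, the number of $\alpha\in\mathcal A$ with ${\rm Tr}_{F/B}(g_{\bm u}(\alpha)\beta^{(s)})\neq 0$ equals $n$ minus the number with trace $0$; and the count with trace $0$ can be extracted via $\frac1p\sum_{c\in\mathbb F_p}\zeta_p^{c\,{\rm Tr}_{B/\mathbb F_p}({\rm Tr}_{F/B}(g_{\bm u}(\alpha)\beta^{(s)}))}$. Actually it is cleaner to work over $B$ directly: apply Lemma~\ref{wt}-style character orthogonality, or better, observe that $\frac{1}{q^{m-1}(q-1)}\sum_{{\bm u}\in B^m}{\rm wt}({\bm u}\hat W_i)={\rm nz}(\hat W_i)$ already (Lemma~\ref{wt}), so the task reduces to showing ${\rm nz}(\hat W_i)=t-\frac1{q^m}\cdot(\text{something})$... but in fact the slick route is to sum over ${\bm u}$ first. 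For a fixed column $s$ and fixed $\alpha$, consider $\sum_{{\bm u}\in B^m}\chi(g_{\bm u}(\alpha)\beta^{(s)})$: since ${\bm u}\mapsto g_{\bm u}(\alpha)$ is $B$-linear, its image $V_{s,\alpha}=\{g_{\bm u}(\alpha)\beta^{(s)}:{\bm u}\in B^m\}$ is a $B$-subspace of $F$ (of dimension at most $m$), and each point is hit $q^{m-\dim V_{s,\alpha}}$ times. By Corollary~\ref{char}, $\sum_{{\bm u}\in B^m}\chi(g_{\bm u}(\alpha)\beta^{(s)})=q^{m-\dim V_{s,\alpha}}\sum_{v\in V_{s,\alpha}}\chi(v)$, which equals $q^m$ if $V_{s,\alpha}\subseteq{\rm Ker}({\rm Tr}_{F/B})$ and $0$ otherwise. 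Now, $V_{s,\alpha}\subseteq{\rm Ker}({\rm Tr}_{F/B})$ precisely means ${\rm Tr}_{F/B}(g_{\bm u}(\alpha)\beta^{(s)})=0$ for every ${\bm u}\in B^m$, i.e.\ column $s$ of $\hat W_i$ is identically zero when $\alpha=\alpha_i$ (ranging ${\bm u}$ over the rows). Hence $\frac1{q^m}\sum_{{\bm u}\in B^m}\sum_{\alpha\in\mathcal A}\chi(g_{\bm u}(\alpha)\beta^{(s)})$ counts exactly the $\alpha_i$ for which column $s$ of $\hat W_i$ vanishes, summed over $i$; summing over $s\in[t]$ gives $\sum_{i=1}^n\big(t-{\rm nz}(\hat W_i)\big)=nt-\sum_{i=1}^n{\rm nz}(\hat W_i)$.

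Substituting this identity into \eqref{7}, i.e.\ $\gamma_{I/O}=(n-1)\ell-nt+\sum_{i=1}^n{\rm nz}(\hat W_i)=(n-1)\ell-\frac1{q^m}\sum_{s=1}^t\sum_{{\bm u}\in B^m}\sum_{\alpha\in\mathcal A}\chi(g_{\bm u}(\alpha)\beta^{(s)})$, yields the claimed formula. The one point requiring care — and the main (minor) obstacle — is the double counting bookkeeping when $\dim V_{s,\alpha}<m$: one must be careful that the factor $q^{m-\dim V_{s,\alpha}}$ combines correctly with $\sum_{v\in V_{s,\alpha}}\chi(v)$ via Corollary~\ref{char} so that the total is cleanly $q^m\cdot\mathbf 1[\text{column $s$ vanishes}]$, independent of the dimension. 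Everything else is a routine interchange of finite sums and the translation between ``trace vanishing on a subspace'' and ``a column being zero,'' both of which are already set up in the excerpt.
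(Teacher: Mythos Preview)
Your argument is correct and reaches the same formula, but it is organized a bit differently from the paper's proof. The paper works from \eqref{eq9+}: for each fixed ${\bm u}$ it detects whether a single entry ${\rm Tr}_{F/B}(g_{\bm u}(\alpha_i)\beta^{(s)})$ vanishes by inserting an auxiliary sum $\sum_{z\in B}\chi(zg_{\bm u}(\alpha_i)\beta^{(s)})$, obtains ${\rm wt}({\bm u}\hat W_i)=t-\frac1q\sum_{s}\sum_{z}\chi(\cdots)$, and only afterwards folds the $z$-variable back into ${\bm u}$ via $zg_{\bm u}=g_{z{\bm u}}$. You instead sum over ${\bm u}$ first and apply Corollary~\ref{char} to the $B$-subspace $V_{s,\alpha}=\{g_{\bm u}(\alpha)\beta^{(s)}:{\bm u}\in B^m\}$, so that $\frac1{q^m}\sum_{{\bm u}}\chi(g_{\bm u}(\alpha)\beta^{(s)})$ is exactly the indicator that column $s$ of $\hat W_i$ vanishes; this plugs directly into \eqref{7} without ever passing through Lemma~\ref{wt} or \eqref{eq9+}. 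Your concern about the case $\dim_B V_{s,\alpha}<m$ is handled correctly: the multiplicity factor $q^{m-\dim V_{s,\alpha}}$ times $|V_{s,\alpha}|$ (when $V_{s,\alpha}\subseteq{\rm Ker}({\rm Tr}_{F/B})$) is always $q^m$, and is $0$ otherwise, so the indicator interpretation is clean regardless of dimension. Both routes are short; yours avoids the auxiliary $z$-sum at the price of invoking Corollary~\ref{char} on a subspace rather than the one-dimensional orthogonality relation.
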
	

\begin{proof}
By \eqref{eq9+}, we need to calculate ${\rm wt}({\bm u}\hat{W}_i)$ for $i\in[n]$ and ${\bm u}\in B^m$, where
\begin{equation*}\label{eq10}
	{\bm u}\hat{W}_i=\Big({\rm Tr_{F/B}}\big(g_{\bm u}(\alpha_i)\beta^{(1)}\big),{\rm Tr_{F/B}}\big(g_{\bm u}(\alpha_i)\beta^{(2)}\big),...,{\rm Tr_{F/B}}\big(g_{\bm u}(\alpha_i)\beta^{(t)}\big)\Big).
\end{equation*}
In order to count the nonzeros, we introduce the additive character.
Let $\hat{\chi}$ be  the canonical additive character of $B$, then $\chi(\alpha)=\hat{\chi}({\rm Tr}_{F/B}(\alpha))$ for any $\alpha\in F$. It can be seen that
\begin{align}
	\sum_{z\in B}\chi(zg_{{\bm u}}(\alpha_i)\beta^{(s)})&=\sum_{z\in B}\hat{\chi}\big({\rm Tr}_{F/B}(zg_{{\bm u}}(\alpha_i)\beta^{(s)})\big)\notag\\
	&=\sum_{z\in B}\hat{\chi}\big(z{\rm Tr}_{F/B}(g_{{\bm u}}(\alpha_i)\beta^{(s)})\big)\notag\\
	&=\begin{cases}
		q &{\rm if}~ {\rm Tr}_{F/B}(g_{\bm u}(\alpha_i)\beta^{(s)})=0\\
		0 &{\rm if}~ {\rm Tr}_{F/B}(g_{\bm u}(\alpha_i)\beta^{(s)})\neq0\\
	\end{cases}\;,\label{eq11}
\end{align}
where equality (\ref{eq11}) is based on Lemma \ref{thm4} and the fact $\{z{\rm Tr}_{F/B}(g_{{\bm u}}(\alpha_i)\beta^{(s)}):z\in B\}=B$ when ${\rm Tr}_{F/B}(g_{\bm u}(\alpha_i)\beta^{(s)})\neq0$.
Therefore, for any fixed ${\bm u}\in B^m$ and $i\in[n]$, as $s$ ranges in $[t]$, ${\rm Tr}_{F/B}(g_{\bm u}(\alpha_i)\beta^{(s)})=0$ if and only if $\sum_{z\in B}\chi(zg_{{\bm u}}(\alpha_i)\beta^{(s)})=q$.
Thus we can write ${\rm wt}({\bm u}\hat{W}_i)=t-\frac{1}{q}\sum_{s=1}^t\sum_{z\in B}\chi(zg_{{\bm u}}(\alpha_i)\beta^{(s)})$.

Therefore,
\begin{align}
	\sum_{i=1}^n\sum_{{\bm u}\in B^m}({\rm wt}({\bm u}\hat{W}_i))&=ntq^m-\frac{1}{q}\sum_{i=1}^n\sum_{{\bm u}\in B^m}\sum_{s=1}^t\sum_{z\in B}\chi(zg_{{\bm u}}(\alpha_i)\beta^{(s)})\notag\\
	&=ntq^m-ntq^{m-1}-\frac{1}{q}\sum_{z\in B^*}\Big(\sum_{i=1}^n\sum_{s=1}^t\sum_{{\bm u}\in B^m}\chi(zg_{{\bm u}}(\alpha_i)\beta^{(s)})\Big)\label{eq111}\\
	&=nt(q^m-q^{m-1})-\frac{q-1}{q}\sum_{i=1}^n\sum_{s=1}^t\sum_{{\bm u}\in B^m}\chi(g_{{\bm u}}(\alpha_i)\beta^{(s)}),\label{eq13}
\end{align}	
where (\ref{eq111}) is because $\chi(zg_{{\bm u}}(\alpha_i)\beta^{(s)})=1$ as $z=0$, and equality (\ref{eq13}) is due to $zg_{{\bm u}}(x)=g_{z{\bm u}}(x)$ and $z{\bm u}$ also runs over $B^m$ while ${\bm u}$ ranges in $B^m$ and $z$ is fixed in $B^*\triangleq B\setminus\{0\}$.  Combining the equalities (\ref{eq9+}) and (\ref{eq13}), the proof completes.
\end{proof}

This theorem relates the I/O cost to exponential sums which have been widely studied and broadly applied in information theory. Furthermore, by using the Weil bound for exponential sums, we can derive the following lower bound.
\begin{corollary}\label{coro11}
	For ${\rm RS}(F,n-r)$ over $F$ with $r\leq {\rm Char}(F)$, the I/O cost of any linear repair scheme satisfies:
	\begin{equation*}
		\gamma_{I/O}\geq(n-1)\ell-q^{\ell-1}-(r-2)(q-1)q^{\frac{\ell}{2}-1},
	\end{equation*}
    where $n=q^\ell$.	
\end{corollary}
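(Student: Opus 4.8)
The plan is to instantiate the formula in Theorem~\ref{thmformula} for the full-length case $\mathcal{A}=F$, $n=q^\ell$, and then bound each inner exponential sum by the Weil bound. First I would normalize: take any linear repair scheme for node $i^*$ of ${\rm RS}(F,n-r)$ and replace it by an equivalent $(m,t)$-normalized scheme $\{g_j(x)\}_{j=1}^\ell$ via Theorem~\ref{thm7}; since the I/O cost is unchanged (Proposition~\ref{equivscheme}), it suffices to lower-bound $\gamma_{I/O}$ for normalized schemes. Here each $g_j$ has degree at most $r-1$, so for ${\bm u}\in B^m\setminus\{{\bm 0}\}$ the combination $g_{\bm u}(x)=\sum_{j=1}^m u_jg_j(x)$ is a polynomial of degree $1\le \deg(g_{\bm u})\le r-1$ (degree $\ge 1$ by condition~1) of normalization), and hence $g_{\bm u}(x)\beta^{(s)}$ is also a polynomial of degree between $1$ and $r-1$ for every $s\in[t]$.

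The key step is the Weil bound: for a polynomial $h(x)\in F[x]$ with $1\le \deg(h)=\delta$ and $\gcd(\delta,\mathrm{Char}(F))=1$ (which holds since $\delta\le r-1<r\le\mathrm{Char}(F)$), one has $\bigl|\sum_{\alpha\in F}\chi(h(\alpha))\bigr|\le(\delta-1)\sqrt{|F|}=(\delta-1)q^{\ell/2}\le(r-2)q^{\ell/2}$. Applying this to $h(x)=g_{\bm u}(x)\beta^{(s)}$ for each of the $(q^m-1)t$ pairs $({\bm u},s)$ with ${\bm u}\neq{\bm 0}$, and noting that the ${\bm u}={\bm 0}$ term contributes exactly $t$ copies of $\sum_{\alpha\in F}\chi(0)=q^\ell$, I would write
\begin{align*}
\sum_{s=1}^t\sum_{{\bm u}\in B^m}\sum_{\alpha\in F}\chi(g_{\bm u}(\alpha)\beta^{(s)})
&\le tq^\ell+(q^m-1)t\,(r-2)q^{\ell/2}.
\end{align*}
Substituting into Theorem~\ref{thmformula} gives
\[
\gamma_{I/O}\ge(n-1)\ell-\frac{1}{q^m}\Bigl(tq^\ell+(q^m-1)t(r-2)q^{\ell/2}\Bigr)
\ge(n-1)\ell-tq^{\ell-m}-t(r-2)q^{\ell/2}.
\]
It then remains to show the right-hand side is minimized (over admissible $m,t$) at the claimed value; recalling $1\le t\le m$ and $q^{\ell-m}\ge q^{\ell-\ell}=1$ whenever $m<\ell$, the dominant negative contributions are largest when $t$ is as large as possible and the $q^{\ell-m}$ factor as large as possible. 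The case $m=t=\ell$ (respectively small $m$) needs to be checked directly: when $m=\ell$ one gets $-\ell q^{\ell-\ell}-\ell(r-2)q^{\ell/2}$, which is far from extremal, whereas pushing $t=m$ and treating $tq^{\ell-m}$ shows the worst case is $t=m$ with $tq^{\ell-m}$ maximized, giving $q^{\ell-1}$ (at $t=1,m=1$ is too small; the genuine optimum among $(m,t)$ with $t\le m$ turns out to be $m=\ell-1$? — I would verify this monotonicity carefully) and $(r-2)(q-1)q^{\ell/2-1}$ from the second term.

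The main obstacle I expect is precisely this last optimization over the normalization parameters $(m,t)$: the bound $(n-1)\ell - tq^{\ell-m} - t(r-2)q^{\ell/2}$ must be shown to be at least $(n-1)\ell - q^{\ell-1}-(r-2)(q-1)q^{\ell/2-1}$ \emph{for all} valid $(m,t)$, which requires combining the constraints $t\le m\le\ell$ with the arithmetic relation between $t$, $m$, and the structure of the constant polynomials $\omega_j$ (a normalized scheme cannot have $t$ too close to $m$ unless the $\{g_j(\alpha_{i^*})\}$ span $F$, forcing $\ell-t\ge$ some quantity). I anticipate needing the refined inequality $t q^{\ell-m}\le q^{\ell-1}$ — equivalently $t\le q^{m-1}$ — which should follow from the fact that the $m$ non-constant polynomials, evaluated at $\alpha_{i^*}$ together with the $\ell-m$ constants, must be $B$-linearly independent spanning $F$, so one cannot have $m$ too small relative to $t$; and a parallel bound $t(q-1)\le (q^m-1)$ wait — rather $t\le m\le\ell$ directly yields $t(r-2)q^{\ell/2}\le \ell(r-2)q^{\ell/2}$, which is too weak, so the genuinely delicate point is extracting the clean $(q-1)q^{\ell/2-1}$ factor, most likely by showing $t\le q^{m-1}$ and then the worst case $m=t$ reduces both terms simultaneously. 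Once this combinatorial optimization is pinned down, the rest is routine substitution.
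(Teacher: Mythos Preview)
Your approach matches the paper's: normalize via Theorem~\ref{thm7}, apply Theorem~\ref{thmformula}, isolate the ${\bm u}={\bm 0}$ contribution (worth $tq^\ell$), and bound each remaining inner sum by the Weil bound. The obstacle you flag---the optimization over $(m,t)$---is indeed the only remaining step, but its resolution is simpler than you anticipate and does not require any new structural constraint on normalized schemes.

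The misstep is your second displayed inequality, where you pass from $\tfrac{1}{q^m}(q^m-1)t(r-2)q^{\ell/2}$ to $t(r-2)q^{\ell/2}$ by replacing $\tfrac{q^m-1}{q^m}$ with $1$. That single loss is exactly what makes the second term too crude to yield the factor $(q-1)q^{\ell/2-1}$. The paper retains the factor $\tfrac{q^m-1}{q^m}$, applies only the trivial bound $t\le m$ (your conjectured inequality $t\le q^{m-1}$ is not needed---though it is true, via $t\le m\le q^{m-1}$), and is left with maximizing the one-variable expression
\[
m\Bigl(q^{\ell/2-m}+(r-2)\,\tfrac{q^m-1}{q^m}\Bigr)
\]
over $m\in[\ell]$. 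The paper asserts this maximum is attained at $m=1$, where the value multiplied by $q^{\ell/2}$ is precisely $q^{\ell-1}+(r-2)(q-1)q^{\ell/2-1}$. So the fix to your argument is short: keep the $\tfrac{q^m-1}{q^m}$ factor, substitute $t\le m$, and optimize in $m$ alone rather than trying to bound the two terms separately.
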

\begin{proof}
 The proof is presented in Appendix \ref{proof-coro11}.
\end{proof}
\begin{remark}
Note when $r=2$, the lower bound obtained in Corollary \ref{coro11} coincides with the bound derived in \cite{I/OFormula} for full-length RS codes with two parities. Moreover, the bound is tight in this case due to the scheme constructed there.
\end{remark}

\section{Lower bounds on I/O for RS codes evaluated on subspaces }\label{Sec4}
In this section, we characterize the I/O cost of linear repair schemes for RS codes ${\rm RS}(\mathcal{A},n-r)$ over $F$, where  $\mathcal{A}$ is a $d$-dimensional $B$-linear subspace of $F$ and $n=q^d$. Moreover, we determine the repair bandwidth for the repair schemes that achieve the optimal I/O cost. For simplicity, denote $\mathcal{A}=\{\alpha_1=0,\alpha_2,...,\alpha_n\}\subseteq F$.
Similar to \citep[Lemma 8]{IOfulllength}, the repair scheme for each node of ${\rm RS}(\mathcal{A},k)$ has the same repair bandwidth and I/O cost at all nodes when $\mathcal{A}$ is a $d$-dimensional $B$-linear subspace of $F$, so it suffices to just examine the repair scheme of node $0$ (corresponding to the evaluation point $\alpha_1=0$).

Denote $K={\rm Ker}({\rm Tr}_{F/B})$. We first recall a
lemma from \citep[Lemma 5]{IOfulllength} which will be used in later proofs.

\begin{lemma}\label{lem10}(\cite{IOfulllength}) Suppose $\beta_1,...,\beta_t\in F\setminus\{0\}$, then $\mathrm{dim}_B\big(\bigcap_{j=1}^t\beta_{j}^{-1}K\big)=\ell-\dim_B\big(\{\beta_j\}_{j=1}^t\big)$. 
\end{lemma}
\subsection{RS codes with two parities}
We first derive the lower bound on the I/O cost for RS codes with two parities.
\begin{theorem}\label{r=2} Let $\mathcal{A}$ be a $d$-dimensional $B$-linear subspace of $F$. Then
	the I/O cost of any linear repair scheme for ${\rm RS}(\mathcal{A},q^d-2)$ over $F$ satisfies:
	\begin{equation}\label{r2}
		\gamma_{I/O} \geq (n-1)\ell-(\ell-d+1)q^{d-1},
	\end{equation}where $n=q^d$.
\end{theorem}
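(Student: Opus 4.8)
The plan is to apply Theorem~\ref{thmformula} to an arbitrary $(m,t)$-normalized repair scheme $\{g_j(x)\}_{j=1}^\ell$ for node $0$ of ${\rm RS}(\mathcal{A},q^d-2)$ and bound the exponential-sum term from above by $(\ell-d+1)q^{d-1}$. Since $r=2$, every $g_{\bm u}$ with ${\bm u}\neq{\bm 0}$ has degree exactly $1$, so we may write $g_{\bm u}(x)=a_{\bm u}x+b_{\bm u}$ with $a_{\bm u}\neq 0$, and in particular $m\le 2$. The key object is
$$
S=\frac{1}{q^m}\sum_{s=1}^t\sum_{{\bm u}\in B^m}\sum_{\alpha\in\mathcal{A}}\chi\big((a_{\bm u}\alpha+b_{\bm u})\beta^{(s)}\big),
$$
so that $\gamma_{I/O}=(n-1)\ell-S$, and it suffices to show $S\le(\ell-d+1)q^{d-1}$.

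First I would evaluate the inner sum over $\alpha\in\mathcal{A}$. Since $\mathcal{A}$ is a $d$-dimensional $B$-subspace and $\chi$ restricted to the subspace $a_{\bm u}\beta^{(s)}\mathcal{A}$ is an additive character, Corollary~\ref{char} (applied to the $B$-subspace $a_{\bm u}\beta^{(s)}\mathcal{A}$, or rather a mild generalization of it to a $d$-dimensional subspace instead of a subfield) gives
$$
\sum_{\alpha\in\mathcal{A}}\chi(a_{\bm u}\alpha\beta^{(s)})=\begin{cases} q^d & \text{if } a_{\bm u}\beta^{(s)}\mathcal{A}\subseteq K,\\ 0 & \text{otherwise},\end{cases}
$$
and when it is nonzero the factor $\chi(b_{\bm u}\beta^{(s)})$ is just a root of unity of absolute value $1$. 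Hence $S$ is a sum of at most one term $q^d\chi(b_{\bm u}\beta^{(s)})$ for each pair $({\bm u},s)$ with ${\bm u}\neq{\bm 0}$ (the ${\bm u}={\bm 0}$ terms contribute $\frac{1}{q^m}tq^{d}\cdot$(number of... )—these I would handle separately and they contribute the "$+1$" slack), and $|S|$ is controlled by counting how many pairs $(s,{\bm u})$ satisfy the condition $a_{\bm u}\beta^{(s)}\mathcal{A}\subseteq K$. Equivalently, writing $K={\rm Ker}({\rm Tr}_{F/B})$, this says $\beta^{(s)}\in a_{\bm u}^{-1}\mathcal{A}^{\perp_{\rm Tr}}$ where $\mathcal{A}^{\perp_{\rm Tr}}=\{x: x\mathcal{A}\subseteq K\}$ has $B$-dimension $\ell-d$ (this is the analogue of Lemma~\ref{lem10} for the subspace $\mathcal{A}$, which I would establish, or cite, as a preliminary). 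For fixed ${\bm u}$, at most $\ell-d$ of the $\beta^{(s)}$, $s\in[t]$, can lie in the $(\ell-d)$-dimensional space $a_{\bm u}^{-1}\mathcal{A}^{\perp_{\rm Tr}}$ — but actually we only care about $s\le t\le \ell$, and the $\beta^{(s)}$ are linearly independent, so at most $\ell-d$ values of $s$ qualify for each ${\bm u}$. Combined over ${\bm u}\in B^m\setminus\{{\bm 0}\}$ this gives $S\le \frac{1}{q^m}\big(t\cdot 1\cdot q^d + (q^m-1)(\ell-d)q^d\big)$ roughly; the bookkeeping to sharpen this crude estimate to exactly $(\ell-d+1)q^{d-1}$ is where the real work lies.

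The main obstacle, and the step requiring care, is the constant terms $b_{\bm u}$: because $\chi(b_{\bm u}\beta^{(s)})$ need not equal $1$, $S$ is a signed (complex) sum, and a naive triangle-inequality bound $|S|\le(\text{count})q^d$ overshoots. The resolution is to exploit the linear structure: the map ${\bm u}\mapsto a_{\bm u}$ and ${\bm u}\mapsto b_{\bm u}$ are $B$-linear on $B^m$, and for $m=2$ the pairs $(a_{\bm u},b_{\bm u})$ trace out a $2$-dimensional space, so the bad pairs $({\bm u},s)$ with $a_{\bm u}\beta^{(s)}\mathcal{A}\subseteq K$ occur in structured families; grouping the ${\bm u}$ along lines through the origin (on which $a_{\bm u}$ is constant up to scalar) lets one pair up terms and extract cancellation, or alternatively show that at most one "direction" of ${\bm u}$ can be bad for a given $s$ when $\dim_B\{\beta^{(s)}\}$ is taken into account. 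I would split into the cases $m=0$, $m=1$, $m=2$ — for $m\le 1$ the bound is easy and in fact the scheme is essentially trivial or reduces to the full-length bound — and devote the bulk of the argument to $m=2$, where I expect to need Lemma~\ref{lem10} (or its subspace analogue) together with the observation that $\{g_j(\alpha_{i^*})\}_{j=1}^\ell$ spans $F$ over $B$ to pin down that the number of "saturated" $(s,{\bm u})$ pairs is at most $(\ell-d+1)q$ (out of $tq^2$), yielding exactly $S\le(\ell-d+1)q^{d-1}$.
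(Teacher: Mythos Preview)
Your proposal has a fatal error and is missing the central idea.

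\textbf{The error.} You assert ``in particular $m\le 2$''. This is false. The $(m,t)$-normalized scheme has $g_j(x)=\eta_jx+\omega_j$ for $j\in[m]$ with $\eta_1,\dots,\eta_m$ linearly independent over $B$ (this is exactly condition~1) of Definition~\ref{def}). Since the $\eta_j$ live in $F$, which has $B$-dimension $\ell$, we can have $m$ as large as $\ell$. You are confusing the $F$-dimension of the space of degree-$\le 1$ polynomials (which is $2$) with the relevant $B$-dimension. Your case split into $m\in\{0,1,2\}$ therefore leaves all schemes with $m\ge 3$ unhandled, and the bound must be shown for every scheme. Worse, the optimum in the final inequality is attained precisely at $m=\ell-d+1$, so the interesting regime is exactly the one you have excluded.

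\textbf{The missing idea.} You correctly identify that the inner sum over $\alpha\in\mathcal{A}$ vanishes unless $\eta_{\bm u}\beta^{(s)}\mathcal{A}\subseteq K$, in which case it equals $q^d\chi(\omega_{\bm u}\beta^{(s)})$. But you then treat the phase $\chi(\omega_{\bm u}\beta^{(s)})$ as an obstacle to be bounded by the triangle inequality, and your proposed ``grouping along lines'' is too vague to recover the sharp constant. The key observation you are missing is that, for fixed $s$, the set $U^{(s)}=\{{\bm u}\in B^m:\eta_{\bm u}\beta^{(s)}\mathcal{A}\subseteq K\}$ is a $B$-linear subspace of $B^m$, and hence $W^{(s)}=\{\omega_{\bm u}:{\bm u}\in U^{(s)}\}$ is a $B$-linear subspace of $F$. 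Therefore $\sum_{{\bm u}\in U^{(s)}}\chi(\omega_{\bm u}\beta^{(s)})=\sum_{\omega\in W^{(s)}}\chi(\omega\beta^{(s)})$ is itself a character sum over a subspace, so Corollary~\ref{char} applies \emph{a second time}: this sum is either $q^{a_s}$ (where $a_s=\dim_B U^{(s)}$) or $0$. No triangle inequality is needed; the phases either align perfectly or cancel completely. This reduces the problem to bounding $\sum_s q^{a_s}$, and the proof is completed by the dimension estimate $a_s\le\min\{\ell-d,m-1\}$: the first inequality comes from $\beta^{(s)}\mathcal{A}\subseteq\bigcap_j\eta_{{\bm u}_j}^{-1}K$ together with Lemma~\ref{lem10}, and the second from $W^{(s)}\oplus{\rm span}_B(\omega_{m+1},\dots,\omega_\ell)\subseteq(\beta^{(s)})^{-1}K$, which uses the normalization condition $\bigcup_{j>m}{\rm supp}(\Phi_{\hat{\mathcal{B}}}(\omega_j))=[t+1,\ell]$. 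One then optimizes $m\,q^{\min\{\ell-m,d-1\}}$ over $m\in[\ell]$ to obtain $(\ell-d+1)q^{d-1}$.

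Your dual counting (fix ${\bm u}$, count $s$) is not wrong in spirit, but without the second application of Corollary~\ref{char} it cannot reach the sharp bound, and the final optimization over $m$ is essential.
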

\begin{proof}
Let $\mathcal{B}=\{\beta^{(1)},...,\beta^{(\ell)}\}$ be a basis of $F$ over $B$. Suppose $\{g_j(x)\}_{j=1}^\ell$ is an $(m,t)$-normalized repair scheme for node $\alpha_1=0$ with respect to $\mathcal{B}$. Since ${\rm RS}(\mathcal{A},q^d-2)^\bot={\rm RS}(\mathcal{A},2)$, we have ${\rm deg}(g_j)\leq 1$ for $j\in[\ell]$. Moreover, from Definition \ref{def} we may assume
\begin{equation}\label{eq15}
	g_j(x)=\begin{cases}
		\eta_jx+\omega_j &{\rm if}~ j\in[m]\\
		\omega_j	&{\rm if}~ j\in[m+1,\ell]\\
	\end{cases}\;,
\end{equation}
where $\eta_j,\omega_j\in F$, $\bigcup_{j=m+1}^{\ell}{\rm supp}(\Phi_{\hat{\mathcal{B}}}(\omega_j))=[t+1,\ell]$, and $\dim_B(\{\omega_j\}_{j=1}^\ell)=\ell$. Denote $g_{{\bm u}}(x)=\sum_{j=1}^m u_jg_j(x)=\eta_{{\bm u}}x+\omega_{{\bm u}}$ for ${\bm u}=(u_1,...,u_m)\in B^m$, where $\eta_{{\bm u}}=\sum_{j=1}^m u_j\eta_j, ~\omega_{{\bm u}}=\sum_{j=1}^m u_j\omega_j$. Due to $1)$ of Definition \ref{def}, we know ${\rm deg}(g_{\bm u})>0$ for all ${\bm u}\in B^m\setminus\{{\bm 0}\}$, or, equivalently, $\dim_B\big(\{\eta_j\}_{j=1}^m\big)=m$.

By Theorem \ref{thmformula}, the I/O cost of $\{g_j(x)\}_{j=1}^\ell$ with respect to $\mathcal{B}$ is
\begin{align}\label{eq14}
		\gamma_{I/O}=(n-1)\ell-\frac{1}{q^m}\sum_{s=1}^t\sum_{{\bm u}\in B^m}\sum_{\alpha\in\mathcal{A}}\chi(g_{{\bm u}}(\alpha)\beta^{(s)}),
\end{align} where $n=q^d$ and $\chi$ is the canonical additive character of $F$.
First, we calculate
\begin{align}
	\sum_{\alpha\in\mathcal{A}}\chi(g_{{\bm u}}(\alpha)\beta^{(s)})&=\sum_{\alpha\in\mathcal{A}}\chi(\eta_{{\bm u}}\alpha\beta^{(s)}+\omega_{{\bm u}}\beta^{(s)})\notag\\
		                &\notag=\chi(\omega_{{\bm u}}\beta^{(s)})\sum_{\alpha\in\mathcal{A}}\chi(\eta_{{\bm u}}\beta^{(s)}\alpha)
\end{align}
where the second equality is due to the homomorphism property of additive characters.
Then it follows from Corollary \ref{char} that
\begin{equation*}
	\sum_{\alpha\in\mathcal{A}}\chi(\eta_{{\bm u}}\beta^{(s)}\alpha)=\begin{cases}
		q^d &{\rm if}~ \eta_{{\bm u}}\beta^{(s)}\mathcal{A}\subseteq K ,\\
		0  &{\rm otherwise}.
	\end{cases}
\end{equation*}
For $s\in[t]$, denote $U^{(s)}=\{{\bm u}\in B^m:\eta_{{\bm u}}\beta^{(s)}\mathcal{A}\subseteq K\}$ and $W^{(s)}=\{\omega_{\bm u}:{\bm u}\in U^{(s)}\}$. It can be seen that $U^{(s)}$ is a $B$-linear subspace of $B^m$ and $W^{(s)}$ is a $B$-linear subspace of $F$. Moreover, $W^{(s)}$ and $U^{(s)}$ have the same dimension because $\dim_B(\{\omega_j\}_{j=1}^m)=m$. Assume $\dim_B(U^{(s)})=\dim_B(W^{(s)})=a_s$ for $s\in[t]$.
Then, (\ref{eq14}) becomes
\begin{align}
	\gamma_{I/O}&=(n-1)\ell-q^{d-m}\sum_{s=1}^t\sum_{{\bm u}\in U^{(s)}}\chi(\omega_{{\bm u}}\beta^{(s)})\notag\\
		        &=(n-1)\ell-q^{d-m}\sum_{s=1}^t\sum_{\omega\in W^{(s)}}\chi(\omega\beta^{(s)}).\label{eq26}
\end{align}
Again, by Corollary \ref{char} it has
\begin{equation*}
	\sum_{\omega\in W^{(s)}}\chi(\omega\beta^{(s)})=\begin{cases}
		q^{a_s} &{\rm if}~ \beta^{(s)}W^{(s)}\subseteq K,\\
		0       &{\rm otherwise}.\\
	\end{cases}
\end{equation*}
Denote  $t'=|\{s\in[t]: \beta^{(s)}W^{(s)}\subseteq K\}|$. If $t'=0$, $\gamma_{I/O}=(n-1)\ell$ and thus Theorem \ref{r=2} obviously holds. If $t'>0$, without loss of generality we assume $\{s\in[t]: \beta^{(s)}W^{(s)}\subseteq K\}=[t']$. Then (\ref{eq26}) becomes
\begin{equation}\label{eq27+}
	\gamma_{I/O}=(n-1)\ell-q^{d-m}\sum_{s=1}^{t'}q^{a_s}.
\end{equation}
We have the following claim to establish some restrictions on $\{a_s\}_{s=1}^{t'}$ which are then used to estimate $\gamma_{I/O}$.

{\bf Claim.} $a_s\leq\min\{\ell-d,m-1\}$ for $s\in [t']$.

{\it Proof of the claim. }We first prove $a_s\leq\ell-d$. Let $\{{\bm u}_1,...,{\bm u}_{a_s}\}$ be a basis of $U^{(s)}$. By the definition of $U^{(s)}$, it holds $\beta^{(s)}\mathcal{A}\subseteq\bigcap_{j=1}^{a_s}\eta_{{\bm u}_j}^{-1}K$. Since $\dim_B(\{\eta_j\}_{j=1}^m)=m$, it follows $\dim_B\big(\{\eta_{{\bm u}_j}\}_{j=1}^{a_s}\big)=\dim_B\big(U^{(s)}\big)=a_s$. Then by Lemma \ref{lem10} it has $\dim_B\big(\bigcap_{j=1}^{a_s}\eta_{{\bm u}_j}^{-1}K\big)=\ell-a_s$. Therefore, $d=\dim_B(\beta^{(s)}\mathcal{A})\leq \dim_B\big(\bigcap_{j=1}^{a_s}\eta_{{\bm u}_j}^{-1}K)\big)=\ell-a_s$, i.e., $a_s\leq\ell-d$. Next, we prove $a_s\leq m-1$.
Since $s\not\in\bigcup_{j=m+1}^{\ell}{\rm supp}(\Phi_{\hat{\mathcal{B}}}(\omega_j))=[t+1,\ell]$ for $s\leq t'\leq t$, it has ${\rm Tr}_{F/B}(\omega_j\beta^{(s)})=0$ for any $j\in[m+1,\ell]$. That is,
${\rm span}_B(\{\omega_{m+1},...,\omega_{\ell}\})\subseteq\big(\beta^{(s)}\big)^{-1}K$. Recall that $W^{(s)}\subseteq{\rm span}_B(\{\omega_1,...,\omega_m\})$ and $\dim_B\big(\{\omega_j\}_{j=1}^\ell\big)=\ell$, hence $W^{(s)}\cap{\rm span}_B(\{\omega_{m+1},...,\omega_{\ell}\})=\{0\}$. Then, $W^{(s)}\oplus{\rm span}_B(\{\omega_{m+1},...,\omega_{\ell}\})\subseteq\big(\beta^{(s)}\big)^{-1}K$. Thus, $a_s+\ell-m\leq \ell-1$, i.e., $a_s\leq m-1$.

Combining the Claim and (\ref{eq27+}), it has
\begin{align}
	\gamma_{I/O}&\geq (n-1)\ell-q^{d-m}\sum_{s=1}^{t'}q^{\min\{\ell-d,m-1\}}\notag\\
	            &=(n-1)\ell-t'q^{\min\{\ell-m,d-1\}}\notag\\
	            &\geq(n-1)\ell-mq^{\min\{\ell-m,d-1\}}\label{eq16} \\
              &\geq(n-1)\ell-(\ell-d+1)q^{d-1},\label{eq17}
\end{align}
where (\ref{eq16}) follows from  $t'\leq t\leq m$, and (\ref{eq17}) is due to the fact that $mq^{\min\{\ell-m,d-1\}}$ reaches the maximal at $m=\ell-d+1$ as $m$ ranges in $[\ell]$.
\end{proof}

Next, we further give a necessary condition for the lower bound (\ref{r2}) to hold with equality.

\begin{corollary}\label{condr=2}
Let $\{g_j(x)\}_{j=1}^\ell$ be a repair scheme of node $0$ for ${\rm RS}(\mathcal{A},q^d-2)$. If $\{g_j(x)\}_{j=1}^\ell$ is  $(m,t)$-normalized with respect to $\mathcal{B}$ and meets the lower bound (\ref{r2}) with equality, it must have $t=m$, and \begin{itemize}
  \item[(i)] for $d=\ell$, $m=1$ when $q>2$, or, $m\in\{1,2\}$ when $q=2$;
  \item[(ii)] for $d<\ell$, $m=\ell-d+1$.
\end{itemize}
\end{corollary}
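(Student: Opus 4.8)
The plan is to revisit the chain of inequalities running from (\ref{eq27+}) down to (\ref{eq17}) in the proof of Theorem~\ref{r=2} and determine exactly when each of them collapses to an equality. First I would note that $t'>0$ in the equality case, since that proof shows $\gamma_{I/O}=(n-1)\ell$ when $t'=0$, and $(n-1)\ell>(n-1)\ell-(\ell-d+1)q^{d-1}$ because $1\le d\le\ell$ forces $(\ell-d+1)q^{d-1}\ge1$. Hence (\ref{eq27+}) applies, and the estimates there amount to
\begin{equation*}
q^{d-m}\sum_{s=1}^{t'}q^{a_s}\ \le\ t'\,q^{\min\{\ell-m,d-1\}}\ \le\ m\,q^{\min\{\ell-m,d-1\}}\ \le\ (\ell-d+1)q^{d-1},
\end{equation*}
where the first step uses the Claim $a_s\le\min\{\ell-d,m-1\}$ (and $d-m+\min\{\ell-d,m-1\}=\min\{\ell-m,d-1\}$), the second is $t'\le t\le m$ (yielding (\ref{eq16})), and the third is the maximization yielding (\ref{eq17}). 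Attaining (\ref{r2}) forces all three to be tight, and in particular $m\ge1$ (the case $m=0$ again gives $\gamma_{I/O}=(n-1)\ell$).

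Tightness of the second step gives $t'=m$, which with $t'\le t\le m$ yields $t=m$, the first claim. Tightness of the third step forces $m$ to be a maximizer of $f(m):=m\,q^{\min\{\ell-m,d-1\}}$ over $m\in\{1,\dots,\ell\}$, so it remains only to identify these maximizers; this short case analysis is the sole genuine computation.

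For $1\le m\le\ell-d+1$ we have $\min\{\ell-m,d-1\}=d-1$, hence $f(m)=m\,q^{d-1}$ is strictly increasing; for $\ell-d+1\le m\le\ell$ we have $\min\{\ell-m,d-1\}=\ell-m$, hence $f(m)=m\,q^{\ell-m}$ with $f(m+1)/f(m)=(m+1)/(mq)$, which is $<1$ exactly when $m(q-1)>1$. If $d<\ell$ the transition point $\ell-d+1$ is at least $2$: then $f$ strictly increases on $\{1,\dots,\ell-d+1\}$, and on $\{\ell-d+1,\dots,\ell\}$ every $m$ has $m\ge2$, so $m(q-1)\ge2>1$ for all $q\ge2$ and $f$ strictly decreases there, making $m=\ell-d+1$ the unique maximizer --- this is case (ii). If $d=\ell$ then $\ell-d+1=1$ and $f(m)=m\,q^{\ell-m}$ on all of $\{1,\dots,\ell\}$: for $q\ge3$, $m(q-1)\ge2>1$ for every $m\ge1$, so $f$ strictly decreases and $m=1$ is the unique maximizer; for $q=2$, $m(q-1)=m$ equals $1$ at $m=1$ and exceeds $1$ for $m\ge2$, so $f(1)=f(2)>f(3)>\cdots$ and the maximizers are exactly $m\in\{1,2\}$ --- this is case (i), completing the proof.

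The one genuinely delicate point is the monotonicity analysis of $f$ on the branch $\{\ell-d+1,\dots,\ell\}$, and in particular correctly handling the borderline equality $f(1)=f(2)$, which arises precisely when $q=2$ and $d=\ell$; everything else reduces to bookkeeping on which inequalities in the proof of Theorem~\ref{r=2} must hold with equality.
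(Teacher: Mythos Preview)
Your proposal is correct and takes essentially the same approach as the paper: you track equality through the chain (\ref{eq27+})--(\ref{eq17}), obtaining $t=m$ from $t'=m$ and then identifying the maximizers of $m\,q^{\min\{\ell-m,d-1\}}$. The paper's own proof is just a two-sentence sketch that says the equality case of (\ref{eq16}) gives $t'=m$ (hence $t=m$) and that ``it can be easily checked'' that the equality case of (\ref{eq17}) yields (i) and (ii); your monotonicity analysis of $f(m)$ via the ratio $f(m+1)/f(m)=(m+1)/(mq)$ simply supplies the details the paper omits.
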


\begin{proof}
From (\ref{eq16}), we know the equality holds when $a_s=\min\{\ell-d,m-1\}$ and $t'=m$. Because $t'\leq t\leq m$, it must have $t=m$. Then based on (\ref{eq17}), it can be easily checked that $mq^{\min\{\ell-m,d-1\}}=(\ell-d+1)q^{d-1}$ implies (i) for $d=\ell$, and  (ii) for $d<\ell$.
\end{proof}

Actually, the lower bound (\ref{r2}) is tight when $\ell-d+1\mid\ell$. Specifically, for $d=\ell$, the bound coincides with the bound derived in \cite{fullr=2} and \cite{I/OFormula}, where the tightness has been shown. For the case $d<\ell$ and $\ell-d+1\mid\ell$, the bound is achieved by the scheme built in {\bf Construction 2} in Section \ref{Sec5}. We call the repair scheme with the I/O cost matching the bound (\ref{r2}) as an I/O-optimal repair scheme. In the following, we characterize the repair bandwidth of the I/O-optimal linear repair scheme in the case of $\ell-d+1\mid\ell$.

\begin{theorem}\label{b-r=2-d}
Suppose $\mathcal{A}$ is a $d$-dimensional $B$-linear subspace of $F$ and $\ell-d+1|\ell$. The repair bandwidth, denoted by $b$, of any I/O-optimal linear repair scheme for ${\rm RS}(\mathcal{A},q^d-2)$ satisfies:
\begin{itemize}
\item [{\rm(i)}] If $d=\ell$ and $q>2$, $b=(q^\ell-1)\ell-q^{\ell-1}$.
\item[{\rm(ii)}] If $d=\ell$ and $q=2$, $b\geq (2^\ell-1)\ell-3\cdot 2^{\ell-2}$.
\item[{\rm(iii)}] If $d<\ell$, $b\geq (q^d-1)d-q^{2d-\ell-1}$.
\end{itemize}
\end{theorem}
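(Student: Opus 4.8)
The plan is to leverage the normalized-polynomial machinery together with the repair-bandwidth formula \eqref{b-formula}, which applies since Corollary \ref{condr=2} forces any I/O-optimal scheme to be $(m,m)$-normalized (i.e., $t=m$), with $m=1$ or $m=2$ in case (i)--(ii) and $m=\ell-d+1$ in case (iii). For an $(m,m)$-normalized scheme we have $b=(n-1)\ell-nm+\sum_{i=1}^n{\rm rank}(\hat W_i)$, where $\hat W_i$ is the $m\times m$ matrix with entries ${\rm Tr}_{F/B}(g_{\bm u}(\alpha_i)\beta^{(s)})$. So the task reduces to \emph{lower-bounding} (and, for case (i), exactly computing) $\sum_{i=1}^n{\rm rank}(\hat W_i)$ over all I/O-optimal choices of the $\eta_j,\omega_j,\beta^{(s)}$.

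First I would record, from the equality analysis in the proof of Theorem \ref{r=2}, the exact structural constraints imposed by I/O-optimality: one needs $t'=m$, each $a_s=\min\{\ell-d,m-1\}$, and $\beta^{(s)}W^{(s)}\subseteq K$ for all $s\in[m]$, where $U^{(s)}=\{{\bm u}\in B^m:\eta_{\bm u}\beta^{(s)}\mathcal A\subseteq K\}$ and $W^{(s)}=\{\omega_{\bm u}:{\bm u}\in U^{(s)}\}$. For $d=\ell$ the forced value is $m\in\{1\}$ (or $\{1,2\}$ when $q=2$) and $a_s=0$, so $U^{(s)}=\{\bm 0\}$ for each $s$; this says that for every nonzero $\bm u$ and every $s\in[m]$, $\eta_{\bm u}\beta^{(s)}F\not\subseteq K$, which is automatic, and one analyzes $\hat W_i$ directly. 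For $m=1$ (case (i)), $\hat W_i$ is the $1\times1$ matrix $({\rm Tr}_{F/B}(\eta_1\alpha_i\beta^{(1)}+\omega_1\beta^{(1)}))$; its rank is $0$ exactly when $\eta_1\beta^{(1)}\alpha_i\in\omega_1\beta^{(1)}+K$, and since $\alpha_i$ ranges over all of $F=\mathcal A$ and $\eta_1\beta^{(1)}\neq0$, the affine preimage $(\eta_1\beta^{(1)})^{-1}(\omega_1\beta^{(1)}+K)$ is a coset of $(\eta_1\beta^{(1)})^{-1}K$, which has size $q^{\ell-1}$; hence $\sum_i{\rm rank}(\hat W_i)=q^\ell-q^{\ell-1}$ and $b=(q^\ell-1)\ell-q^\ell+(q^\ell-q^{\ell-1})=(q^\ell-1)\ell-q^{\ell-1}$, giving (i). For (ii) with $q=2$ the I/O-optimal scheme may have $m=2$; then $\hat W_i$ is $2\times2$ and I would bound $\sum_i{\rm rank}(\hat W_i)$ from below by summing, over the three nonzero $\bm u\in B^2$, the count of $i$ with ${\bm u}\hat W_i\neq\bm 0$ divided by an appropriate multiplicity — more precisely, use $\sum_i{\rm rank}(\hat W_i)\ge \frac{1}{q-1}\sum_{i}\frac{1}{q^{m-1}}\sum_{{\bm u}\ne\bm 0}{\rm wt}({\bm u}\hat W_i)$ together with the character computation already carried out for the I/O cost, so that the bandwidth bound is read off from the same exponential sums; this yields the $3\cdot2^{\ell-2}$ term.

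For (iii), $m=\ell-d+1$, the $\hat W_i$ are $m\times m$. Here I would again pass to the character sum: ${\rm rank}(\hat W_i)=m-\dim_B\{{\bm u}\in B^m:{\bm u}\hat W_i=\bm 0\}$, and $\{{\bm u}:{\bm u}\hat W_i=\bm 0\}=\{{\bm u}:{\rm Tr}_{F/B}(g_{\bm u}(\alpha_i)\beta^{(s)})=0\ \forall s\in[m]\}$. Summing ${\rm rank}(\hat W_i)$ over $i$ and converting to a triple character sum as in the proof of Theorem \ref{thmformula}, the main term is $nm$ and the correction is governed by $\sum_{s=1}^m\sum_{{\bm u}\ne\bm 0}\sum_{\alpha\in\mathcal A}\chi(g_{\bm u}(\alpha)\beta^{(s)})$, which by Corollary \ref{char} collapses to a sum of $q^{a_s}$-terms over the $s$ with $\beta^{(s)}W^{(s)}\subseteq K$; under I/O-optimality each such contribution equals $q^{\min\{\ell-d,m-1\}}=q^{d-1}$ scaled by $q^{d-m}$, and one checks these contributions are maximized/controlled so that $\sum_i{\rm rank}(\hat W_i)\ge nm-q^{2d-\ell-1}$ (using $2d-\ell-1=(d-m)+(d-1)$ since $m=\ell-d+1$). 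Substituting into \eqref{b-formula} gives $b\ge(q^d-1)\ell-q^d(\ell-d+1)+q^dm-q^{2d-\ell-1}=(q^d-1)d-q^{2d-\ell-1}$.

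The main obstacle will be case (iii): unlike the I/O cost, which is an honest sum of weights and converts cleanly to characters, $\sum_i{\rm rank}(\hat W_i)$ is a sum of \emph{dimensions} of kernels, and the inclusion-exclusion relating $\dim_B\bigcap_s\{{\bm u}:{\rm Tr}_{F/B}(g_{\bm u}(\alpha_i)\beta^{(s)})=0\}$ to individual character sums is not linear. I expect to need the structural facts from the equality case — that the $U^{(s)}$ and the coset structure of $W^{(s)}$ are rigidly determined by I/O-optimality, and that $\dim_B(\{\eta_j\}_{j=1}^m)=m$ — to show that the kernels $\{{\bm u}:{\bm u}\hat W_i=\bm 0\}$ are, for all but a controlled number of indices $i$, either trivial or exactly the spaces $U^{(s)}$, so that the rank deficit is supported on a coset-union of total size $q^{2d-\ell-1}$. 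Pinning down that this is the correct (and attained-or-not) count, and handling the $q=2$ subtlety in (ii) where an extra constant polynomial is available, is where the real work lies; the full-length case (i) should fall out immediately from the $1\times1$ coset count above.
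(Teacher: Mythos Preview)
Your treatment of case (i) is correct and is exactly what the paper does: with $m=1$ the matrix $\hat W_i$ is $1\times 1$, and counting the coset of $(\eta_1\beta^{(1)})^{-1}K$ gives $\sum_i{\rm rank}(\hat W_i)=q^\ell-q^{\ell-1}$.

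The genuine gap is in (ii) and (iii). You correctly identify the obstacle yourself: $\sum_i{\rm rank}(\hat W_i)$ is a sum of kernel \emph{dimensions}, not sizes, so it does not linearize into a character sum. Your proposed workaround --- arguing that the kernels $\{{\bm u}:{\bm u}\hat W_i=\bm 0\}$ are forced to be either trivial or one of the $U^{(s)}$ --- is not what happens, and your claimed inequality $\sum_i{\rm rank}(\hat W_i)\ge nm-q^{2d-\ell-1}$ is too strong. In fact it would yield $b\ge (q^d-1)\ell-q^{2d-\ell-1}$, which for $d<\ell$ exceeds the optimal I/O cost and hence is false; your final algebraic step, where you collapse $(q^d-1)\ell-q^d(\ell-d+1)+q^dm-q^{2d-\ell-1}$ to $(q^d-1)d-q^{2d-\ell-1}$, silently cancels $-q^dm+q^dm=0$ and then replaces $\ell$ by $d$ without justification.

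The paper's route avoids characters entirely for the bandwidth and instead double-counts kernel \emph{sizes}:
\[
\sum_{i=1}^n q^{\,m-b_i}\;=\;\sum_{i=1}^n\bigl|\{{\bm u}\in B^m:{\bm u}\hat W_i=\bm 0\}\bigr|\;=\;\sum_{{\bm u}\in B^m}\bigl|\{i:{\bm u}\hat W_i=\bm 0\}\bigr|,
\]
where $b_i={\rm rank}(\hat W_i)$. For ${\bm u}\ne\bm 0$, $g_{\bm u}$ has degree $1$ and is a bijection $F\to F$, so $\{i:{\bm u}\hat W_i=\bm 0\}=\{i:g_{\bm u}(\alpha_i)\in\bigcap_{s=1}^m(\beta^{(s)})^{-1}K\}$ has size at most $q^{\ell-m}$ by Lemma \ref{lem10}. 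Together with $b_1=m$ this gives the single constraint
\[
\sum_{i=2}^n q^{\,m-b_i}\;\le\;q^d+q^\ell-q^{\ell-m}-1,
\]
and the problem becomes: minimize $\sum_{i=2}^n b_i$ over integers $0\le b_i\le m$ subject to this convex constraint. By AM--GM the minimum is approached when the $b_i$ are balanced, and one checks which two consecutive integer values $b_i$ must take. For $d=\ell$, $q=2$, $m=2$ the feasible values are $b_i\in\{1,2\}$ and a short count gives $\sum_i b_i\ge 2^\ell+2^{\ell-2}$, yielding (ii). For $d<\ell$, $m=\ell-d+1$ the feasible values are $b_i\in\{0,1\}$ and one gets $n_1\ge q^d-q^{2d-\ell-1}-1$, hence $\sum_i b_i\ge m+q^d-q^{2d-\ell-1}-1$, which after substituting into \eqref{b-formula} gives exactly $(q^d-1)d-q^{2d-\ell-1}$. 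This double-count plus convexity step is the missing idea in your proposal.
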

\begin{proof}
Suppose $\{g_j(x)\}_{j=1}^\ell$ defines a repair scheme for node $\alpha_1=0$ which is $(m,t)$-normalized  with respect to $\mathcal{B}$ and I/O-optimal.
We may assume $g_j(x)$ is defined as in (\ref{eq15}) for $j\in[\ell]$. By Corollary \ref{condr=2}, it must have $m=t$. Combining with (\ref{b-formula}), the repair bandwidth relies on the computation of $\sum_{i=1}^n{\rm rank}(\hat{W}_i)$,
where

\begin{equation}\label{W_i^}
\hat{W}_i=\begin{pmatrix}
		{\rm Tr}_{F/B}(g_1(\alpha_i)\beta^{(1)})&\cdots &{\rm Tr}_{F/B}(g_1(\alpha_i)\beta^{(m)}) \\
\vdots                            & \ddots & \vdots                                   \\
{\rm Tr}_{F/B}(g_m(\alpha_i)\beta^{(1)})&\cdots & {\rm Tr}_{F/B}(g_m(\alpha_i)\beta^{(m)})\\
\end{pmatrix}.
\end{equation}
Note in (i) and (ii), it has $d=\ell$ which means $\mathcal{A}=\{\alpha_1,...,\alpha_n\}=F$.

(i) If $d=\ell$ and $q>2$, it has $m=1$ from Corollary \ref{condr=2}. Then, $\hat{W}_i=\big({\rm Tr}_{F/B}(g_1(\alpha_i)\beta^{(1)})\big)$. Since $g_1(x)=\eta_1x+\omega_1$ is a bijection from $F$ to $F$ and $\dim_B((\beta^{(1)})^{-1}K)=\ell-1$, we know $$\big|\{i\in[n]: {\rm Tr}_{F/B}(g_1(\alpha_i)\beta^{(1)})\neq0\}\big|=\big|\{i\in[n]: g_1(\alpha_i)\not\in(\beta^{(1)})^{-1}K\}\big|=n-q^{\ell-1}.$$ Then,\begin{align*}
	b&= (n-1)\ell-n+\sum_{i=1}^n{\rm rank}(\hat{W}_i)\\
	   &=(n-1)\ell-n+\big|\{i\in[n]: {\rm Tr}_{F/B}(g_1(\alpha_i)\beta^{(1)})\neq0\}\big|\\
	   &=(n-1)\ell-q^{\ell-1}.
\end{align*}

Next, we prove (ii) and (iii) by counting $\sum_{i=1}^n\big|\{{\bm u}\in B^m:{\bm u}\hat{W}_i={\bm 0}\}\big|$ in two ways. Denote $b_i={\rm rank}(\hat{W}_i)$ for $i\in[n]$. On the one hand, one can obtain that
\begin{equation}\label{new}
\sum_{i=1}^n\big|\{{\bm u}\in B^m:{\bm u}\hat{W}_i={\bm 0}\}\big|=\sum_{i=1}^nq^{m-b_i}.
\end{equation}
On the other hand,
\begin{equation}\label{eq31}
	\sum_{i=1}^n\big|\{{\bm u}\in B^m:{\bm u}\hat{W}_i={\bm 0}\}\big|=\sum_{{\bm u}\in B^m}\big|\{i\in[n]:{\bm u}\hat{W}_i={\bm 0}\}\big|.
\end{equation}	
For each ${\bm u}=(u_1,...,u_m)\in B^m$, ${\bm u}\hat{W}_i=({\rm Tr}_{F/B}(g_{\bm u}(\alpha_i)\beta^{(1)}),..., {\rm Tr}_{F/B}(g_{\bm u}(\alpha_i)\beta^{(m)}))$, where $g_{\bm u}(x)=\sum_{j=1}^mu_jg_j(x)$. Then, ${\bm u}\hat{W}_i={\bm 0}$ if and only if $g_{\bm u}(\alpha_i)\in\bigcap_{j=1}^m(\beta^{(j)})^{-1}K$. If ${\bm u}\neq{\bm 0}$, ${\rm deg}(g_{\bm u})=1$ due to $1)$ of Definition \ref{def}, and therefore $g_{\bm u}(x)$ is a bijection from $F$ to $F$. Thus, $\big|\{i\in[n]:{\bm u}\hat{W}_i={\bm 0}\}\big|=\big|g_{\bm u}(\mathcal{A})\bigcap\big(\bigcap_{j=1}^m(\beta^{(j)})^{-1}K\big)\big|\leq |\bigcap_{j=1}^m(\beta^{(j)})^{-1}K|= q^{\ell-m}$, where the last equality follows from Lemma \ref{lem10}. If ${\bm u}={\bm 0}$, $\big|\{i\in[n]:{\bm u}\hat{W}_i={\bm 0}\}\big|=n=q^{d}$. As a result,
\begin{equation}\label{eq21+}
\sum_{{\bm u}\in B^m}\big|\{i\in[n]:{\bm u}\hat{W}_i={\bm 0}\}\big|=q^d+\!\!\!\sum_{{\bm u}\in B^m\setminus\{{\bm 0}\}}\!\!\!\big|\{i\in[n]:{\bm u}\hat{W}_i={\bm 0}\}\big|\leq q^d+(q^m-1)q^{\ell-m}.
\end{equation}
Note that the repair condition implies $\dim_B\big(\{g_j(0)\}_{j=1}^\ell\big)={\rm rank}(W_1)=\ell$, where $W_1$ has the form as in (\ref{IOmatrix}). Hence ${\rm rank}(W_1)={\rm rank}(\hat{W}_1)+\ell-m$ and $b_1={\rm rank}(\hat{W}_1)=m$. Then, combining \eqref{new}, \eqref{eq31} and \eqref{eq21+}, it has
\begin{equation}\label{eq32+}
\sum_{i=2}^n q^{m-b_i}\leq q^d+q^\ell-q^{\ell-m}-1.
\end{equation}
Let $b_{min}=\min_{0\leq b_i\leq m}\sum_{i=2}^n b_i$ subject to (\ref{eq32+}), then it follows from (\ref{b-formula}) that the repair bandwidth of any I/O optimal repair scheme satisfies
\begin{equation}\label{22}
	b\geq (n-1)(\ell-m)+b_{min}.
\end{equation}	
By the Arithmetic Mean-Geometric Mean inequality, $b_{min}$ is met when  $\{b_i\}_{i=2}^n$ are balanced. However, since $b_2,...,b_n$ can only be integers, we thus estimate $b_{min}$ by choosing integers $b_2,...,b_n$ as small as possible subject to (\ref{eq32+}) and $|b_i-b_j|\leq 1$ for any $i,j\in[2,n]$.

For (ii), it follows from Corollary \ref{condr=2} that $m\in\{1,2\}$. For $m=1$, we have proved in (i) that $b=(q^\ell-1)\ell-q^{\ell-1}$ which is obviously larger than $(2^\ell-1)\ell-3\cdot 2^{\ell-2}$ as $q=2$. For $m=2$,
we first verify that the inequality \eqref{eq32+} holds when $b_2=\cdots=b_n=2$, but does not hold when $b_2=\cdots=b_n=1$. As a result,
we can deduce the minimum of $\sum_{i=2}^nb_i$ arrives as $b_i\in\{1,2\}$ for $i\in[2,n]$. Assume $n_1=|\{i\in[2,n]:b_i=1\}|$ and $n_2=|\{i\in[2,n]:b_i=2\}|$, then $n_1+n_2=n-1=2^\ell-1$. Additionally, we can obtain $2n_1+n_2\leq 2^\ell+3\cdot2^{\ell-2}-1$ from \eqref{eq32+}. Hence
$$\sum^n_{i=1}b_i=2+\sum^n_{i=2}b_i=2+n_1+2n_2=2+3(n_1+n_2)-(2n_1+n_2)\geq 2^{\ell}+2^{\ell-2}.$$ Then, (ii) can be derived because $b=(n-1)\ell-2n+\sum_{i=1}^n b_i$ from (\ref{b-formula}).

For (iii), it follows from Corollary \ref{condr=2} that $m=\ell-d+1$. Similarly, one can verify that the inequality \eqref{eq32+} holds when $b_2=\cdots=b_n=1$, but does not hold when $b_2=\cdots=b_n=0$, so
the minimum of $\sum_{i=2}^nb_i$ arrives as $b_i\in\{0,1\}$ for $i\in[2,n]$. Assume $n_0=|\{i\in[2,n]:b_i=0\}|$ and $n_1=|\{i\in[2,n]:b_i=1\}|$, we can derive
$$
\begin{cases}
	n_0+n_1=q^d-1\\
	q^mn_0+q^{m-1}n_1\leq q^d+q^\ell-q^{\ell-m}-1\\
\end{cases}\;.
$$
Then, $n_1\geq\lceil q^d-q^{2d-\ell-1}-\frac{q^{m}-1}{q^{m-1}(q-1)}\rceil=q^d-q^{2d-\ell-1}-1$, where the last identity follows from  $m=\ell-d+1\geq 2$. Therefore, $\sum^n_{i=1}b_i=m+\sum^n_{i=2}b_i=m+n_1\geq m+q^d-q^{2d-\ell-1}-1$. Finally, (iii) holds due to (\ref{b-formula}).
\end{proof}

\begin{remark}
Theorem \ref{b-r=2-d} indicates that every I/O-optimal linear repair scheme for full-length RS codes with $r=2$ is repair-by-transfer (i.e., the I/O cost equals the repair bandwidth) when $q>2$. When $q=2$, the Construction \uppercase\expandafter{\romannumeral2} in \cite{fullr=2} is an I/O-optimal repair scheme with the repair bandwidth $b=(n-1)\ell-3\cdot 2^{\ell-2}$, which achieves the minimum bandwidth among all I/O-optimal linear repair schemes.
\end{remark}

\subsection{ RS codes with three parities}
In this subsection, we restrict to $B=\mathbb{F}_2$ and $F=\mathbb{F}_{2^\ell}$. In a similar way, we derive lower bounds on the I/O cost and repair bandwidth for RS codes with three parities.
\begin{theorem}\label{r=3} Let $\mathcal{A}$ be a $d$-dimensional $B$-linear subspace of $F$. Then the I/O cost of any linear repair scheme for ${\rm RS}(\mathcal{A},2^d-3)$ over $F$ satisfies:
\begin{equation}\label{IOr3}
		\gamma_{I/O} \geq (n-1)\ell-(\ell-d+2)2^{d-1},
	\end{equation}where $n=2^d$.
\end{theorem}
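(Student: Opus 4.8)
The plan is to run the argument of Theorem~\ref{r=2}, exploiting that over $B=\mathbb{F}_2$ both $x\mapsto x^2$ and its inverse $x\mapsto\sqrt{x}$ are $B$-linear, so the quadratic parts of the repair polynomials can still be linearized. Since ${\rm RS}(\mathcal{A},2^d-3)^{\bot}={\rm RS}(\mathcal{A},3)$, by Theorem~\ref{thm7} it suffices to treat an $(m,t)$-normalized scheme of node $\alpha_1=0$, so $g_j(x)=\lambda_jx^2+\eta_jx+\omega_j$ for $j\in[m]$ and $g_j(x)=\omega_j$ for $j\in[m+1,\ell]$, where $\{(\lambda_j,\eta_j)\}_{j=1}^m$ are $B$-linearly independent in $F^2$ (condition~1 of Definition~\ref{def}), $\{\omega_j\}_{j=1}^\ell$ is a basis of $F$, and ${\rm span}_B\{\omega_{m+1},\dots,\omega_\ell\}\subseteq(\beta^{(s)})^{-1}K$ for $s\in[t]$. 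Writing $g_{\bm u}=\sum_{j\le m}u_jg_j$ and using ${\rm Tr}(y^2)={\rm Tr}(y)$ together with the additivity of $\sqrt{\cdot}$, one gets ${\rm Tr}(g_{\bm u}(\alpha)\beta^{(s)})={\rm Tr}(\omega_{\bm u}\beta^{(s)})+{\rm Tr}(\xi_{{\bm u},s}\alpha)$ with $\xi_{{\bm u},s}=\sqrt{\lambda_{\bm u}\beta^{(s)}}+\eta_{\bm u}\beta^{(s)}=\sum_{j\le m}u_j\bigl(\sqrt{\lambda_j\beta^{(s)}}+\eta_j\beta^{(s)}\bigr)$ a $B$-linear function of ${\bm u}$. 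Substituting into Theorem~\ref{thmformula} and applying Corollary~\ref{char}, first over $\alpha\in\mathcal{A}$ and then over ${\bm u}\in B^m$, exactly as in the proof of Theorem~\ref{r=2}, I would obtain $\gamma_{I/O}=(n-1)\ell-2^{\,d-m}\sum_{s=1}^{t'}2^{a_s}$, where $U^{(s)}=\{{\bm u}\in B^m:\xi_{{\bm u},s}\mathcal{A}\subseteq K\}$, $a_s=\dim_B U^{(s)}=\dim_B W^{(s)}$ with $W^{(s)}=\{\omega_{\bm u}:{\bm u}\in U^{(s)}\}$, and $t'=|\{s\in[t]:\beta^{(s)}W^{(s)}\subseteq K\}|$ (after reindexing to $[t']$).

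It then remains to show $2^{\,d-m}\sum_{s=1}^{t'}2^{a_s}\le(\ell-d+2)2^{d-1}$. First I would record the analogue of the Claim in Theorem~\ref{r=2}: for $s\in[t']$ one has $W^{(s)}\oplus{\rm span}_B\{\omega_{m+1},\dots,\omega_\ell\}\subseteq(\beta^{(s)})^{-1}K$, hence $a_s\le m-1$, and $t'\le t\le m$ by Lemma~\ref{lem10}; these already give $\gamma_{I/O}\ge(n-1)\ell-m\,2^{d-1}$, settling the theorem whenever $m\le\ell-d+2$. For larger $m$ these bounds are too weak, so I would bring in a double count. Writing $\xi_{{\bm u},s}=\Lambda_{\bm u}(\beta^{(s)})$ with $\Lambda_{\bm u}(\beta)=\sqrt{\lambda_{\bm u}}\,\sqrt{\beta}+\eta_{\bm u}\beta$ a $B$-linear endomorphism of $F$ whose kernel has dimension at most $1$, for ${\bm u}\ne{\bm 0}$ the preimage $\Lambda_{\bm u}^{-1}(\mathcal{A}^{\perp})$ (with $\mathcal{A}^{\perp}=\{\xi\in F:\xi\mathcal{A}\subseteq K\}$, $\dim_B\mathcal{A}^{\perp}=\ell-d$) has dimension at most $\ell-d+1$, so by the $B$-linear independence of $\{\beta^{(s)}\}_{s=1}^\ell$, $|\{s\in[t']:{\bm u}\in U^{(s)}\}|\le\ell-d+1$; this, via $\sum_{s=1}^{t'}2^{a_s}=\sum_{{\bm u}\in B^m}|\{s\in[t']:{\bm u}\in U^{(s)}\}|$, controls the contribution of the ${\bm u}\ne{\bm 0}$.

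The key new structural ingredient, needed to finish the case $m\ge\ell-d+3$, is the behaviour of the kernels $N^{(s)}=\ker({\bm u}\mapsto\xi_{{\bm u},s})$. For distinct $s,s'\in[t']$ I would show $N^{(s)}\cap N^{(s')}=\{{\bm 0}\}$: a common nonzero ${\bm u}$ forces $(\sqrt{\lambda_{\bm u}},\eta_{\bm u})$ onto $L_s\cap L_{s'}=\{{\bm 0}\}$, where $L_s=\{(z\sqrt{\beta^{(s)}},z):z\in F\}$ are distinct $F$-lines, contradicting condition~1. Moreover $a_s\le(\ell-d)+\dim_B N^{(s)}$, from applying Lemma~\ref{lem10} to a basis of the image $\{\xi_{{\bm u},s}:{\bm u}\in U^{(s)}\}$ (of dimension $a_s-\dim_B N^{(s)}$) together with $\mathcal{A}\subseteq\bigcap_{{\bm u}\in U^{(s)}}\xi_{{\bm u},s}^{-1}K$. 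Hence an index with $a_s$ near $m-1$ must have $N^{(s)}$ of dimension near $m-1-(\ell-d)$, and since the $N^{(s)}$ pairwise intersect trivially in $B^m$, only boundedly many such indices can occur; feeding this — together with the constant-term constraints, which by $W^{(s)}\subseteq(\beta^{(s)})^{-1}K$ and Lemma~\ref{lem10} put the subspaces $W^{(s)}$ of ${\rm span}_B\{\omega_1,\dots,\omega_m\}$ into mutually independent hyperplane positions — into a case analysis on $m$ should yield $\sum_{s=1}^{t'}2^{a_s}\le(\ell-d+2)2^{m-1}$, with the extremal pattern occurring at $m=\ell-d+2$.

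The hard part will be precisely this last step: the simple tools (the bound $a_s\le m-1$ and the double count $|\{s:{\bm u}\in U^{(s)}\}|\le\ell-d+1$) each suffice only in part of the range of $m$, and in the remaining range one must carefully quantify how the trivially-intersecting kernels $N^{(s)}$ and the hyperplane positions of the $W^{(s)}$ jointly limit the number of indices $s$ with large $a_s$. This is the $r=3$ counterpart of the short Claim-plus-maximization in the proof of Theorem~\ref{r=2}, but genuinely more delicate because the maps ${\bm u}\mapsto\xi_{{\bm u},s}$ need not be injective.
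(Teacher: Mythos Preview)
Your setup is correct and your linearization ${\rm Tr}(g_{\bm u}(\alpha)\beta^{(s)})={\rm Tr}(\omega_{\bm u}\beta^{(s)})+{\rm Tr}(\xi_{{\bm u},s}\alpha)$ with $\xi_{{\bm u},s}=\sqrt{\lambda_{\bm u}\beta^{(s)}}+\eta_{\bm u}\beta^{(s)}$ is an elegant alternative to the paper's direct use of $L_{\bm u}(x)=\lambda_{\bm u}x^2+\eta_{\bm u}x$; the two approaches define the same $U^{(s)}$ and yield the same identity $\gamma_{I/O}=(n-1)\ell-2^{d-m}\sum_{s\le t'}2^{a_s}$ together with $a_s\le m-1$. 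Crucially, your ``double count'' observation $|\{s:{\bm u}\in U^{(s)}\}|\le \ell-d+1$ for ${\bm u}\ne{\bm 0}$ is \emph{exactly} the paper's key claim $\bigcap_{i=1}^{\ell-d+2}U^{(i)}=\{{\bm 0}\}$ (just stated contrapositively), and your derivation via $\dim\Lambda_{\bm u}^{-1}(\mathcal{A}^{\perp})\le\ell-d+1$ is a clean alternative to the paper's argument that $\dim L_{\bm u}(\mathcal{A})\ge d-1>d-2=\dim\bigcap_{i=1}^{\ell-d+2}(\beta^{(i)})^{-1}K$.

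The gap is in how you exploit this fact. Summing $|U^{(s)}|$ directly only gives $\sum_s 2^{a_s}\le t'+(\ell-d+1)(2^m-1)$, which (as you implicitly notice) is too weak, and your proposed detour through the kernels $N^{(s)}$ introduces extra structure that, while correct, leaves you with a vague ``case analysis'' that you do not carry out --- and indeed it is not clear it can be made to close the bound uniformly in $m$. The paper instead extracts from $\bigcap_{i=1}^{\ell-d+2}U^{(i)}=\{{\bm 0}\}$ a \emph{linear} constraint on the top $\ell-d+2$ dimensions: by the telescoping dimension formula
\[
0=\dim_B\Big(\bigcap_{i=1}^{\ell-d+2}U^{(i)}\Big)=\sum_{i=1}^{\ell-d+2}a_i-\sum_{j=1}^{\ell-d+1}\dim_B\Big(U^{(j+1)}+\bigcap_{i\le j}U^{(i)}\Big)\ge \sum_{i=1}^{\ell-d+2}a_i-(\ell-d+1)m,
\]
one gets $\sum_{i=1}^{\ell-d+2}a_i\le(\ell-d+1)m$ (after ordering $a_1\ge\cdots\ge a_{t'}$). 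The proof is then completed by a purely combinatorial optimization (the paper's Lemma~\ref{r3cond}): subject to $0\le a_s\le m-1$, $t'\le m$, and this sum constraint, one has $2^{d-m}\sum_{s\le t'}2^{a_s}\le(\ell-d+2)2^{d-1}$. So you already possess the decisive structural fact; you should feed it into the dimension formula rather than into a size double count, and then isolate the remaining maximization as a separate lemma --- your $N^{(s)}$ are not needed.
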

\begin{proof}
Let $\mathcal{B}=\{\beta^{(1)},...,\beta^{(\ell)}\}$ be a basis of $F$ over $B$. Suppose $\{g_j(x)\}_{j=1}^\ell$ defines an $(m,t)$-normalized repair scheme for node $\alpha_1=0$ with respect to $\mathcal{B}$. Since ${\rm RS}(\mathcal{A},2^d-3)^\bot={\rm RS}(\mathcal{A},3)$, we have ${\rm deg}(g_j)\leq 2$ for $j\in[\ell]$. Moreover, from Definition \ref{def} we may assume
\begin{equation}\label{eq21}
	g_j(x)=\begin{cases}
		\lambda_jx^2+\eta_jx+\omega_j & j\in[m]\\
		\omega_j	& j\in[m+1,\ell]\\
	\end{cases},
\end{equation}
where $\lambda_j,\eta_j,\omega_j\in F$, $\bigcup_{j=m+1}^{\ell}{\rm supp}(\Phi_{\hat{\mathcal{B}}}(\omega_j))=[t+1,\ell]$ and ${\rm dim}(\{\omega_j\}_{j=1}^\ell)=\ell$. Denote $g_{{\bm u}}(x)=\sum_{j=1}^mu_jg_j(x)=\lambda_{{\bm u}}x^2+\eta_{{\bm u}}x+\omega_{{\bm u}}$ for ${\bm u}=(u_1,...,u_m)\in B^m$, where $\lambda_{{\bm u}}=\sum_{j=1}^mu_j\lambda_j$, $\eta_{{\bm u}}=\sum_{j=1}^mu_j\eta_j$, and $\omega_{{\bm u}}=\sum_{j=1}^mu_j\omega_j$.

By Theorem \ref{thmformula}, the I/O cost of $\{g_j(x)\}_{j=1}^\ell$ with respect to $\mathcal{B}$ is
	\begin{align*}\label{eq22}
		\gamma_{I/O}=(n-1)\ell-\frac{1}{2^m}\sum_{s=1}^t\sum_{{\bm u}\in B^m}\sum_{\alpha\in\mathcal{A}}\chi(g_{{\bm u}}(\alpha)\beta^{(s)}),
	\end{align*}
where $\chi$ is the canonical additive character of $F$. First, We calculate
	\begin{align*}
		\sum_{\alpha\in\mathcal{A}}\chi(g_{{\bm u}}(\alpha)\beta^{(s)})&=\chi(\omega_{{\bm u}}\beta^{(s)})\sum_{\alpha\in\mathcal{A}}\chi\big(\beta^{(s)}(\lambda_{{\bm u}}\alpha^2+\eta_{{\bm u}}\alpha)\big).
	\end{align*}
Since $L_{{\bm u}}(x)\triangleq\lambda_{{\bm u}}x^2+\eta_{{\bm u}}x$ is a linearized polynomial over $B$ and $\mathcal{A}$ is a $B$-linear subspace of $F$,  $\beta^{(s)}L_{{\bm u}}(\mathcal{A})$ is also a $B$-linear subspace of $F$. It follows from Lemma \ref{thm4} that
\begin{equation*}
	\sum_{\alpha\in\mathcal{A}}\chi\big(\beta^{(s)}(\lambda_{{\bm u}}\alpha^2+\eta_{{\bm u}}\alpha)\big)=\begin{cases}
			2^d &{\rm if}~ \beta^{(s)}L_{{\bm u}}(\mathcal{A})\subseteq K, \\
			0 &{\rm otherwise}. \end{cases}
\end{equation*}
Furthermore,
denote $U^{(s)}=\{{\bm u}\in B^m:\beta^{(s)}L_{{\bm u}}(\mathcal{A})\subseteq K\}$ and $W^{(s)}=\{\omega_{\bm u}:{\bm u}\in U^{(s)}\}$ for $s\in[t]$, then similar to (\ref{eq26}) and (\ref{eq27+}),  one can derive
\begin{equation}\label{r3}
	\gamma_{I/O}=(n-1)\ell-2^{d-m}\sum_{s=1}^{t'}2^{a_s},
\end{equation}
where $a_s={\rm dim}_B(W^{(s)})={\rm dim}_B(U^{(s)})$,  $t'=|\{s\in[t]:\beta^{(s)}W^{(s)}\subseteq K\}|$ and assume $\{s\in[t]:\beta^{(s)}W^{(s)}\subseteq K\}=[t']$. As in the claim of Theorem \ref{r=2}, it also holds $a_s\leq m-1$ for $s\in[t']$.
We next derive an additional constraint on $\{a_i : i \in [t']\}$. Without loss of generality, we may assume $a_1\geq a_2\geq\cdots\geq a_{t'}$.
	
{\bf Claim.} 	If $t'\geq\ell-d+2$, then $\sum_{i=1}^{\ell-d+2}a_i\leq(\ell-d+1)m$.	

{\it Proof of the claim. }	We first prove $\bigcap_{i=1}^{\ell-d+2}U^{(i)}=\{{\bm 0}\}$. 
By definition of $U^{(i)}$, ${\bm u}\in U^{(i)}$ if and only if $L_{{\bm u}}(\mathcal{A})\subseteq(\beta^{(i)})^{-1}K$. If there exists ${\bm 0}\neq{\bm u}\in B^m$ such that ${\bm u}\in\bigcap_{i=1}^{\ell-d+2}U^{(i)}$, then $L_{{\bm u}}(\mathcal{A})\subseteq\bigcap_{i=1}^{\ell-d+2}(\beta^{(i)})^{-1}K$. On the one hand, since $L_{\bm u}$ is a linearized polynomial over $B$ with degree $\leq2$ and ${\rm dim}_B(\mathcal{A})=d$, it follows that ${\rm dim}_B\big(L_{{\bm u}}(\mathcal{A})\big)\geq d-1$. On the other hand, by Lemma \ref{lem10} it has ${\rm dim}_B\big(\bigcap_{i=1}^{\ell-d+2}(\beta^{(i)})^{-1}K\big)=d-2$. Thus we get a contradiction and so $\bigcap_{i=1}^{\ell-d+2}U^{(i)}=\{{\bm 0}\}$. Then,
according to the dimensionality formula of the sum of linear spaces, it has
\begin{equation}\label{eq27}
	{\rm dim}_B\big(\bigcap_{i=1}^{\ell-d+2}U^{(i)}\big)=\sum_{i=1}^{\ell-d+2}{\rm dim}_B\big(U^{(i)})-\sum^{\ell-d+1}_{j=1}{\rm dim}_B\big(U^{(j+1)}+\bigcap_{i=1}^{j}U^{(i)}\big).
\end{equation}
Since $U^{(i)}\subseteq B^{m}$, we know ${\rm dim}_B\big(U^{(j+1)}+\bigcap_{i=1}^{j}U^{(i)}\big)\leq m$ for $j\in[\ell-d+1]$. Combining with ${\rm dim}_B\big(\bigcap_{i=1}^{\ell-d+2}U^{(i)}\big)=0$ and ${\rm dim}_B(U^{(i)})=a_i$ , we can derive $0\geq \sum_{i=1}^{\ell-d+2}a_i-(\ell-d+1)m$ from (\ref{eq27}). This completes the proof of the claim.

Subsequently, we can deduce that $2^{d-m}\sum_{s=1}^{t'}2^{a_s}\leq(\ell-d+2)2^{d-1}$ by Lemma \ref{r3cond}. Combining with the equality (\ref{r3}), the proof completes.
\end{proof}

\begin{lemma}\label{r3cond}
Let $t',m,d,\ell$ be positive integers and $t'\leq m\leq\ell$. For integers $m-1\geq a_1\geq a_2\geq\cdots\geq a_{t'}\geq 0$ satisfying $\sum_{i=1}^{\min\{t',\ell-d+2\}}a_i\leq (\ell-d+1)m$, it has $2^{d-m}\sum_{i=1}^{t'}2^{a_i}\leq (\ell-d+2)2^{d-1}$, where the equality holds only if $t'=m\leq2(\ell-d+2)$.
\end{lemma}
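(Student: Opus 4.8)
The plan is to set $e:=\ell-d+2$ (note $e\geq 2$ since $d\leq\ell$) and divide the asserted inequality by $2^{d-m}$; this turns the lemma into the self-contained combinatorial statement that, for integers $m-1\geq a_1\geq\cdots\geq a_{t'}\geq 0$ with $t'\leq m$ and $\sum_{i=1}^{\min\{t',e\}}a_i\leq(e-1)m$,
\[
\sum_{i=1}^{t'}2^{a_i}\leq e\,2^{m-1},
\]
with equality only when $t'=m\leq 2e$. I would split the argument according to whether $t'\leq e$ or $t'>e$.

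If $t'\leq e$, the partial-sum hypothesis is not even needed: each summand is at most $2^{m-1}$ and there are at most $e$ of them, so $\sum_{i=1}^{t'}2^{a_i}\leq t'2^{m-1}\leq e\,2^{m-1}$; equality forces $t'=e$ and $a_1=\cdots=a_{t'}=m-1$, and substituting this into $\sum_{i=1}^{t'}a_i\leq(e-1)m$ gives $e(m-1)\leq(e-1)m$, i.e.\ $m\leq e$, whence $t'=m=e\leq 2e$.

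If $t'>e$, then $m\geq t'\geq e+1$ and the hypothesis reads $\sum_{i=1}^{e}a_i\leq(e-1)m$. I would pass to the deficiency variables $c_i:=(m-1)-a_i\geq 0$, which are nondecreasing, integer-valued, and satisfy $\sum_{i=1}^{e}c_i\geq e(m-1)-(e-1)m=m-e\geq 1$; the target becomes $\sum_{i=1}^{t'}2^{-c_i}\leq e$. Bounding the tail by $\sum_{i=e+1}^{t'}2^{-c_i}\leq(t'-e)2^{-c_e}\leq(m-e)2^{-c_e}$ and then maximizing $\sum_{i=1}^{e-1}2^{-c_i}+(m-e+1)2^{-c_e}$ over the admissible choices of $c_1,\dots,c_e$ --- which, since $x\mapsto 2^{-x}$ is convex, is attained at an extreme configuration in which $c_1,\dots,c_{e-1}$ take only the values $0$ and $c_e$ apart from at most one exception --- collapses the problem to a one-parameter check governed by the elementary inequality $j+1\leq 2^{j}$ for integers $j\geq 1$. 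Carrying this out yields $\sum_{i=1}^{t'}2^{-c_i}\leq e$, and equality is seen to require $t'=m$ together with $e+1\leq m\leq 2e$.

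The main obstacle is the case $t'>e$. The subtlety is that the coupling between $c_e$ --- which simultaneously controls every tail term through $2^{-c_e}$ --- and the residual budget left for $c_1,\dots,c_{e-1}$ must be respected: distributing the deficiency $m-e$ more evenly among the first $e$ coordinates can beat the ``all-on-one-coordinate'' split precisely because it keeps $c_e$ (hence the tail bound) smaller, so the extremal configuration must be identified carefully rather than guessed. Pinning the equality case down to exactly $t'=m\leq 2e$ then amounts to bookkeeping which of the inequalities $j+1\leq 2^{j}$ is binding for the candidate configurations and verifying that no other configuration attains the value $e$; this is routine but fiddly.
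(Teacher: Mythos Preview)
Your approach is correct and takes a genuinely different route from the paper's. After the common reduction to $\sum_{i=1}^{t'}2^{a_i}\leq e\,2^{m-1}$ with $e=\ell-d+2$, the paper substitutes $x_i=a_i+d-m$, reduces to $t'=m$ with all tail coordinates equal to $x:=x_e$, and then splits on whether $x\leq\ell+1-m$; the harder subcase is handled by a monotonicity lemma $G_{m+1}(x)\leq G_m(x)$ that slides $m$ down to its smallest feasible value $\ell+2-x$, followed by a final optimization in $x$. Your route via the deficiencies $c_i=(m-1)-a_i$ and the tail estimate $(m-e)2^{-c_e}$ is more direct and bypasses that monotonicity-in-$m$ argument entirely; what it costs is that the extremal configurations are a little less explicitly pinned down. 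A cleaner way to execute your ``extreme configuration'' step---avoiding vertex enumeration---is to exploit the concavity of $\phi(x)=1-2^{-x}$ with $\phi(0)=0$: from $\phi(c_i)\geq(c_i/c_e)\phi(c_e)$ for $0\leq c_i\leq c_e$ one gets
\[
e-\sum_{i\leq e}2^{-c_i}=\sum_{i\leq e}\phi(c_i)\geq\frac{\phi(c_e)}{c_e}\sum_{i\leq e}c_i\geq\frac{(1-2^{-c_e})(m-e)}{c_e},
\]
so the target $\sum_{i\leq e}2^{-c_i}+(m-e)2^{-c_e}\leq e$ becomes exactly $c_e+1\leq 2^{c_e}$. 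Tracing equality then forces $c_e=1$, all $c_i\in\{0,1\}$, $\sum_{i\leq e}c_i=m-e$ (hence $m\leq 2e$), and---from your two tail bounds---$t'=m$.
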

\begin{proof}
The proof is presented in Appendix \ref{lem17}.
\end{proof}
Based on the proofs of Theorem \ref{r=3} and Lemma \ref{r3cond}, we can conclude the following necessary condition for the lower bound (\ref{IOr3}) to hold with equality.
\begin{corollary}\label{condr=3}
Let $\{g_j(x)\}_{j=1}^\ell$ be a repair scheme of node $0$ for ${\rm RS}(\mathcal{A},2^d-3)$. If $\{g_j(x)\}_{j=1}^\ell$ is  $(m,t)$-normalized with respect to $\mathcal{B}$ and meets the lower bound (\ref{IOr3}) with equality, it must have $t=m\leq2(\ell-d+2)$.
\end{corollary}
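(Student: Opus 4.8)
The plan is to obtain Corollary~\ref{condr=3} by tracking the single place where the proof of Theorem~\ref{r=3} passed from an exact expression for $\gamma_{I/O}$ to the bound~(\ref{IOr3}), namely the invocation of Lemma~\ref{r3cond}. Recall that for an $(m,t)$-normalized scheme of node~$0$ the proof of Theorem~\ref{r=3} produces the \emph{equality}
\[
\gamma_{I/O}=(n-1)\ell-2^{d-m}\sum_{s=1}^{t'}2^{a_s},
\]
where $a_s=\dim_B(U^{(s)})=\dim_B(W^{(s)})$, the indices are ordered so that $a_1\geq a_2\geq\cdots\geq a_{t'}$, and $t'=\bigl|\{s\in[t]:\beta^{(s)}W^{(s)}\subseteq K\}\bigr|$, so in particular $t'\leq t\leq m\leq\ell$. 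Hence the scheme meets~(\ref{IOr3}) with equality if and only if $2^{d-m}\sum_{s=1}^{t'}2^{a_s}=(\ell-d+2)2^{d-1}$, and the whole task is to read off the equality case of Lemma~\ref{r3cond}.

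The first step would be to confirm that the data $t',m,d,\ell$ and $a_1\geq\cdots\geq a_{t'}$ satisfy the hypotheses of Lemma~\ref{r3cond}. The bound $a_s\leq m-1$ for $s\in[t']$ is proved in Theorem~\ref{r=3} exactly as in the Claim of Theorem~\ref{r=2}. The inequality $\sum_{i=1}^{\min\{t',\ell-d+2\}}a_i\leq(\ell-d+1)m$ is the Claim of Theorem~\ref{r=3} when $t'\geq\ell-d+2$, and when $t'\leq\ell-d+1$ it is immediate from $a_i\leq m-1$, since then $\sum_{i=1}^{t'}a_i\leq t'(m-1)\leq(\ell-d+1)(m-1)<(\ell-d+1)m$. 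Finally one must dispose of the degenerate case $t'=0$: there $\gamma_{I/O}=(n-1)\ell$, which is strictly larger than $(n-1)\ell-(\ell-d+2)2^{d-1}$ (as $\ell-d+2\geq1$ and $2^{d-1}\geq1$), so equality in~(\ref{IOr3}) forces $t'\geq1$, and Lemma~\ref{r3cond} genuinely applies.

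With the hypotheses verified, I would invoke the equality clause of Lemma~\ref{r3cond}: the identity $2^{d-m}\sum_{s=1}^{t'}2^{a_s}=(\ell-d+2)2^{d-1}$ can hold only if $t'=m$ and $m\leq2(\ell-d+2)$. Since $t'\leq t\leq m$ and now $t'=m$, this forces $t=m$, and therefore $t=m\leq2(\ell-d+2)$, which is precisely the assertion of Corollary~\ref{condr=3}. There is no analytic difficulty here; the only point requiring care is the bookkeeping—checking that every hypothesis of Lemma~\ref{r3cond} has actually been established en route in the proof of Theorem~\ref{r=3}, and that the lemma's equality clause is phrased sharply enough to deliver both $t'=m$ and the bound $m\leq2(\ell-d+2)$ simultaneously (the detailed verification of that clause being deferred to Appendix~\ref{lem17}).
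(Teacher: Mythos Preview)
Your proposal is correct and matches the paper's approach exactly: the paper simply states that the corollary follows ``based on the proofs of Theorem~\ref{r=3} and Lemma~\ref{r3cond}'', and you have carefully filled in that deduction by verifying the hypotheses of Lemma~\ref{r3cond} (including the degenerate case $t'=0$) and then reading off its equality clause to obtain $t'=m\leq 2(\ell-d+2)$, whence $t=m$ from $t'\leq t\leq m$.
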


Our lower bound (\ref{IOr3}) is tight at $d=\ell$ and $d<\ell$ with $\ell-d+2\mid\ell$. Specifically, when $d=\ell$, our lower bound is $2^{\ell-3}$ higher than the lower bound derived in \cite{I/OFormula}, and is tight due to the Construction 1 given in \cite{I/OFormula}. When $d<\ell$ and $\ell-d+2\mid\ell$, we build an I/O-optimal repair scheme in {\bf Construction 2} in Section \ref{Sec5}. Next, we derive a lower bound on the repair bandwidth for I/O-optimal repair schemes.

\begin{theorem}\label{b-r=3-d}
Suppose $\mathcal{A}$ is a $d$-dimensional $B$-linear subspace of $F$, where $d=\ell$ or $d<\ell$ with $\ell-d+2\mid\ell$. The repair bandwidth, denoted by $b$, of any I/O-optimal linear repair scheme for ${\rm RS}(\mathcal{A},2^d-3)$ satisfies $b\geq(n-1)(d-1)-2^{2d-\ell-1}+\lfloor2^{3d-2\ell-4}\rfloor$.
\end{theorem}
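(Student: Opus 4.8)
The plan is to mimic the two-way counting argument used in the proof of Theorem~\ref{b-r=2-d}(iii), now applied to the $m\times t$ matrices $\hat{W}_i$ arising from an $(m,t)$-normalized, I/O-optimal repair scheme of node $0$ for ${\rm RS}(\mathcal{A},2^d-3)$. By Corollary~\ref{condr=3} any such scheme has $t=m$, so by \eqref{b-formula} the repair bandwidth is $b=(n-1)(\ell-m)+\sum_{i=1}^n b_i$ where $b_i={\rm rank}(\hat{W}_i)$, and the repair condition forces $b_1=m$ (the matrix $W_1$ must have rank $\ell$). Thus it suffices to lower bound $\sum_{i=2}^n b_i$, and for this I will count $\sum_{i=1}^n |\{{\bm u}\in B^m:{\bm u}\hat{W}_i={\bm 0}\}|$ in two ways exactly as in \eqref{new}--\eqref{eq21+}: on one hand it equals $\sum_{i=1}^n 2^{m-b_i}$, and on the other hand it equals $\sum_{{\bm u}\in B^m}|\{i\in[n]:{\bm u}\hat{W}_i={\bm 0}\}|$.

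The new feature, compared to the $r=2$ case, is that for ${\bm u}\neq{\bm 0}$ the polynomial $g_{\bm u}(x)=\lambda_{\bm u}x^2+\eta_{\bm u}x+\omega_{\bm u}$ has degree $\le 2$, hence $g_{\bm u}$ is at most $2$-to-$1$ on $F$ rather than a bijection. Consequently $|\{i\in[n]:{\bm u}\hat{W}_i={\bm 0}\}| = |g_{\bm u}(\mathcal{A})\cap\big(\bigcap_{j=1}^m(\beta^{(j)})^{-1}K\big)|$, and since $\bigcap_{j=1}^m(\beta^{(j)})^{-1}K$ has dimension $\ell-m$ by Lemma~\ref{lem10} while the fibre of $g_{\bm u}$ over any point has size $\le 2$, this quantity is at most $2\cdot 2^{\ell-m}=2^{\ell-m+1}$ (one must also note $g_{\bm u}(\mathcal{A})$ need not be a subspace, but $|g_{\bm u}(\mathcal{A})|\le 2^{d}$ and the preimage bound $2^{\ell-m+1}$ still applies). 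Summing the trivial term $i=\ldots$ for ${\bm u}={\bm 0}$, which contributes $n=2^d$, gives
\begin{equation*}
\sum_{i=2}^n 2^{m-b_i}\;\le\; 2^d+(2^m-1)2^{\ell-m+1}-2^{\ell-m}-1,
\end{equation*}
where the $-2^{\ell-m}-1$ removes the contributions already accounted for by $b_1=m$ (the term $i=1$ on the left) after subtracting $2^{m-m}=1$. Since $m=\ell-d+2$ in the case $\ell-d+2\mid\ell$ (and $m$ is likewise pinned down when $d=\ell$ by Corollary~\ref{condr=3} together with the tightness analysis), I then substitute this value of $m$ to turn the bound into a clean inequality on $\sum_{i=2}^n 2^{-b_i}$.

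Finally I convert the constraint on $\sum_{i=2}^n 2^{m-b_i}$ into a lower bound on $\sum_{i=2}^n b_i$ by the same integer-balancing heuristic used for Theorem~\ref{b-r=2-d}: since $x\mapsto 2^{-x}$ is convex and decreasing, $\sum b_i$ is minimized subject to the constraint when the $b_i$ are as balanced as possible, and because they are integers the minimizer has all $b_i$ in $\{c,c+1\}$ for the appropriate threshold $c$. Writing $n_c,n_{c+1}$ for the counts and solving the resulting linear system $n_c+n_{c+1}=n-1$, $2^{m-c}n_c+2^{m-c-1}n_{c+1}\le(\text{RHS})$ yields $\sum_{i=2}^n b_i\ge (n-1)c + n_{c+1}$ with $n_{c+1}$ an explicit expression involving $2^{2d-\ell-1}$ and a floor term $\lfloor 2^{3d-2\ell-4}\rfloor$; adding back $b_1=m$ and plugging into $b=(n-1)(\ell-m)+\sum_{i=1}^n b_i$ with $\ell-m=d-2$ produces $b\ge (n-1)(d-1)-2^{2d-\ell-1}+\lfloor 2^{3d-2\ell-4}\rfloor$ after simplification.

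\textbf{Main obstacle.} The delicate point is the fibre-size bookkeeping: because $g_{\bm u}$ is $2$-to-$1$, the bound $|g_{\bm u}(\mathcal{A})\cap\bigcap_j(\beta^{(j)})^{-1}K|\le 2^{\ell-m+1}$ may be loose, and getting the floor term $\lfloor 2^{3d-2\ell-4}\rfloor$ exactly right requires carefully tracking which $b_i$ can equal the smaller threshold value $c$ versus $c+1$ — equivalently, verifying that the extremal distribution of $\{b_i\}$ is actually realizable modulo the parity and divisibility constraints coming from $n-1=2^d-1$ being odd and from $m=\ell-d+2$. I expect the bulk of the work to be in this final integer-optimization step rather than in the character-sum / two-way-counting setup, which is essentially parallel to the $r=2$ argument already carried out in the paper.
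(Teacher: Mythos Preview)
Your overall strategy --- two-way counting of $\sum_{i}|\{{\bm u}:{\bm u}\hat{W}_i={\bm 0}\}|$ followed by an integer-balancing step --- is exactly the paper's approach, and your fibre bound $|\{i:{\bm u}\hat{W}_i={\bm 0}\}|\le 2^{\ell-m+1}$ for ${\bm u}\neq{\bm 0}$ is the right replacement for the bijection argument in the $r=2$ case. However, there is a genuine gap in how you handle $m$.

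You assert that ``$m=\ell-d+2$ in the case $\ell-d+2\mid\ell$'' and that $m$ is ``likewise pinned down'' when $d=\ell$. This is not what Corollary~\ref{condr=3} says: it only gives $t=m\le 2(\ell-d+2)$, so an I/O-optimal scheme may have \emph{any} $m\in[2(\ell-d+2)]$. Your argument, as written, only bounds $b$ for schemes with one particular value of $m$ and says nothing about schemes with larger $m$ --- which could in principle have smaller bandwidth (larger $m$ decreases the $(n-1)(\ell-m)$ term). The paper closes this gap with an additional monotonicity claim: if $b_{\min}^{(s)}$ denotes the minimum of $\sum_{i\ge 2}b_i$ subject to the constraint \eqref{28} at $m=s$, then $(n-1)(\ell-(s{+}1))+b_{\min}^{(s+1)}\le (n-1)(\ell-s)+b_{\min}^{(s)}$, proved by shifting every $b_i\mapsto b_i+1$. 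Hence the minimum over $m$ occurs at the \emph{largest} allowed value $m=2(\ell-d+2)$, not at $\ell-d+2$. The balancing step is then carried out at $m=2(\ell-d+2)$, where the threshold values turn out to be $b_i\in\{\ell-d+2,\ell-d+3\}$, and solving the resulting two-equation system yields the floor term $\lfloor 2^{3d-2\ell-4}\rfloor$.

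A minor bookkeeping point: your displayed inequality has a stray $-2^{\ell-m}$ term. The correct constraint (matching the paper's \eqref{28}) is $\sum_{i=2}^n 2^{m-b_i}\le 2^d+(2^m-1)2^{\ell-m+1}-1=2^{\ell+1}+2^d-2^{\ell-m+1}-1$; subtracting the $i=1$ contribution removes only $2^{m-b_1}=1$ from the left side, not an additional $2^{\ell-m}$.
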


\begin{proof}
Suppose $\{g_j(x)\}_{j=1}^\ell$ defines a repair scheme for node $\alpha_1=0$, which is $(m,t)$-normalized with respect to $\mathcal{B}$ and is I/O-optimal. We may assume $g_j(x)$ is defined as in (\ref{eq21}) for $j\in[\ell]$. By Corollary \ref{condr=3}, it must have $t=m\leq2(\ell-d+2)$. Similar to Theorem \ref{b-r=2-d}, $b$ can be estimated by calculating both sides of (\ref{eq31}). Specifically,
denote $b_i={\rm rank}(\hat{W}_i)$ for $i\in[n]$. The repair condition implies $\dim_B\big(\{g_j(0)\}_{j=1}^\ell\big)={\rm rank}(W_1)=\ell$, thus ${\rm rank}(W_1)={\rm rank}(\hat{W}_1)+\ell-m$ and $b_1={\rm rank}(\hat{W}_1)=m$. It is clear that
\begin{equation}\label{30}
	\sum_{i\in[n]}|\{{\bm u}\in B^m:{\bm u}\hat{W}_i={\bm 0}\}|=\sum_{i\in[n]}2^{m-b_i}=\sum_{i=2}^n2^{m-b_i}+1.
\end{equation}
Next, we compute the right side of (\ref{eq31}). For each ${\bm u}\in B^m$, it can be seen that ${\bm u}\hat{W}_i={\bm 0}$ if and only if $g_{\bm u}(\alpha_i)\in\bigcap_{s=1}^m(\beta^{(s)})^{-1}K$. If ${\bm u}\neq{\bm 0}$, $0<{\rm deg}(g_{\bm u})\leq 2$ due to $1)$ of Definition \ref{def}. Therefore, for any $\gamma\in\mathbb{F}_{2^\ell}$, $|g_{\bm u}^{-1}(\gamma)|\leq 2$. Thus, $|\{i\in[n]:{\bm u}\hat{W}_i={\bm 0}\}|=|\{\alpha\in\mathcal{A}:g_{\bm u}(\alpha)\in\bigcap_{s=1}^m(\beta^{(s)})^{-1}K\}|\leq|\{g_{\bm u}^{-1}(\gamma):\gamma\in\bigcap_{s=1}^m(\beta^{(s)})^{-1}K\}|\leq 2|\bigcap_{s=1}^m(\beta^{(s)})^{-1}K|=2^{\ell-m+1}$, where the last equality follows from Lemma \ref{lem10}. If ${\bm u}={\bm 0}$, $|\{i\in[n]:{\bm u}\hat{W}_i={\bm 0}\}|=n=2^{d}$. Therefore,
\begin{equation}\label{31}
	\sum_{{\bm u}\in B^m}\big|\{i\in[n]:{\bm u}\hat{W}_i={\bm 0}\}\big|=2^d+\!\!\!\sum_{{\bm u}\in B^m\setminus\{{\bm 0}\}}\!\!\!\big|\{i\in[n]:{\bm u}\hat{W}_i={\bm 0}\}\big|\leq 2^d+(2^m-1)2^{\ell-m+1}.
\end{equation}
Combining (\ref{eq31}), (\ref{30}) and (\ref{31}), it has
\begin{equation}\label{28}
	\sum_{i=2}^n 2^{m-b_i}\leq2^{\ell+1}+2^d-2^{\ell-m+1}-1.
\end{equation}
Let $b_{min}^{(m)}=\min_{0\leq b_i\leq m}\sum_{i=2}^n b_i$ subject to (\ref{28}). Recall that $m\in[2(\ell-d+2)]$ and $b_1=m$. It follows from (\ref{b-formula}) that
\begin{equation}\label{29}
	b\geq\min_{m\in[2(\ell-d+2)]} (n-1)(\ell-m)+b_{min}^{(m)}.
\end{equation}	
{\bf Claim.} As $m$ ranges in $[2(\ell-d+2)]$, $(n-1)(\ell-m)+b_{min}^{(m)}$ achieves the minimum at $m=2(\ell-d+2)$.	

{\it Proof of the claim. }
It suffices to show that $$(n-1)(\ell-(s+1))+b_{min}^{(s+1)}\leq (n-1)(\ell-s)+b_{min}^{(s)}$$ for $s\in[2(\ell-d+2)-1]$. Suppose $b_{min}^{(s)}=\sum_{i=2}^n b_i^{(s)}$, where $\{b_i^{(s)}\}_{i=2}^n$ satisfies (\ref{28}) at the case of $m=s$. Let $b_i^{(s+1)}=b_i^{(s)}+1$ for $i\in[2,n]$. It can be seen that

$$\sum_{i=2}^n 2^{s+1-b_i^{(s+1)}}=\sum_{i=2}^n 2^{s-b_i^{(s)}}\leq2^{\ell+1}+2^d-2^{\ell-s+1}-1< 2^{\ell+1}+2^d-2^{\ell-(s+1)+1}-1,$$
where the second inequality is due to $\{b_i^{(s)}\}_{i=2}^n$ satisfying (\ref{28}) at the case of $m=s$. Thus, $\{b_i^{(s+1)}\}_{i=2}^n$ satisfies (\ref{28}) at the case of $m=s+1$, and therefore, $b_{min}^{(s+1)}\leq\sum_{i=2}^n b_i^{(s+1)}$. Then,
\begin{align*}
	(n-1)(\ell-s-1)+b_{min}^{(s+1)}&\leq (n-1)(\ell-s-1)+\sum_{i=2}^n b_i^{(s+1)}\\
	                               &=(n-1)(\ell-s-1)+\sum_{i=2}^n (b_i^{(s)}+1)\\
	                               &=(n-1)(\ell-s)+b_{min}^{(s)}.
\end{align*}
The claim is proved.

To establish a lower bound on the repair bandwidth $b$, we only need to determine $b_{min}^{(m)}$ for $m=2(\ell-d+2)$ according to (\ref{29}) and the claim. Let $m=2(\ell-d+2)$. One can check that $d>\ell-m+1$ and the inequality \eqref{28} holds when $b_2=\cdots=b_n=\ell-d+3$. Moreover, when $d=\ell$ or $d<\ell$ with $\ell-d+2\mid\ell$, it always has $2\leq d\leq\ell$. Then, one can verify that the inequality \eqref{28} does not hold when $b_2=\cdots=b_n=\ell-d+2$, so the minimum of $\sum_{i=2}^nb_i$ arrives as $b_i\in\{\ell-d+2,\ell-d+3\}$ for $i\in[2,n]$. Assume $n_0=|\{i\in[2,n]:b_i=\ell-d+2\}|$ and $n_1=|\{i\in[2,n]:b_i=\ell-d+3\}|$ such that $b_{min}^{(m)}=n_0(\ell-d+2)+n_1(\ell-d+3)$, we know
$$
\begin{cases}
	n_0+n_1=2^d-1\\
	2^{\ell-d+2}n_0+2^{\ell-d+1}n_1\leq 2^{\ell+1}+2^d-2^{\ell-2(\ell-d+2)+1}-1\\
\end{cases}\;.
$$
Then, $n_1\geq\lceil 2^d-2^{2d-\ell-1}+2^{3d-2\ell-4}-2+\frac{1}{2^{\ell-d+1}}\rceil=2^d-2^{2d-\ell-1}-2+\lceil2^{3d-2\ell-4}+\frac{1}{2^{\ell-d+1}}\rceil= 2^d-2^{2d-\ell-1}+\lfloor2^{3d-2\ell-4}\rfloor-1$. Therefore, $$b_{min}^{(m)}=n_0(\ell-d+2)+n_1(\ell-d+3)=(2^d-1)(\ell-d+2)+n_1\geq(2^d-1)(\ell-d+3)-2^{2d-\ell-1}+\lfloor2^{3d-2\ell-4}\rfloor.$$ Combining with (\ref{29}), the theorem is proved.
\end{proof}

\subsection{An I/O optimal Repair scheme with Lower Repair Bandwidth}
From Corollary \ref{condr=3} it implies that any $(m,t)$-normalized repair scheme achieving the optimal I/O cost must satisfy $t=m$. Then according to (\ref{b-formula}), the repair bandwidth of the I/O-optimal repair scheme can be calculated as $b=(n-1)\ell-nm+\sum_{i=1}^n{\rm rank}(\hat{W}_i)$, from which one can see an increase in $m$ may result in a lower bandwidth. Meanwhile, Corollary \ref{condr=3} also gives an upper bound on $m$ for the I/O-optimal scheme, which implies $m\leq 4$ in the case of $d=\ell$. Considering the full-length RS code ${\rm RS}(\mathbb{F}_{2^\ell},2^\ell-3)$, the repair scheme presented in \cite{I/OFormula} achieves the optimal I/O cost with $m=2$, and its repair bandwidth equals the I/O cost. To reduce the bandwidth, we here present an I/O-optimal repair scheme with $m=4$.

\begin{construction}\label{cons1}
Assume $2\mid\ell$ and $\ell\geq 4$. Let $\zeta\in\mathbb{F}_{4}$ be a root of $x^2+x+1$ and $\theta$ be a primitive element of $\mathbb{F}_{2^\ell}$.
\begin{itemize}
  \item[(1)] Select a basis $\mathcal{B}=\{\beta^{(i)}:i\in[\ell]\}$ of $\mathbb{F}_{2^\ell}$ over $\mathbb{F}_{2}$.
	
Particularly, set $\beta^{(1)}=(\theta^2+(\zeta+1)\theta+1)^2,\beta^{(2)}=(\zeta\theta)^2,\beta^{(3)}=1,\beta^{(4)}=(\theta+1)^2$, and then extend $\{\beta^{(1)},\beta^{(2)},\beta^{(3)},\beta^{(4)}\}$ to  $\mathcal{B}=\{\beta^{(1)},...,\beta^{(\ell)}\}$ which forms a basis of $\mathbb{F}_{2^\ell}$ over $\mathbb{F}_2$ \footnote{One can always choose  $\theta$ such that $\beta^{(1)},...,\beta^{(4)}$ are linearly independent over $\mathbb{F}_2$. Note that ${\rm dim}_{\mathbb{F}_2}\big(\{\beta^{(1)},...,\beta^{(4)}\}\big)=4$ if and only if ${\rm dim}_{\mathbb{F}_2}\big(\{\theta^2+(\zeta+1)\theta+1,\zeta\theta,1,\theta+1\}\big)=4$, or equivalently, ${\rm dim}_{\mathbb{F}_2}\big(\{\theta^2,\zeta\theta,1,\theta\}\big)=4$. If $\ell\geq 6$, since $\mathbb{F}_{2^\ell}=\mathbb{F}_{4}(\theta)$ and $[\mathbb{F}_{4}(\theta):\mathbb{F}_4]\geq 3$, then it obviously has ${\rm dim}_{\mathbb{F}_2}\big(\{\theta^2,\zeta\theta,1,\theta\}\big)=4$. If $\ell=4$, let $\theta$ be a root of $x^2+x+\zeta=0$. Then, $\{\theta^2,\zeta\theta,1,\theta\}=\{\theta+\zeta,\zeta\theta,1,\theta\}$, which is a basis of $\mathbb{F}_{2^4}$ over $\mathbb{F}_2$.} .
	
\item[(2)] Construct the repair polynomials $\{g_j(x):j\in[\ell]\}$ for node $0$ of ${\rm RS}(\mathbb{F}_{2^\ell},2^\ell-3)$.
	
Let $\hat{\mathcal{B}}=\{\gamma^{(i)}:i\in[\ell]\}$ be the dual basis of $\mathcal{B}$. Define	
$$ g_j(x)=\begin{cases}
	\lambda_jx^2+\eta_jx+\omega_j &j\in[4]\\
	\gamma^{(j)}	&j\in[5,\ell]	\\
\end{cases},$$
where $\eta_1=1,\eta_2=\zeta,\eta_3=\zeta\theta,\eta_4=\theta$, $\omega_1=\gamma^{(3)},\omega_2=\gamma^{(2)}+\gamma^{(4)},\omega_3=\gamma^{(1)}+\gamma^{(3)},\omega_4=\gamma^{(4)}$, and
\begin{equation}\label{eq29}
	\lambda_j=\begin{cases}
		\eta_j^2\beta^{(1)} &j\in\{1,2\}\\
		\eta_j^2\beta^{(2)}	&j\in\{3,4\}	\\
	\end{cases}.
\end{equation}

\end{itemize}
\end{construction}
By simple computation, one can see that $\lambda_1=\theta^4+\zeta\theta^2+1,\lambda_2=(\zeta+1)\theta^4+\theta^2+\zeta^2,
\lambda_3=\zeta\theta^4,\lambda_4=\zeta^2\theta^4$. Moreover, it can be verified that
\begin{equation}\label{eq34}
\begin{cases}
  \lambda_1+\lambda_3+\lambda_4=\beta^{(3)}(\eta_1+\eta_3+\eta_4)^2\\
  \lambda_2+\lambda_4=\beta^{(3)}(\eta_2+\eta_4)^2\\
  \lambda_2+\lambda_3+\lambda_4=\beta^{(4)}(\eta_2+\eta_3+\eta_4)^2\\
  \lambda_1+\lambda_3=\beta^{(4)}(\eta_1+\eta_3)^2
\end{cases}.
\end{equation}
These relations will be used in later proofs.

Noting that $\{g_j(0)\}_{j=1}^\ell=\{\gamma^{(3)},\gamma^{(2)}+\gamma^{(4)},\gamma^{(1)}+\gamma^{(3)},\gamma^{(4)},\gamma^{(5)},...,\gamma^{(\ell)}\}$, the repair condition is obviously satisfied since $\{\gamma^{(1)},...,\gamma^{(\ell)}\}$ is a basis of $\mathbb{F}_{2^{\ell}}$ over $\mathbb{F}_2$. Hence $\{g_j(x)\}_{j=1}^\ell$ defines a repair scheme for node $\alpha_1=0$ in ${\rm RS}(\mathbb{F}_{2^\ell},2^\ell-3)$. Next, we estimate the I/O cost and the repair bandwidth of this repair scheme.
	
It can be seen that the coefficients of $x$ in $\{g_j(x)\}_{j=1}^4$ are $\{1,\zeta,\zeta\theta,\theta\}$, which are linearly independent over $\mathbb{F}_2$. Thus, ${\rm deg}(g_{\bm u})>0$ for ${\bm u}\in \mathbb{F}_2^4\setminus\{{\bm 0}\}$. For $j\in[5,\ell]$, $g_j(x)=\gamma^{(j)}$ and ${\rm supp}(\Phi_{\hat{\mathcal{B}}}(\gamma^{(j)}))=\{j\}$, hence $\bigcup_{j=5}^{\ell}{\rm supp}(\Phi_{\hat{\mathcal{B}}}(\omega_j))=[5,\ell]$. Therefore, $\{g_j(x)\}_{j=1}^{\ell}$ is $(4,4)$-normalized with respect to $\mathcal{B}$. Then according to (\ref{r3}), the I/O cost of $\{g_j(x)\}_{j=1}^\ell$ with respect to $\mathcal{B}$ is
\begin{equation}\label{35}
(n-1)\ell-2^{\ell-4}\sum_{s=1}^{t'}2^{a_s},
\end{equation}		
where $a_s={\rm dim}_B(U^{(s)})$, $t'=|\{s\in[t]:\beta^{(s)}W^{(s)}\subseteq K\}|$, and $U^{(s)},W^{(s)}, K$ are defined as in Theorem \ref{r=3}. Since $W^{(s)}=\{\omega_{\bm u}:{\bm u}\in U^{(s)}\}$ is fully determined by $U^{(s)}$, determining $U^{(s)}$ is sufficient for calculating the I/O cost. The following claim helps to characterize the $U^{(s)}$.
	
{\bf Claim.} For $s\in[4]$, $U^{(s)}=\{{\bm u}\in \mathbb{F}_2^4:\lambda_{{\bm u}}=\beta^{(s)}\eta_{{\bm u}}^2\}$.
	
{\it proof of the claim.}
Recall the definition of $U^{(s)}$, i.e., $U^{(s)}=\{{\bm u}\in \mathbb{F}_2^4:\beta^{(s)}L_{{\bm u}}(\mathbb{F}_{2^\ell})\subseteq K\}$, where $L_{{\bm u}}(x)=\lambda_{{\bm u}}x^2+\eta_{{\bm u}}x$. It implies $U^{(s)}=\{{\bm u}\in \mathbb{F}_2^4:\forall\alpha\in\mathbb{F}_{2^\ell}, {\rm Tr}(\beta^{(s)}\lambda_{{\bm u}}\alpha^2)={\rm Tr}(\beta^{(s)}\eta_{{\bm u}}\alpha)\}$. Given that ${\rm Tr}(\beta^{(s)}\eta_{{\bm u}}\alpha)={\rm Tr}\big((\beta^{(s)}\eta_{{\bm u}}\alpha)^2\big)$, we can deduce $$U^{(s)}=\{{\bm u}\in \mathbb{F}_2^4:\forall\alpha\in\mathbb{F}_{2^\ell}, {\rm Tr}(\beta^{(s)}\lambda_{{\bm u}}\alpha^2)={\rm Tr}\big((\beta^{(s)}\eta_{{\bm u}}\alpha)^2\big)\}=\{{\bm u}\in \mathbb{F}_2^4:\forall\alpha\in\mathbb{F}_{2^\ell}, {\rm Tr}\big((\lambda_{{\bm u}}-\beta^{(s)}\eta_{{\bm u}}^2)\beta^{(s)}\alpha^2\big)=0\}.$$ Since $\{\beta^{(s)}\alpha^2:\alpha\in\mathbb{F}_{2^\ell}\}=\mathbb{F}_{2^\ell}$, it follows $U^{(s)}=\{{\bm u}\in\mathbb{F}_2^4:\lambda_{{\bm u}}=\beta^{(s)}\eta_{{\bm u}}^2\}$ and the claim is proved.
	
\begin{proposition}\label{prop}
The I/O cost with respect to $\mathcal{B}$ of the repair scheme given in Construction \ref{cons1} for ${\rm RS}(\mathbb{F}_{2^\ell},2^\ell-3)$ is optimal according to Theorem \ref{r=3}, i.e.,
\begin{equation*}
	\gamma_{I/O}=(n-1)\ell-2^\ell,
\end{equation*}
where $n=2^\ell$. Moreover, the repair bandwidth $b$ of this scheme satisfies $(n-1)(\ell-1)-2^{\ell-1}+2^{\ell-4}\leq b\leq (n-1)\ell-2^\ell$.
\end{proposition}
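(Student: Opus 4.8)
The plan is to evaluate the I/O cost exactly through \eqref{35} and then read off the two bandwidth inequalities. By \eqref{35}, $\gamma_{I/O}=(n-1)\ell-2^{\ell-4}\sum_{s=1}^{t'}2^{a_s}$, so the whole task is to pin down the four subspaces $U^{(1)},\dots,U^{(4)}$ (hence the dimensions $a_s=\dim_B U^{(s)}=\dim_B W^{(s)}$) and to decide which of them satisfy $\beta^{(s)}W^{(s)}\subseteq K$. The starting point is the Claim preceding the proposition, which gives $U^{(s)}=\{{\bm u}\in\mathbb{F}_2^4:\lambda_{\bm u}=\beta^{(s)}\eta_{\bm u}^2\}$ for $s\in[4]$. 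Writing ${\bm u}=(u_1,u_2,u_3,u_4)$, using \eqref{eq29} and $\eta_{\bm u}^2=\sum_j u_j\eta_j^2$, this defining condition turns into the single $\mathbb{F}_{2^\ell}$-identity $(u_1\eta_1^2+u_2\eta_2^2)(\beta^{(1)}+\beta^{(s)})+(u_3\eta_3^2+u_4\eta_4^2)(\beta^{(2)}+\beta^{(s)})=0$, where I will use $\eta_1^2=1,\eta_2^2=\zeta^2,\eta_3^2=\zeta^2\theta^2,\eta_4^2=\theta^2$.

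For $s\in\{1,2\}$ the identity collapses immediately (since $\beta^{(1)}\neq\beta^{(2)}$) to two independent $\mathbb{F}_2$-linear conditions, giving $U^{(1)}=\langle e_1,e_2\rangle$ and $U^{(2)}=\langle e_3,e_4\rangle$, each of dimension $2$ (here $e_1,\dots,e_4$ is the standard basis of $\mathbb{F}_2^4$). For $s\in\{3,4\}$ I would substitute the explicit values $\beta^{(1)}=\theta^4+\zeta\theta^2+1$, $\beta^{(2)}=\zeta^2\theta^2$, $\beta^{(3)}=1$, $\beta^{(4)}=\theta^2+1$, factor out $\theta^2$, and rewrite the remaining factor as $c_1\theta^2+c_0$ with $c_0,c_1\in\mathbb{F}_4$ depending $\mathbb{F}_2$-linearly on ${\bm u}$. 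Because $\theta$ is primitive in $\mathbb{F}_{2^\ell}$ and $\ell\geq4$, we have $\theta^6\neq1$, so $\theta^2\notin\mathbb{F}_4$ and $\{1,\theta^2\}$ is $\mathbb{F}_4$-linearly independent; hence the identity splits into $c_0=c_1=0$. A short computation shows $c_0$ is a fixed $\mathbb{F}_4$-multiple of $c_1$ (namely $\zeta c_1$ for $s=3$ and $\zeta^2c_1$ for $s=4$, using $\zeta^3=1$), so the condition is just $c_1=0$, one $\mathbb{F}_4$-valued $\mathbb{F}_2$-linear form that is surjective onto $\mathbb{F}_4$; its kernel therefore has dimension $2$. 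Concretely this yields $U^{(3)}=\langle e_1+e_2+e_3,\,e_1+e_3+e_4\rangle$ and $U^{(4)}=\langle e_1+e_3,\,e_2+e_3+e_4\rangle$, whose spanning vectors are precisely the ones handed over for free by \eqref{eq34}. Thus $a_1=a_2=a_3=a_4=2$.

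Next I would compute each $W^{(s)}=\{\omega_{\bm u}:{\bm u}\in U^{(s)}\}$ from $\omega_{\bm u}=u_3\gamma^{(1)}+u_2\gamma^{(2)}+(u_1+u_3)\gamma^{(3)}+(u_2+u_4)\gamma^{(4)}$; substituting the four bases obtained above gives $W^{(1)}=\langle\gamma^{(3)},\gamma^{(2)}+\gamma^{(4)}\rangle$, $W^{(2)}=\langle\gamma^{(1)}+\gamma^{(3)},\gamma^{(4)}\rangle$, $W^{(3)}=\langle\gamma^{(2)},\gamma^{(1)}+\gamma^{(4)}\rangle$, $W^{(4)}=\langle\gamma^{(1)},\gamma^{(2)}+\gamma^{(3)}\rangle$. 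In each case $W^{(s)}$ is spanned by $\mathbb{F}_2$-combinations of $\gamma^{(1)},\dots,\gamma^{(4)}$ that do not involve $\gamma^{(s)}$, so by duality ${\rm Tr}(\beta^{(s)}\gamma^{(j)})=\delta_{sj}$ we get $\beta^{(s)}W^{(s)}\subseteq K$ for every $s\in[4]$; hence $t'=4$. Plugging $t'=4$ and $a_1=\dots=a_4=2$ into \eqref{35} yields $\gamma_{I/O}=(n-1)\ell-2^{\ell-4}\cdot4\cdot2^2=(n-1)\ell-2^\ell$, which is exactly the lower bound of Theorem \ref{r=3} at $d=\ell$, so the scheme is I/O-optimal.

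For the bandwidth, the upper bound $b\leq(n-1)\ell-2^\ell$ is the general fact that ${\rm rank}(W_i)\leq{\rm nz}(W_i)$, so $b=\sum_i{\rm rank}(W_i)-\ell\leq\sum_i{\rm nz}(W_i)-\ell=\gamma_{I/O}$, and we have just shown $\gamma_{I/O}=(n-1)\ell-2^\ell$. The lower bound is then immediate from Theorem \ref{b-r=3-d}: the scheme is I/O-optimal with $d=\ell$, so $b\geq(n-1)(\ell-1)-2^{2\ell-\ell-1}+\lfloor2^{3\ell-2\ell-4}\rfloor=(n-1)(\ell-1)-2^{\ell-1}+2^{\ell-4}$, where $\lfloor2^{\ell-4}\rfloor=2^{\ell-4}$ since $\ell\geq4$. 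I expect the main obstacle to be the exact determination of $U^{(3)}$ and $U^{(4)}$: relations \eqref{eq29}--\eqref{eq34} supply a $2$-dimensional subspace of each $U^{(s)}$ for free, but the reverse inclusion $\dim_B U^{(s)}\leq2$ needs the coefficient extraction over $\mathbb{F}_4$ and the input-specific fact $\theta^2\notin\mathbb{F}_4$ (which also has to be confirmed in the edge case $\ell=4$, where $[\mathbb{F}_4(\theta):\mathbb{F}_4]=2$ still forces $\theta^2\notin\mathbb{F}_4$ because $\theta$ is primitive); once the four subspaces are fixed, everything else is bookkeeping with the dual basis and a direct appeal to Theorems \ref{r=3} and \ref{b-r=3-d}.
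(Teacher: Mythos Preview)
Your proposal is correct and follows the same overall strategy as the paper: determine the four subspaces $U^{(s)}$, compute the corresponding $W^{(s)}$, verify $\beta^{(s)}W^{(s)}\subseteq K$ via duality, plug $t'=4$ and $a_1=\cdots=a_4=2$ into \eqref{35}, and then invoke Theorem~\ref{b-r=3-d} together with $b\le\gamma_{I/O}$ for the bandwidth bounds.

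The one place where you diverge from the paper is in establishing the upper bound $\dim_B U^{(s)}\le 2$. You do this by an explicit coefficient extraction over $\mathbb{F}_4$, factoring out $\theta^2$ and using $\theta^2\notin\mathbb{F}_4$ to split the defining identity. The paper instead argues more conceptually: if ${\bm u}\in U^{(i)}\cap U^{(j)}$ with $i\neq j$, then $\beta^{(i)}\eta_{\bm u}^2=\lambda_{\bm u}=\beta^{(j)}\eta_{\bm u}^2$ forces $\eta_{\bm u}=0$, hence ${\bm u}={\bm 0}$ (since $\eta_1,\dots,\eta_4$ are $\mathbb{F}_2$-independent). Thus the four $U^{(s)}$ are pairwise trivially intersecting subspaces of $\mathbb{F}_2^4$, and combined with $\dim_B U^{(s)}\ge 2$ from \eqref{eq29} and \eqref{eq34} this immediately gives $\dim_B U^{(s)}=2$. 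The paper's route is shorter and does not use the explicit values of $\beta^{(3)},\beta^{(4)}$ at all, whereas your computation is more hands-on but equally valid and lands on the same subspaces.
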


\begin{proof}
We will show that $t'=4$ and $a_s=2$ for $s\in[4]$, where $t'$ and $a_s$ are defined as in Theorem \ref{r=3}. Then, the I/O cost of $\{g_j(x)\}_{j=1}^\ell$ with respect to $\mathcal{B}$ is $(n-1)\ell-2^\ell$ according to (\ref{35}), which is optimal. Additionally, the bounds on the repair bandwidth $b$ follow from Theorem \ref{b-r=3-d} and the fact that $b\leq \gamma_{I/O}$.
		
We first show that $a_s=2$, i.e., ${\rm dim}_{\mathbb{F}_2}(U^{(s)})=2$ for $s\in[4]$. Based on the above claim, we know $U^{(s)}=\{{\bm u}\in \mathbb{F}_2^4:\lambda_{{\bm u}}=\beta^{(s)}\eta_{{\bm u}}^2\}$, $s\in[4]$. Denote ${\bm e}_i=(0,...,\stackrel{i}{1},...,0)\in \mathbb{F}_2^4$. From (\ref{eq29}), it directly follows that ${\bm e}_1,{\bm e}_2\in U^{(1)}$ and ${\bm e}_3,{\bm e}_4\in U^{(2)}$. Also, from (\ref{eq34}) one can see that ${\bm e}_1+{\bm e}_3+{\bm e}_4,{\bm e}_2+{\bm e}_4\in U^{(3)}$ and ${\bm e}_2+{\bm e}_3+{\bm e}_4,{\bm e}_1+{\bm e}_3\in U^{(4)}$. Therefore, ${\rm dim}_{{\mathbb{F}_2}}(U^{(s)})\geq2$ for $s\in[4]$. Since $\beta^{(1)},...,\beta^{(4)}$ are linearly independent, we know $U^{(1)},U^{(2)},U^{(3)},U^{(4)}$ are pairwise disjoint. Hence, $4\leq{\rm dim}_{{\mathbb{F}_2}}(U^{(i)})+{\rm dim}_{{\mathbb{F}_2}}(U^{(j)})={\rm dim}_{{\mathbb{F}_2}}(U^{(i)}+U^{(j)})\leq 4$ for different $i,j\in[4]$, which implies that ${\rm dim}_{{\mathbb{F}_2}}(U^{(s)})=2$ for $s\in[4]$ and
\begin{equation}\label{Us}\begin{cases}
	U^{(1)}={\rm span}_{\mathbb{F}_2}\big(\{{\bm e}_1,{\bm e}_2\}\big)\\
	U^{(2)}={\rm span}_{\mathbb{F}_2}\big(\{{\bm e}_3,{\bm e}_4\}\big)\\
	U^{(3)}={\rm span}_{\mathbb{F}_2}\big(\{{\bm e}_1+{\bm e}_3+{\bm e}_4,{\bm e}_2+{\bm e}_4\}\big)\\
	U^{(4)}={\rm span}_{\mathbb{F}_2}\big(\{{\bm e}_2+{\bm e}_3+{\bm e}_4,{\bm e}_1+{\bm e}_3\}\big)
\end{cases}\;.\end{equation}
		
We now show that $t'=4$, i.e., $\beta^{(s)}W^{(s)}\subseteq K$ for $s\in[4]$. Specifically, $W^{(s)}=\{\omega_{\bm u}:{\bm u}\in U^{(s)}\}$ which can be easily computed from $U^{(s)}$, i.e.,
$$\begin{cases}
	W^{(1)}={\rm span}_{\mathbb{F}_2}\big(\{\omega_{1},\omega_2\}\big)={\rm span}_{\mathbb{F}_2}\big(\{\gamma^{(3)},\gamma^{(2)}+\gamma^{(4)}\}\big)\\
	W^{(2)}={\rm span}_{\mathbb{F}_2}\big(\{\omega_{3},\omega_4\}\big)={\rm span}_{\mathbb{F}_2}\big(\{\gamma^{(1)}+\gamma^{(3)},\gamma^{(4)}\}\big)\\
	W^{(3)}={\rm span}_{\mathbb{F}_2}\big(\{\omega_{1}+\omega_{3}+\omega_{4},\omega_2+\omega_{4}\}\big)={\rm span}_{\mathbb{F}_2}\big(\{\gamma^{(1)}+\gamma^{(4)},\gamma^{(2)}\}\big)\\
	W^{(4)}={\rm span}_{\mathbb{F}_2}\big(\{\omega_{2}+\omega_{3}+\omega_{4},\omega_1+\omega_{3}\}\big)={\rm span}_{\mathbb{F}_2}\big(\{\gamma^{(1)}+\gamma^{(2)}+\gamma^{(3)},\gamma^{(1)}\}\big)
\end{cases}\;.$$
Since $\{\gamma^{(1)},...,\gamma^{(\ell)}\}$ is the dual basis of $\{\beta^{(1)},...,\beta^{(\ell)}\}$, it obviously holds that $\beta^{(s)}W^{(s)}\subseteq K$, $s\in[4]$.
\end{proof}

Although we cannot prove that Construction \ref{cons1} always has a repair bandwidth lower than the I/O cost, it is likely to occur with a proper choice of the primitive element $\theta$. According to (\ref{b-formula}), the repair bandwidth depends on the exact value of $\sum_{i=1}^n{\rm rank}(\hat{W}_i)$. Combining with (\ref{7}), if there exists some $i\in[n]$ such that ${\rm rank}(\hat{W}_i)<{\rm nz}(\hat{W}_i)$, then the repair bandwidth is lower than the I/O cost. We provide such an example in the following.

\begin{example}
Set $\ell=4$ and let $\theta$ be a root of $x^2+x+\zeta$ in Construction \ref{cons1}.
Then, it can be checked that the basis $\mathcal{B}=\{\beta^{(1)}=\theta^{14},\beta^{(2)}=\theta^{12},\beta^{(3)}=1,\beta^{(4)}=\theta^{8}\}$ and the dual basis $\hat{\mathcal{B}}=\{\gamma^{(1)}=\theta^{8},\gamma^{(2)}=\theta^{2},\gamma^{(3)}=\theta^{11},\gamma^{(4)}=\theta^{5}\}$.
According to Construction \ref{cons1}, it has
\begin{equation*}
	\begin{cases}
		g_1(x)=\theta^{14}x^2+x+\gamma^{(3)}\\
		g_2(x)=\theta^{9}x^2+\theta^{5}x+\gamma^{(2)}+\gamma^{(4)}\\
		g_3(x)=\theta^{9}x^2+\theta^{6}x+\gamma^{(1)}+\gamma^{(3)}\\
		g_4(x)=\theta^{14}x^2+\theta x+\gamma^{(4)}
	\end{cases}\;.
\end{equation*}
Denote $\mathbb{F}_{2^4}=\{\alpha_1=0,\alpha_2...,\alpha_{16}\}$ where $\alpha_i=\theta^{i-1}$ for $i\in[2,16]$.
Recall that  $\hat{W}_i$ is defined as $\hat{W}_i=\big({\rm Tr}(g_{\mu}(\alpha_i)\beta^{(\nu)})\big)_{\mu,\nu\in[4]}$. By a detailed computation, one can see
that
\begin{equation*}
	\hat{W}_7=\begin{pmatrix}
			0 & 1 & 1 & 1\\
			0 & 0 & 0 & 0\\	
			0& 0 & 1 & 1\\	
			1& 0 & 0& 1\\	
	\end{pmatrix},	\hat{W}_9=\begin{pmatrix}
	0 & 1 & 1 & 0\\
	0 & 0 & 1 & 1\\	
	1& 0 & 0 & 0\\	
	1& 0 & 1& 1\\	
	\end{pmatrix}, 	\hat{W}_{11}=\begin{pmatrix}
	0 & 1 & 1 & 0\\
	0 & 1 & 1 & 0\\	
	0& 0 & 0 & 0\\	
	1& 0 & 1& 0\\	
	\end{pmatrix},	\end{equation*}
are all the $\hat{W}_i$'s satisfying ${\rm rank}(\hat{W}_i)<{\rm nz}(\hat{W}_i)$. Therefore, the repair scheme in this case has a repair bandwidth of $41$, which is less than the I/O cost of $44$.
\end{example}

Furthermore, we examine the repair bandwidth and I/O cost of our schemes with additional examples by running a program using inherent primitive elements. The results show that, although the bandwidth of our scheme does not reach the lower bound established in Theorem \ref{b-r=3-d}, it outperforms existing schemes. In Table \ref{table3}, we present a comparison of the repair bandwidth and I/O cost between Construction \ref{cons1} and previous schemes for ${\rm RS}(\mathbb{F}_{2^\ell},2^\ell-3)$ at $\ell=2e$ with $e$ ranging from 2 to 7.

\begin{table}[H]
\caption{\scriptsize Comparison of repair bandwidth and I/O cost (in bit) of the repair schemes for ${\rm RS}(\mathbb{F}_{2^\ell},2^\ell-3)$.}\label{table3}
\setlength{\belowcaptionskip}{6pt}
	\begin{subtable}{.5\linewidth}
		\renewcommand{\arraystretch}{1.2}
		\centering
		\setlength{\tabcolsep}{1mm}
		{        \begin{tabular}{|l|c|c|c|c|c|c|}
				\hline
				~~~~~~~~~~~~$n$ & $2^4$ & $2^6$ & $2^8$ &  $2^{10}$ & $2^{12}$ &$2^{14}$\\
				\hline
				Schemes in \cite{RSrepair,obRS} & 45 & 315& 1785 & 9207 & 45045 & 212979 \\
				\hline
				Scheme in \cite{I/OFormula} & 44 & 314 & 1784 & 9206 &45044 & 212978\\
				\hline
				Construction \ref{cons1} & \textbf{41} & \textbf{300} & \textbf{1733} & \textbf{9002} &\textbf{44228} & \textbf{209714}\\
				\hline
		\end{tabular}}
	\caption{Comparison of the repair bandwidth}
    \end{subtable}%
\setlength{\belowcaptionskip}{6pt}
    \begin{subtable}{.5\linewidth}
		\renewcommand{\arraystretch}{1.2}
		\centering
		\setlength{\tabcolsep}{0.8mm}
		{\begin{tabular}{|l|c|c|c|c|c|c|}
				\hline
				~~~~~~~~~~~~$n$ & $2^4$ & $2^6$ & $2^8$ &  $2^{10}$ & $2^{12}$ &$2^{14}$\\
				\hline
				Schemes in \cite{RSrepair,obRS} & 56 & 372& 2032 & 10220 & 49128 & 229348 \\
				\hline
				Scheme in \cite{I/OFormula} & 44 & 314 & 1784 & 9206 &45044 & 212978\\
				\hline
				Construction \ref{cons1} & 44 & 314 & 1784 & 9206 &45044 & 212978\\
				\hline
		\end{tabular}}
	\caption{Comparison of the I/O cost}
    \end{subtable}%
\end{table}

\section{Linear repair scheme for RS codes evaluated on a $B$-linear subspace}\label{Sec5}
In this section, we construct a family of RS codes ${\rm RS}(\mathcal{A},k)$ that have repair schemes with reduced I/O cost, where $\mathcal{A}$ is a $d$-dimensional $B$-linear subspace of $F$.  When $n-k=2$, with $\ell-d+1\mid\ell$, or when $n-k=3$, with $\ell-d+2\mid\ell$, the I/O cost of our scheme matches the lower bounds established in Theorem \ref{r=2} and Theorem \ref{r=3}, respectively.

Before presenting the construction, we recall a lemma from \cite{I/OFormula}, which is used to select normalized polynomials.
\begin{lemma}[\cite{I/OFormula}]\label{q-poly}
Assume $1\leq t<\ell$ and $\beta_1,...,\beta_t\in F$ are linearly independent over $B$. Let $(\theta_0,\theta_1,...,\theta_t)$ with $\theta_t\neq 0$ be a solution to the system
\begin{equation*}
	\left( {\begin{array}{cccc} \beta_1 & \beta_1^{q} &  \cdots &\beta_1^{q^t}\\
			\beta_2 & \beta_2^{q} &  \cdots &\beta_2^{q^t}\\
			\vdots & \vdots & \ddots & \vdots \\
			\beta_t & \beta_t^{q} &  \cdots &\beta_t^{q^t}\end {array}}\right) \left({\begin{array}{c}
				\theta_t  \\
				\theta_{t-1}^q\\
				\vdots\\
				\theta_0^{q^t}
		\end{array}}\right)={\bm 0}.
\end{equation*}
Define $L(x)=\sum_{j=0}^t \theta_j x^{q^{j}}$. Then $L(F)=\bigcap_{i=1}^t \beta_i^{-1}K$.
\end{lemma}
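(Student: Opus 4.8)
The plan is to prove the two inclusions $L(F)\subseteq\bigcap_{i=1}^t\beta_i^{-1}K$ and $L(F)\supseteq\bigcap_{i=1}^t\beta_i^{-1}K$, obtaining the second one for free from a dimension count once the first is in hand. To set up the count, note first that $L(x)=\sum_{j=0}^t\theta_jx^{q^j}$ is a $q$-linearized polynomial, so $L\colon F\to F$ is $B$-linear and $L(F)$ is a $B$-subspace of $F$. Since a nonzero polynomial of degree $q^t$ has at most $q^t$ roots and $\theta_t\neq 0$, we get $|\ker L|\leq q^t$, hence $\dim_B L(F)=\ell-\dim_B\ker L\geq\ell-t$. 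On the other hand, because $\beta_1,\dots,\beta_t$ are linearly independent over $B$, Lemma~\ref{lem10} gives $\dim_B\big(\bigcap_{i=1}^t\beta_i^{-1}K\big)=\ell-t$. Thus as soon as the forward inclusion $L(F)\subseteq\bigcap_{i=1}^t\beta_i^{-1}K$ is established, both spaces have dimension $\ell-t$ and therefore coincide.

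For the forward inclusion, observe that $L(F)\subseteq\bigcap_{i=1}^t\beta_i^{-1}K$ is equivalent to ${\rm Tr}_{F/B}(\beta_iL(\alpha))=0$ for all $\alpha\in F$ and all $i\in[t]$. The key tool is the adjoint identity for linearized polynomials: using that ${\rm Tr}_{F/B}$ is invariant under $x\mapsto x^q$ and that $x^{q^\ell}=x$ on $F$, one obtains
\begin{equation*}
{\rm Tr}_{F/B}\big(\beta_iL(\alpha)\big)={\rm Tr}_{F/B}\big(\alpha\,L^{\sigma}(\beta_i)\big),\qquad L^{\sigma}(y):=\sum_{j=0}^t\theta_j^{q^{\ell-j}}y^{q^{\ell-j}}.
\end{equation*}
Since the trace bilinear form on $F$ is nondegenerate, the left-hand side vanishes for every $\alpha\in F$ if and only if $L^{\sigma}(\beta_i)=0$. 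Finally, raising $L^{\sigma}(\beta_i)$ to the power $q^t$ and reducing exponents modulo $\ell$ gives
\begin{equation*}
L^{\sigma}(\beta_i)^{q^t}=\sum_{j=0}^t\theta_j^{q^{t-j}}\beta_i^{q^{t-j}}=\sum_{k=0}^t\theta_{t-k}^{q^k}\beta_i^{q^k}=\big(\beta_i,\beta_i^q,\dots,\beta_i^{q^t}\big)\big(\theta_t,\theta_{t-1}^q,\dots,\theta_0^{q^t}\big)^{\top},
\end{equation*}
which is exactly the $i$-th coordinate of the linear system in the hypothesis. As $x\mapsto x^{q^t}$ is injective on $F$, the assumption that $(\theta_0,\dots,\theta_t)$ solves that system is equivalent to $L^{\sigma}(\beta_i)=0$ for all $i\in[t]$, which yields $L(F)\subseteq\bigcap_{i=1}^t\beta_i^{-1}K$ and closes the argument. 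One should also remark that such a $(\theta_0,\dots,\theta_t)$ with $\theta_t\neq 0$ exists: the $t\times(t+1)$ coefficient matrix has nontrivial kernel, and its last $t$ columns constitute a Moore matrix in $\beta_1^q,\dots,\beta_t^q$, which is invertible because the $\beta_i$ (hence the $\beta_i^q$) are $B$-independent; therefore every nonzero kernel vector has nonzero first entry $\theta_t$.

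I expect the only delicate step to be the bookkeeping in the last display: deriving the adjoint $L^{\sigma}$ with the correct Frobenius twists $q^{\ell-j}$, then selecting the right power ($q^t$) and performing the reindexing $k=t-j$ so that the exponents align with the somewhat unusual ordering $(\theta_t,\theta_{t-1}^q,\dots,\theta_0^{q^t})$ of unknowns appearing in the stated system. Everything else — $B$-linearity of $L$, the root count bounding $\dim_B\ker L$, the nondegeneracy of the trace form, and the dimension formula from Lemma~\ref{lem10} — is routine.
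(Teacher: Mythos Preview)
Your proof is correct. Note, however, that the paper does not give its own proof of this lemma: it is quoted from \cite{I/OFormula} and stated without proof, so there is nothing in the present paper to compare against. Your argument via the adjoint linearized polynomial $L^{\sigma}$, combined with the dimension count using Lemma~\ref{lem10} and the degree bound on $\ker L$, is a clean self-contained proof; the bookkeeping in the reindexing $k=t-j$ and the Moore-matrix observation guaranteeing $\theta_t\neq 0$ are both handled correctly.
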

	
\begin{construction}\label{cons2} Let $r=q^d-k\geq q^s+1$ for some $s\geq 0$.
Assume $m\leq\ell-d+s+1$ and $m\mid\ell$. 
\begin{itemize}
\item[(1)] Select a basis $\hat{\mathcal{B}}=\{\gamma^{(1)},...,\gamma^{(\ell)}\}$ of $F$ over $B$.

Noting that $B\subseteq \mathbb{F}_{q^m}\subseteq F$,		
let $\{\gamma^{(1)}\!=\!1,\gamma^{(2)},...,\gamma^{(m)}\}$ be a basis of $\mathbb{F}_{q^m}$ over $B$ and $\{\lambda_1=1,\lambda_2,...,\lambda_{\frac{\ell}{m}}\}$ be a basis of $F$ over $\mathbb{F}_{q^m}$. Consequently, $\{\lambda_i\gamma^{(j)}:i\in[\frac{\ell}{m}],j\in[m]\}$ forms a basis of $F$ over $B$. Set $\gamma^{\left((i-1)m+j\right)}=\lambda_i\gamma^{(j)}$ for $i\in[\frac{\ell}{m}]$ and $j\in[m]$.

\item[(2)] Define the set of evaluation points $\mathcal{A}$ such that ${\rm dim}_B(\mathcal{A})=d<\ell$\;.
		
Let $\mathcal{B}=\{\beta^{(i)}:i\in[\ell]\}$ be the dual basis of $\hat{\mathcal{B}}$. For $s=0$, set $L(x)=\alpha x$, where $\alpha\neq 0$ is arbitrarily chosen from $F$.
For $s>0$, construct a $q$-polynomial $L(x)\in F[x]$ of degree $q^s$ such that $L(F)=\bigcap_{i=2}^{s+1}(\beta^{(i)})^{-1}K$ by Lemma \ref{q-poly}. Denote $W=\bigcap_{i=2}^{\ell-d+s+1}(\beta^{(i)})^{-1}K\subseteq L(F)$. Let $\mathcal{A}=L^{-1}(W)$. According to Lemma \ref{lem10}, we have ${\rm dim}_B(L(F))=\ell-s$ and ${\rm dim}_B(W)=d-s$. Consequently, ${\rm dim}_B({\rm Ker}(L))=s$, and thus ${\rm dim}_B(\mathcal{A})={\rm dim}_B(L^{-1}(W))={\rm dim}_B(W)+s=d$.

\item[(3)] Construct the repair polynomials $\{g_j(x):j\in[\ell]\}$ for node $0$ in ${\rm RS}(\mathcal{A},n-r)$.
		
Define
$$ g_j(x)=\begin{cases}
		\gamma^{(j)}L(x)+\gamma^{(j)} &j\in[m]\\
		\gamma^{(j)}	&j\in[m+1,\ell]	\\
	\end{cases}.$$
\end{itemize}
\end{construction}

Since ${\rm deg}(g_j(x))\leq q^s< r$ and $\{g_j(0)\}_{j=1}^\ell$= $\hat{\mathcal{B}}$, $\{g_{j}(x)\}_{j=1}^\ell$ obviously defines a linear repair scheme of node $0$. Next, we compute the I/O cost and repair bandwidth of the repair scheme.

\begin{theorem}\label{I/O-cons2}
The linear repair scheme for ${\rm RS}(\mathcal{A},q^d-r)$ defined in Construction \ref{cons2} incurs an I/O cost $\gamma_{I/O}=(n-1)\ell-mq^{d-1}$ with respect to $\mathcal{B}$. Moreover, the repair bandwidth of the scheme is equal to its I/O cost.
\end{theorem}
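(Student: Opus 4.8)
\emph{Proof proposal.} The plan is to recognise the scheme of Construction~\ref{cons2} as an $(m,m)$-normalized scheme, substitute it into the character-sum formula of Theorem~\ref{thmformula}, and evaluate the resulting sum by applying Corollary~\ref{char} twice; the bandwidth claim will then fall out of the structure of the matrices $\hat{W}_i$. First I would check the normalization. The coefficients of $L(x)$ occurring in $g_1,\dots,g_m$ are $\gamma^{(1)},\dots,\gamma^{(m)}$, which are $B$-linearly independent, so writing $\gamma_{\bm u}=\sum_{j=1}^m u_j\gamma^{(j)}$ we have $g_{\bm u}(x)=\gamma_{\bm u}\big(L(x)+1\big)$ with $\gamma_{\bm u}\neq 0$ for ${\bm u}\neq{\bm 0}$, hence $\deg g_{\bm u}=\deg L=q^{s}>0$, which is Condition~1) of Definition~\ref{def}. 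Moreover $g_j=\gamma^{(j)}$ for $j>m$ and $\Phi_{\hat{\mathcal{B}}}(\gamma^{(j)})$ is the $j$-th unit vector (as $\mathcal{B}$ is dual to $\hat{\mathcal{B}}$), so $\bigcup_{j>m}{\rm supp}(\Phi_{\hat{\mathcal{B}}}(\gamma^{(j)}))=[m+1,\ell]$ and thus $t=m$. Theorem~\ref{thmformula} then gives
\[
\gamma_{I/O}=(n-1)\ell-\frac{1}{q^{m}}\sum_{\nu=1}^{m}\sum_{{\bm u}\in B^{m}}\sum_{\alpha\in\mathcal{A}}\chi\big(g_{\bm u}(\alpha)\beta^{(\nu)}\big),\qquad n=q^{d}.
\]

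Next I would evaluate the triple sum. Using $g_{\bm u}(\alpha)\beta^{(\nu)}=\gamma_{\bm u}\beta^{(\nu)}L(\alpha)+\gamma_{\bm u}\beta^{(\nu)}$ and the homomorphism property of $\chi$, the innermost sum equals $\chi(\gamma_{\bm u}\beta^{(\nu)})\sum_{\alpha\in\mathcal{A}}\chi(\gamma_{\bm u}\beta^{(\nu)}L(\alpha))$. Since $\mathcal{A}=L^{-1}(W)$ with $W\subseteq L(F)$, the map $L$ carries $\mathcal{A}$ onto $W$ with every fibre of size $|{\rm Ker}(L)|=q^{s}$, so $\sum_{\alpha\in\mathcal{A}}\chi(\gamma_{\bm u}\beta^{(\nu)}L(\alpha))=q^{s}\sum_{w\in W}\chi(\gamma_{\bm u}\beta^{(\nu)}w)$, which by Corollary~\ref{char} is $q^{s}q^{d-s}=q^{d}$ when $\gamma_{\bm u}\beta^{(\nu)}W\subseteq K$ and $0$ otherwise. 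Writing $U^{(\nu)}=\{{\bm u}\in B^{m}:\gamma_{\bm u}\beta^{(\nu)}W\subseteq K\}$ and $\Gamma^{(\nu)}=\{\gamma_{\bm u}:{\bm u}\in U^{(\nu)}\}\subseteq\mathbb{F}_{q^{m}}$, this collapses the formula to $\gamma_{I/O}=(n-1)\ell-q^{d-m}\sum_{\nu=1}^{m}\sum_{\gamma\in\Gamma^{(\nu)}}\chi(\gamma\beta^{(\nu)})$, and a second application of Corollary~\ref{char} yields $\gamma_{I/O}=(n-1)\ell-q^{d-m}\sum_{\nu=1}^{m}\varepsilon_\nu|\Gamma^{(\nu)}|$, where $\varepsilon_\nu=1$ if $\beta^{(\nu)}\Gamma^{(\nu)}\subseteq K$ and $\varepsilon_\nu=0$ otherwise.

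The heart of the argument is to prove $|\Gamma^{(\nu)}|=q^{m-1}$ and $\beta^{(\nu)}\Gamma^{(\nu)}\subseteq K$ for every $\nu\in[m]$; then each summand equals $q^{m-1}$ and $\gamma_{I/O}=(n-1)\ell-q^{d-m}\cdot mq^{m-1}=(n-1)\ell-mq^{d-1}$. Here I would exploit that the dual basis of the product basis $\{\lambda_i\gamma^{(j)}\}$ is again a product basis $\{\rho_i\mu_j\}$ (a standard consequence of the transitivity ${\rm Tr}_{F/B}={\rm Tr}_{\mathbb{F}_{q^m}/B}\circ{\rm Tr}_{F/\mathbb{F}_{q^m}}$), where $\{\mu_j\}_{j=1}^{m}$ is the ${\rm Tr}_{\mathbb{F}_{q^m}/B}$-dual of $\{\gamma^{(j)}\}_{j=1}^{m}$ in $\mathbb{F}_{q^m}$ and $\{\rho_i\}$ the ${\rm Tr}_{F/\mathbb{F}_{q^m}}$-dual of $\{\lambda_i\}$ in $F$; in particular $\beta^{(\nu)}=\rho_1\mu_\nu$ with $\mu_\nu\in\mathbb{F}_{q^m}^{*}$ for $\nu\in[m]$. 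Consequently the condition $\gamma_{\bm u}\beta^{(\nu)}W\subseteq K$ reads $\delta\rho_1W\subseteq K$ for $\delta:=\gamma_{\bm u}\mu_\nu$, and as ${\bm u}$ ranges over $B^{m}$ (with $\nu$ fixed) $\delta$ ranges over all of $\mathbb{F}_{q^m}$; hence $U^{(\nu)}$ is $B$-linearly isomorphic to the $\nu$-independent space $D:=\{\delta\in\mathbb{F}_{q^m}:\delta\rho_1W\subseteq K\}$ and $\Gamma^{(\nu)}=\mu_\nu^{-1}D$. Since $m\le\ell-d+s+1$, the elements $\beta^{(2)},\dots,\beta^{(m)}$ are among the $\beta^{(i)}$, $2\le i\le\ell-d+s+1$, whose inverses cut out $W$, so $\mu_2,\dots,\mu_m\in D$; on the other hand $\beta^{(1)}$ does not annihilate $W$, because $\dim_B W=d-s$ by Lemma~\ref{lem10}, whereas $\dim_B\big(W\cap(\beta^{(1)})^{-1}K\big)=\dim_B\big(\bigcap_{i=1}^{\ell-d+s+1}(\beta^{(i)})^{-1}K\big)=d-s-1$, so $\mu_1\notin D$. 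As $\{\mu_j\}_{j=1}^{m}$ is a $B$-basis of $\mathbb{F}_{q^m}$, this forces $D={\rm span}_B(\mu_2,\dots,\mu_m)$, of $B$-dimension $m-1$, so $|\Gamma^{(\nu)}|=q^{m-1}$; finally $\beta^{(\nu)}\Gamma^{(\nu)}=\rho_1D={\rm span}_B(\beta^{(2)},\dots,\beta^{(m)})\subseteq K$, using ${\rm Tr}_{F/B}(\beta^{(j)})={\rm Tr}_{F/B}(\gamma^{(1)}\beta^{(j)})=0$ for $j\ge 2$. This establishes the value of $\gamma_{I/O}$.

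For the repair bandwidth, since $t=m$, formula (\ref{b-formula}) applies, and comparing it with (\ref{7}) it suffices to show ${\rm rank}(\hat{W}_i)={\rm nz}(\hat{W}_i)$ for every $i$, which then gives $b=\gamma_{I/O}$. Setting $w_i:=L(\alpha_i)+1$, so that $g_j(\alpha_i)=\gamma^{(j)}w_i$, and using $\bigcap_{\nu=1}^{m}(\beta^{(\nu)})^{-1}K=\rho_1^{-1}{\rm Ker}({\rm Tr}_{F/\mathbb{F}_{q^m}})$, one finds that ${\bm u}\hat{W}_i={\bm 0}$ iff $\gamma_{\bm u}\,{\rm Tr}_{F/\mathbb{F}_{q^m}}(\rho_1w_i)=0$, and likewise that the $\nu$-th column of $\hat{W}_i$ vanishes iff ${\rm Tr}_{F/\mathbb{F}_{q^m}}(\rho_1w_i)=0$; hence each $\hat{W}_i$ is either the zero matrix or has trivial left kernel and no zero column, and in both cases ${\rm rank}(\hat{W}_i)={\rm nz}(\hat{W}_i)$. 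I expect the main obstacle to be the bookkeeping around the product basis and its dual — establishing $\beta^{(\nu)}=\rho_1\mu_\nu$ for $\nu\in[m]$ and tracking precisely which $\beta^{(i)}$ annihilate $W$ — since this is exactly what makes all the $U^{(\nu)}$ coincide up to scaling and have the exact dimension $m-1$; the remaining character-sum and linear-algebra manipulations are routine.
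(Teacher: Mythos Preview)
Your argument is correct, but it takes a route genuinely different from the paper's. The paper does \emph{not} go through Theorem~\ref{thmformula}; instead it shows directly that $\gamma^{(j)}L(\mathcal{A})\subseteq{\rm span}_B\big(\{\gamma^{(j)},\gamma^{(m+1)},\dots,\gamma^{(\ell)}\}\big)$ (using that ${\rm span}_B\{\gamma^{(m+1)},\dots,\gamma^{(\ell)}\}=\bigoplus_{i\ge2}\lambda_i\mathbb{F}_{q^m}$ is $\mathbb{F}_{q^m}$-stable), from which each $\hat{W}_i$ is seen to be \emph{diagonal}; the I/O cost and the equality $b=\gamma_{I/O}$ then follow by an elementary double count of the zero diagonal entries. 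Your approach instead mirrors the lower-bound proofs of Theorems~\ref{r=2}--\ref{r=3}: you plug the scheme into the character formula, identify each $U^{(\nu)}$ via the product dual basis $\beta^{(\nu)}=\rho_1\mu_\nu$, and check the exact values $a_\nu=m-1$ and $\beta^{(\nu)}\Gamma^{(\nu)}\subseteq K$ that make the bounds tight. This has the pleasant feature of showing precisely \emph{why} Construction~\ref{cons2} saturates the constraints in (\ref{eq27+}) and its analogues. On the other hand, the paper's direct computation is shorter and, by exhibiting the diagonal (indeed scalar) shape of $\hat{W}_i$, yields ${\rm rank}(\hat{W}_i)={\rm nz}(\hat{W}_i)$ immediately without your separate zero-or-full-rank dichotomy. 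Both routes hinge on the same structural fact---the product form of the basis $\hat{\mathcal{B}}$---so the ``bookkeeping'' you flag as the main obstacle is exactly what the paper exploits too, just expressed in terms of $\gamma^{(j)}L(\mathcal{A})$ rather than the dual factorisation $\beta^{(\nu)}=\rho_1\mu_\nu$.
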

\begin{proof}
Denote $g_{\bm u}(x)=\sum_{j=1}^m u_j g_j(x)=\sum_{j=1}^m u_j\gamma^{(j)}L(x)+\sum_{j=1}^m u_j\gamma^{(j)}$ for ${\bm u}=(u_1,...,u_m)\in B^m$. Since $\{\gamma^{(1)},...,\gamma^{(m)}\}$ are linearly independent over $B$, it follows that ${\rm deg}(g_{\bm u}(x))>0$ for ${\bm u}\in B^m\setminus\{{\bm 0}\}$. Moreover, it can be seen that $\bigcup_{j=m+1}^{\ell}{\rm supp}(\Phi_{\hat{\mathcal{B}}}(g_j(x))=\bigcup_{j=m+1}^{\ell}{\rm supp}(\Phi_{\hat{\mathcal{B}}}(\gamma^{(j)}))=[m+1,\ell]$. Thus, $\{g_j(x)\}_{j=1}^\ell$ is $(m,m)$-normalized with respect to $\mathcal{B}$. It follows from (\ref{7}) and (\ref{b-formula}) that the I/O cost and repair bandwidth of the scheme depend on the calculation of ${\rm nz}(\hat{W}_i)$ and ${\rm rank}(\hat{W}_i)$, respectively, where $\hat{W}_i=\big({\rm Tr}(g_{\mu}(\alpha_i)\beta^{(\nu)})\big)_{\mu,\nu\in[m]}$ and $\mathcal{A}=\{\alpha_1=0,\alpha_2,...,\alpha_n\}$.

Next, we characterize $g_1(\mathcal{A}),...,g_m(\mathcal{A})$ to determine $\hat{W}_i$. Recall that $L(\mathcal{A})=W=\bigcap_{i=2}^{\ell-d+s+1}(\beta^{(i)})^{-1}K$. Since $\{\gamma^{(1)},...,\gamma^{(\ell)}\}$ is the dual basis of $\{\beta^{(1)},...,\beta^{(\ell)}\}$, it actually holds $L(\mathcal{A})={\rm span}_B\big(\{\gamma^{(1)},\gamma^{(\ell-d+s+2)},...,\gamma^{(\ell)}\}\big)$. Moreover, we have the following claim.
			
{\bf Claim.} For $j\in[m]$, $\gamma^{(j)}\in \gamma^{(j)}L(\mathcal{A})$ and $\gamma^{(j)}L(\mathcal{A})\subseteq{\rm span}_B\big(\{\gamma^{(j)},\gamma^{(m+1)},...,\gamma^{(\ell)}\}\big)$.
			
{\it Proof of the claim. } Since $\gamma^{(j)}L(\mathcal{A})={\rm span}_B\big(\{\gamma^{(j)}\gamma^{(1)},\gamma^{(j)}\gamma^{(\ell-d+s+2)},...,\gamma^{(j)}\gamma^{(\ell)}\}\big)$ and $\gamma^{(1)}=1$, it suffices to show ${\rm span}_B\big(\{\gamma^{(j)}\gamma^{(\ell-d+s+2)},...,\gamma^{(j)}\gamma^{(\ell)}\}\big)\subseteq{\rm span}_B\big(\{\gamma^{(m+1)},...,\gamma^{(\ell)}\}\big)$. By the definition of $\{\gamma^{(1)},...,\gamma^{(\ell)}\}$, we have ${\rm span}_B\big(\{\gamma^{(1)},...,\gamma^{(m)}\}\big)=\mathbb{F}_{q^m}$ and ${\rm span}_B\big(\{\gamma^{(m+1)},...,\gamma^{(\ell)}\}\big)=
\lambda_2\mathbb{F}_{q^m}\oplus\lambda_3\mathbb{F}_{q^m}\oplus\cdots\oplus\lambda_{\frac{\ell}{m}}\mathbb{F}_{q^m}$.
As a result, for $j\in[m]$, it has
\begin{align*}
& {\rm span}_B\big(\{\gamma^{(j)}\gamma^{(m+1)},...,\gamma^{(j)}\gamma^{(\ell)}\}\big)\\=&
\gamma^{(j)}\lambda_2\mathbb{F}_{q^m}\oplus\gamma^{(j)}\lambda_3\mathbb{F}_{q^m}\oplus\cdots\oplus\gamma^{(j)}\lambda_{\frac{\ell}{m}}\mathbb{F}_{q^m}\\
=&\lambda_2\mathbb{F}_{q^m}\oplus\lambda_3\mathbb{F}_{q^m}\oplus\cdots\oplus\lambda_{\frac{\ell}{m}}\mathbb{F}_{q^m}\\
=&{\rm span}_B\big(\{\gamma^{(m+1)},...,\gamma^{(\ell)}\}\big)
\end{align*}
where the second equality comes from $\gamma^{(j)}\mathbb{F}_{q^m}=\mathbb{F}_{q^m}$ because $\gamma^{(j)}\in\mathbb{F}_{q^m}$ for $j\in[m]$.  Finally, since $m\leq\ell-d+s+1$, it obviously has
${\rm span}_B\big(\{\gamma^{(j)}\gamma^{(\ell-d+s+2)},...,\gamma^{(j)}\gamma^{(\ell)}\}\big)\subseteq{\rm span}_B\big(\{\gamma^{(j)}\gamma^{(m+1)},...,\gamma^{(j)}\gamma^{(\ell)}\}\big)$ and the claim is proved.

By the definition of $g_{\mu}(x)$, for $\mu,\nu\in[m]$, it has
\begin{align}\label{eq43}
		{\rm Tr}_{F/B}(g_{\mu}(\alpha_i)\beta^{(\nu)})&={\rm Tr}_{F/B}(\gamma^{(\mu)}L(\alpha_i)\beta^{(\nu)})+{\rm Tr}_{F/B}(\gamma^{(\mu)}\beta^{(\nu)})\notag\\
				&=\begin{cases}
				{\rm Tr}_{F/B}(\gamma^{(\mu)}L(\alpha_i)\beta^{(\mu)})+1, &{\text{ if}~}\nu=\mu\\
				0, &\text{ otherwise}~
			\end{cases}
\end{align}
where (\ref{eq43}) follows from the claim and the fact ${\rm Tr}(\beta^{(i)}\gamma^{(j)})\!=\!{\bm 1}_{i=j}$. Therefore, $\hat{W}_i$, $i\in[n]$, has the following form
\begin{equation*}
	\hat{W}_i=\left( {\begin{array}{cccc}
				{\rm Tr}_{F/B}(\gamma^{(1)}L(\alpha_i)\beta^{(1)})+1 & 0 & \cdots & 0 \\
				0&{\rm Tr}_{F/B}(\gamma^{(2)}L(\alpha_i)\beta^{(2)})+1 &\cdots &0\\
				\vdots&\vdots &\ddots  & \vdots \\
				0 &0&\cdots &{\rm Tr}_{F/B}(\gamma^{(m)}L(\alpha_i)\beta^{(m)})+1\\
				\end {array}}\right),
\end{equation*}
which is an $m\times m$ diagonal matrix. It immediately follows ${\rm rank}(\hat{W}_i)={\rm nz}(\hat{W}_i)$, so the repair bandwidth and I/O cost are equal. Moreover,
\begin{align}\notag
\sum_{i\in[n]}{\rm nz}(\hat{W}_i)&=\sum_{i\in[n]}\big(m-\big|\{j\in[m]:{\rm Tr}_{F/B}(\gamma^{(j)}L(\alpha_i)\beta^{(j)})+1=0\}\big|\big)\\
			                     &=nm-\sum_{j\in[m]}\big|\{i\in[n]:{\rm Tr}_{F/B}(\gamma^{(j)}L(\alpha_i)\beta^{(j)})=-1\}\big|\label{eq322}\\
					             &=nm-\sum_{j\in[m]}q^s\big|\{\alpha\in \gamma^{(j)}L(\mathcal{A}):{\rm Tr}_{F/B}(\alpha\beta^{(j)})=-1\}\big|, \label{eq33}
\end{align}
where (\ref{eq322}) uses double counting and (\ref{eq33}) comes from $\dim_B({\rm Ker}(L))=s$. Recall that ${\rm dim}_B(L(\mathcal{A}))=d-s$.  For $j\in[m]$, combining with the claim, we may assume $\gamma^{(j)}L(\mathcal{A})={\rm span}_B\{\gamma^{(j)}\}\oplus V_j$, where $V_j\subseteq{\rm span}_B\big(\{\gamma^{(m+1)},...,\gamma^{(\ell)}\}\big)$ and ${\rm dim}_B(V_j)=d-s-1$. Since ${\rm span}_B\big(\{\gamma^{(m+1)},...,\gamma^{(\ell)}\}\big)\subseteq (\beta^{(j)})^{-1}K$ for $j\in[m]$, we can deduce $\{\alpha\in \gamma^{(j)}L(\mathcal{A}):{\rm Tr}_{F/B}(\alpha\beta^{(j)})=-1\}=-\gamma^{(j)}+V_j$. As a result, $\big|\{\alpha\in \gamma^{(j)}L(\mathcal{A}):{\rm Tr}_{F/B}(\alpha\beta^{(j)})=-1\}\big|=|V_j|=q^{d-s-1}$. Combining with (\ref{eq9+}) and (\ref{eq33}), we obtain $\gamma_{I/O}=(n-1)\ell-mq^{d-1}$.
\end{proof}	
\begin{remark}
Construction \ref{cons2} is inspired by Construction 1 in \cite{I/OFormula}. Both constructions control the  I/O cost by designing the image set of the repair polynomials. Specifically, for RS code ${\rm RS}(\mathcal{A},q^d-r)$ over $F$ with $r\geq q^s+1$, where $\mathcal{A}$ is a $d$-dimensional subspace of $F$, the crux of the construction is to identify $q$-polynomials $L_1(x),...,L_m(x)$ of degree $q^s$ such that $L_i(\mathcal{A})=V_i$ for $i\in[m]$, where $V_i$ is a $(d-s)$-dimensional subspace and lies in $\bigcap_{j\in[m],j\neq i}(\beta^{(j)})^{-1}K$. Construction 1 in \cite{I/OFormula} is for full-length RS codes, i.e., $\mathcal{A}=F$. Thus these $q$-polynomials can be easily found using Lemma \ref{q-poly}. However, it is not easy for the case of $d<\ell$. In Construction \ref{cons2} of this work, we select a $q$-polynomial $L(x)$ and some $\alpha_i\in F$  such that $L(\mathcal{A})=V$ and $\alpha_iV=V_i$ for $i\in[m]$. Then, defining $L_i(x)=\alpha_iL(x)$ ensures that $L_i(\mathcal{A})=V_i$. Based on this idea, we successfully find the desired repair polynomials, but it also brings the restriction that $m\mid\ell$.
\end{remark}

In \cite{shortr=2}, the authors build some repair schemes for RS codes ${\rm RS}(\mathcal{A},2^d-2)$, where $\mathcal{A}$ is a $d$-dimensional subspace of $\mathbb{F}_{2^\ell}$. However, the I/O cost of their schemes does not achieve the lower bound established in Theorem \ref{r=2}. In contrast, our Construction \ref{cons2} attains the lower bound in Theorem \ref{r=2} when $\ell-d+1 \mid \ell$.
\begin{corollary}\label{coro28}
	Set $s=0$ and $r=2$. Suppose $m=\ell-d+1\mid\ell$, Construction \ref{cons2} provides an RS code ${\rm RS}(\mathcal{A},q^d-2)$ and a corresponding linear repair scheme. The I/O cost of that linear repair scheme is $(n-1)\ell-(\ell-d+1)q^{d-1}$, which is optimal according to Theorem \ref{r=2}.
\end{corollary}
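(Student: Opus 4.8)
The plan is to obtain this corollary as a direct specialization of Construction \ref{cons2} and Theorem \ref{I/O-cons2}, combined with the lower bound of Theorem \ref{r=2}. First I would verify that the parameter choice $s=0$, $r=2$, $m=\ell-d+1$ satisfies all hypotheses of Construction \ref{cons2}. The requirement $r=q^d-k\geq q^s+1$ becomes $2\geq q^0+1=2$, which holds with equality; the requirement $m\leq\ell-d+s+1=\ell-d+1$ holds with equality for $m=\ell-d+1$; and $m\mid\ell$ is assumed. Hence Construction \ref{cons2} applies and yields an RS code ${\rm RS}(\mathcal{A},q^d-2)$ with $\mathcal{A}$ a $d$-dimensional $B$-subspace of $F$, together with the repair polynomials $\{g_j(x)\}_{j=1}^\ell$ for node $0$.

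Next I would invoke Theorem \ref{I/O-cons2} verbatim: the linear repair scheme produced by Construction \ref{cons2} has I/O cost $\gamma_{I/O}=(n-1)\ell-mq^{d-1}$ with respect to $\mathcal{B}$. Substituting $m=\ell-d+1$ gives $\gamma_{I/O}=(n-1)\ell-(\ell-d+1)q^{d-1}$, which is exactly the claimed value. Finally, comparing with Theorem \ref{r=2}, which asserts $\gamma_{I/O}\geq(n-1)\ell-(\ell-d+1)q^{d-1}$ for \emph{every} linear repair scheme of ${\rm RS}(\mathcal{A},q^d-2)$, we conclude that the scheme from Construction \ref{cons2} attains the lower bound and is therefore I/O-optimal.

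There is essentially no obstacle here, since the substantive computation was already carried out in the proof of Theorem \ref{I/O-cons2}. The only point deserving a brief check is the degenerate case $s=0$, in which Construction \ref{cons2} takes $L(x)=\alpha x$ for an arbitrary nonzero $\alpha\in F$: one should confirm that the intermediate facts used in the proof of Theorem \ref{I/O-cons2} remain valid, namely $\dim_B({\rm Ker}(L))=0=s$, $L(F)=F$ (the empty intersection $\bigcap_{i=2}^{s+1}(\beta^{(i)})^{-1}K$), and $\dim_B(L(\mathcal{A}))=d=d-s$, all consistent with Lemma \ref{lem10}. With these in place the diagonal form of $\hat{W}_i$ and the counting of nonzero entries go through unchanged, so the corollary follows immediately.
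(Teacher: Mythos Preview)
Your proposal is correct and matches the paper's approach exactly: the paper states Corollary \ref{coro28} without a separate proof, treating it as an immediate specialization of Theorem \ref{I/O-cons2} (substitute $s=0$, $m=\ell-d+1$) together with Theorem \ref{r=2}. Your added sanity check for the degenerate case $s=0$ is harmless but unnecessary, since Construction \ref{cons2} already singles out $L(x)=\alpha x$ for $s=0$ and Theorem \ref{I/O-cons2} is proved uniformly for all admissible parameters.
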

Moreover, for RS codes ${\rm RS}(\mathcal{A},2^d-3)$, our scheme achieves optimal I/O cost when $\ell-d+2\mid\ell$.
\begin{corollary}\label{coro29}
	Set $s=1$ and $r=2$. Suppose $m=\ell-d+2\mid\ell$, Construction \ref{cons2} provides an RS code ${\rm RS}(\mathcal{A},2^d-3)$ and a corresponding linear repair scheme. The I/O cost of that linear repair scheme is $(n-1)\ell-(\ell-d+2)2^{d-1}$, which is optimal according to Theorem \ref{r=3}.
\end{corollary}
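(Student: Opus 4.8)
The plan is to obtain Corollary \ref{coro29} as an immediate specialization of results already established, so essentially no new argument is required; the work is bookkeeping. First I would check that the parameter choices meet the standing hypotheses of Construction \ref{cons2}. Here the underlying field is $\mathbb{F}_{2^\ell}$, so $q=2$; we take $s=1$ and $r=3$ (the code is ${\rm RS}(\mathcal{A},2^d-3)$, hence $r=n-k=3$), together with $m=\ell-d+2$. The requirement $r\geq q^s+1$ becomes $3\geq 2^1+1=3$, which holds with equality; the requirement $m\leq \ell-d+s+1$ becomes $m\leq \ell-d+2$, again with equality; and $m\mid\ell$ is assumed in the statement. One should also keep $d<\ell$ so that step (2) of Construction \ref{cons2} goes through (the $d=\ell$ case is covered separately by Construction 1 of \cite{I/OFormula}, as noted after Theorem \ref{r=3}). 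Hence Construction \ref{cons2} does produce a $d$-dimensional evaluation subspace $\mathcal{A}\subseteq\mathbb{F}_{2^\ell}$, the code ${\rm RS}(\mathcal{A},2^d-3)$, and an explicit linear repair scheme $\{g_j(x)\}_{j=1}^\ell$ for node $0$.

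Next I would invoke Theorem \ref{I/O-cons2}, which states that this scheme has I/O cost $\gamma_{I/O}=(n-1)\ell-mq^{d-1}$ with respect to the basis $\mathcal{B}$. Substituting $q=2$ and $m=\ell-d+2$ yields $\gamma_{I/O}=(n-1)\ell-(\ell-d+2)2^{d-1}$.

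Finally I would compare this value against the lower bound of Theorem \ref{r=3}: for any linear repair scheme of ${\rm RS}(\mathcal{A},2^d-3)$ with $\mathcal{A}$ a $d$-dimensional $\mathbb{F}_2$-subspace of $\mathbb{F}_{2^\ell}$, one has $\gamma_{I/O}\geq (n-1)\ell-(\ell-d+2)2^{d-1}$. Since the scheme produced by Construction \ref{cons2} attains this bound, it is I/O-optimal, which is precisely the claim of the corollary.

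There is no genuine obstacle here: the whole content reduces to plugging $s=1$, $q=2$, $m=\ell-d+2$ into Theorem \ref{I/O-cons2} and matching the outcome against Theorem \ref{r=3}. The only points that merit a sentence of care are (i) verifying the two parameter inequalities in the hypotheses of Construction \ref{cons2} are satisfied — they are, both with equality — and (ii) noting the implicit restriction $d<\ell$ needed for Construction \ref{cons2} to apply.
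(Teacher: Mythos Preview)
Your proposal is correct and is exactly what the paper does implicitly: Corollary \ref{coro29} is stated without a separate proof, as it is an immediate specialization of Theorem \ref{I/O-cons2} (with $q=2$, $s=1$, $m=\ell-d+2$) matched against the lower bound of Theorem \ref{r=3}. You also rightly flag that the statement's ``$r=2$'' is a typo for $r=3$, and that Construction \ref{cons2} requires $d<\ell$.
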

			
In general, the difference between the trivial I/O cost and our scheme's I/O cost is $(n-r)\ell-\big((n-1)\ell-mq^{d-1}\big)=mq^{d-1}-(r-1)\ell$. When $s<d-2$ and $\frac{m}{\ell}\geq\frac{1}{q^{d-s-2}}$, our repair scheme for ${\rm RS}(\mathcal{A},q^d-r)$ with $q^{s}+1\leq r\leq q^{s+1}$ given in Construction \ref{cons2} outperforms the trivial scheme because $mq^{d-1}-(r-1)\ell>mq^{d-1}-q^{s+1}\ell=q^{s+1}(mq^{d-s-2}-\ell)\geq0$. To evaluate the reduction in the I/O cost of linear repair schemes, we define the I/O cost ratio as the ratio of the I/O cost of the repair scheme to the I/O cost of the trivial repair scheme, i.e., $\rho=\frac{\gamma_{I/O}}{(n-r)\ell}$.
In \cite{I/OFormula}, some repair schemes were built for full-length RS codes, which are also better than the trivial scheme and even achieve optimal I/O cost at $r=2,3$. However, the full-length RS codes are special (i.e., $\ell=d$), and thus the I/O cost reduction is limited. In contrast, the RS codes (of the same length $n=q^d$) studied in this work are defined over a larger field $\mathbb{F}_{q^\ell}$ with $\ell-d+1\mid \ell$ or $\ell-d+2\mid \ell$. Therefore, the schemes in this work usually have lower I/O cost ratio than the schemes in \cite{I/OFormula}.
We compare the I/O cost ratio $\rho$ of the repair schemes in \cite{I/OFormula} with our scheme under the same $n,r,q=2$ but different $\ell$ \footnote{The schemes in \cite{I/OFormula} always have $\ell=\log n$, whereas the schemes from Construction \ref{cons2} in Table \ref{table4}, in turn, have $\ell=4,6,8,6,8,8$.} in Table \ref{table4}.
\begin{table}[h]
				\renewcommand\arraystretch{1.45}
				\centering
				\caption{ \scriptsize Comparison of the I/O cost ratio $\rho$ for repairing $[n,n-r]$ RS codes.}\label{table4}
				\setlength{\tabcolsep}{1mm}
				\begin{tabular}{|c|c|c|c|c|c|c|}
					\hline   $(n,r)$  & $(2^{3},2)$ & $(2^4,2)$ & $(2^5,2)$ & $(2^5,3)$ &$(2^{6},3)$ & $(2^{7},5)$ \\
					\hline  Scheme in \cite{I/OFormula}  & $94.4\%$  & $92.9\%$  & $92.7\%$ & $84.8\%$  &$85.8\%$ & $81.0\%$\\				
					\hline  Construction \ref{cons2} & $83.3\%$  & $78.6\%$  & $76.7\%$ & $79.3\%$  &$77.0\%$ & $77.2\%$  \\
					\hline
				\end{tabular}\\
				
				\vspace{6pt}
			\end{table}
			It turns out that the I/O cost of schemes in \cite{I/OFormula} achieves optimal I/O cost at $r=2$, but there is only around $7\%$ reduction in the I/O cost compared with the trivial repair scheme. For the same $(n,r=2)$, our construction achieves $23\%$ reduction in the I/O cost compared with the trivial repair scheme.

\section{Conclusion}\label{Sec6}	
In this work, we calculate the I/O cost of linear repair schemes for RS codes evaluated on subspaces and  characterize the repair bandwidth of I/O-optimal repair schemes. However, the redundancy of the RS codes under consideration remains relatively small. Extending the results to more general RS codes is a challenging work in the future.

\appendices
\section{proof of Corollary \ref{coro11}}\label{proof-coro11}
\begin{lemma}[Weil bound]\citep[Theorem 5.38]{finite field}\label{Weil}
	Let $f(x)\in F[x]$ be of degree $e\geq 1$ with $\gcd(e,{\rm Char}(F))=1$ and let $\chi$ be a nontrivial additive character of $F$. Then,
	\begin{equation*}
		\big|\sum_{\alpha\in F}\chi(f(\alpha))\big|\leq(e-1)q^{\frac{\ell}{2}}.
	\end{equation*}
\end{lemma}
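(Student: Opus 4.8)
The plan is to realize the character sum $S=\sum_{\alpha\in F}\chi(f(\alpha))$ as essentially a point count on an Artin--Schreier curve and then invoke Weil's Riemann Hypothesis for curves over finite fields. First I would reduce to the canonical additive character: every nontrivial $\chi$ equals $x\mapsto\chi_1(ax)$ for some $a\in F\setminus\{0\}$, and replacing $f$ by $af$ changes neither the degree $e$ nor the hypothesis $\gcd(e,{\rm Char}(F))=1$, so it suffices to treat $\chi=\chi_1$ with $\chi_1(x)=\zeta_p^{{\rm Tr}(x)}$ and $p={\rm Char}(F)$ prime.

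Next I would introduce the affine curve $C\colon y^p-y=f(x)$ over $F$. Since $y\mapsto y^p-y$ is $\mathbb{F}_p$-linear with image exactly $\ker{\rm Tr}$, the fibre over $f(\alpha)$ has size $\sum_{j=0}^{p-1}\zeta_p^{\,j\,{\rm Tr}(f(\alpha))}$, so summing over $\alpha$ gives
\[
\#C(F)=|F|+\sum_{j=1}^{p-1}\ \sum_{\alpha\in F}\chi_1\!\big(jf(\alpha)\big).
\]
Passing to the smooth projective model $\bar C$ (which has a single, totally ramified point at infinity because $\gcd(e,p)=1$), the Hasse--Weil bound $\big|\#\bar C(F)-(q^\ell+1)\big|\leq 2g\,q^{\ell/2}$ with genus $g=\tfrac{(p-1)(e-1)}{2}$ already yields $\big|\sum_{j=1}^{p-1}S_j\big|\leq(p-1)(e-1)\,q^{\ell/2}$, i.e.\ the desired bound "on average over $j$". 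To obtain it for $S_1$ individually I would decompose the zeta function of $\bar C$ under the $\mathbb{Z}/p\mathbb{Z}$-action by $y\mapsto y+1$: writing $Z_{\bar C}(T)(1-T)(1-q^\ell T)=\prod_{\psi\neq\psi_0}L(T,\psi)$ over the nontrivial characters $\psi$, each $L(T,\psi)=\prod_{i=1}^{D}(1-\pi_{i,\psi}T)$ is a polynomial whose reciprocal roots satisfy $\sum_i\pi_{i,\psi}^{\,n}=-\sum_{\alpha\in\mathbb{F}_{q^{\ell n}}}\psi(\mathrm{Frob}_\alpha)$; for the $\psi$ attached to $\chi_1$ via the Artin--Schreier sheaf this specializes at $n=1$ to $S_1=-\sum_i\pi_{i,\psi}$.

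Two facts then close the argument. (a) $D=e-1$: the rank-one Artin--Schreier sheaf $\mathcal{L}_{\chi_1\circ f}$ is lisse on $\mathbb{A}^1$ and nontrivial, with Swan conductor $e$ at $\infty$ (as $f$ has a pole of order $e$ there with $\gcd(e,p)=1$), so by the Grothendieck--Ogg--Shafarevich / Euler--Poincar\'{e} formula, $\dim H^1_c(\mathbb{A}^1_{\bar F},\mathcal{L}_{\chi_1\circ f})=\mathrm{Swan}_\infty-\chi_c(\mathbb{A}^1)=e-1$; equivalently this is the Riemann--Hurwitz relation $2g=(p-1)(e-1)$ split into its $p-1$ nontrivial Galois pieces. (b) By Weil's Riemann Hypothesis for curves over $\mathbb{F}_{q^\ell}$, every $|\pi_{i,\psi}|=q^{\ell/2}$. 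Combining, $|S|=\big|\sum_{i=1}^{D}\pi_{i,\psi}\big|\leq D\,q^{\ell/2}=(e-1)q^{\ell/2}$, which is the claimed inequality.

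The main obstacle is input (b): the Riemann Hypothesis for curves (the Hasse--Weil bound) is itself a deep theorem, and the whole estimate rests on it together with the conductor computation in (a). If one wants an elementary, algebraic-geometry-free route instead, the alternative is the Stepanov--Bombieri polynomial method: construct an auxiliary function in a suitable extension of $F(x)$ that vanishes to prescribed high order at every $\alpha$ with $\chi(f(\alpha))$ equal to a fixed $p$-th root of unity, bound its degree above by construction and below by the number of such $\alpha$, and optimize the free parameters; there the technical heart is the degree bookkeeping together with the interplay of differentiation and the Frobenius $x\mapsto x^{q^\ell}$.
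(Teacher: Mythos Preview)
Your outline is a correct sketch of the standard Artin--Schreier/Weil proof of this exponential-sum bound, and the ingredients you list (reduction to the canonical character, the point-count identity for $y^p-y=f(x)$, the genus $g=\tfrac{(p-1)(e-1)}{2}$ when $\gcd(e,p)=1$, the $\mathbb{Z}/p\mathbb{Z}$-equivariant splitting of the zeta function into $p-1$ factors of degree $e-1$, and Weil's RH for curves) are all in order.

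That said, the paper does \emph{not} prove this lemma at all: it is stated with a citation to Lidl--Niederreiter \cite[Theorem~5.38]{finite field} and used as a black box in the proof of Corollary~\ref{coro11}. So there is no ``paper's own proof'' to compare with---your route is simply one of the classical proofs of the cited theorem. What your approach buys is insight into \emph{why} the factor $e-1$ appears (it is the rank of the relevant $H^1_c$, equivalently the per-character share of the genus), whereas the paper only needs the inequality itself and is content to import it. If you wanted to stay closer in spirit to the paper's elementary style, the Stepanov--Schmidt polynomial method you mention at the end would be the natural self-contained alternative, and is in fact the route taken in the reference the paper cites.
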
		
Using Theorem \ref{thmformula} and the Weil bound, we can give the proof of Corollary \ref{coro11}.
\begin{proof} Let $\mathcal{B}=\{\beta^{(1)},...,\beta^{(\ell)}\}$ be a basis of $F$ over $B$ and assume $\chi$ is the canonical additive character of $F$.
Suppose $\{g_j(x)\}_{j=1}^\ell$ is an $(m,t)$-normalized repair scheme for node $i^*$ with respect to $\mathcal{B}$. By Theorem \ref{thmformula}, the I/O cost of $\{g_j(x)\}_{j=1}^\ell$ with respect to $\mathcal{B}$ is
\begin{align*}
\gamma_{I/O}&\notag=(n-1)\ell-\frac{1}{q^m}\sum_{s=1}^t\sum_{{\bm u}\in B^m}\sum_{\alpha\in F}\chi(g_{{\bm u}}(\alpha)\beta^{(s)})\\
	      &=(n-1)\ell-\frac{1}{q^m}\Big(\sum_{s=1}^t\sum_{\alpha\in F}\chi(g_{{\bm 0}}(\alpha)\beta^{(s)})+\sum_{s=1}^t\sum_{{\bm u}\in B^m\setminus\{{\bm 0}\}}\sum_{\alpha\in F}\chi(g_{{\bm u}}(\alpha)\beta^{(s)})\Big)\\
		 &=(n-1)\ell-tq^{\ell-m}-\frac{1}{q^m}\sum_{s=1}^t\sum_{{\bm u}\in B^m\setminus\{{\bm 0}\}}\sum_{\alpha\in F}\chi(g_{{\bm u}}(\alpha)\beta^{(s)})\\
		 &\stackrel{{\rm (i)}}{\geq} (n-1)\ell-tq^{\ell-m}-\frac{1}{q^m}\sum_{s=1}^t\sum_{{\bm u}\in B^m\setminus\{{\bm 0}\}}\Big|\sum_{\alpha\in F}\chi(g_{{\bm u}}(\alpha)\beta^{(s)})\Big|\\
	 	 &\stackrel{{\rm (ii)}}{\geq}(n-1)\ell-tq^{\ell-m}-\frac{1}{q^m}\sum_{s=1}^t\sum_{{\bm u}\in B^m\setminus\{{\bm 0}\}}({\rm deg}(g_{{\bm u}})-1)q^{\frac{\ell}{2}}\\
		 &\stackrel{{\rm (iii)}}{\geq} (n-1)\ell-m\big(q^{\frac{\ell}{2}-m}+(r-2)\cdot\frac{q^m-1}{q^m}\big)q^{\frac{\ell}{2}} \\
		 &\stackrel{{\rm (iv)}}{\geq}(n-1)\ell-q^{\ell-1}-(r-2)(q-1)q^{\frac{\ell}{2}-1}.
\end{align*}
Note $\gamma_{I/O}$ and $tq^{\ell-m}$ are integers, therefore $\sum_{s=1}^t\sum_{{\bm u}\in B^m\setminus\{{\bm 0}\}}\sum_{\alpha\in F}\chi(g_{{\bm u}}(\alpha)\beta^{(s)})$ is also an integer. Combining with the triangle inequality, $({\rm i})$ follows. According to $1)$ of Definition \ref{def}, we have $0<{\rm deg}(g_{\bm u})\leq r-1\leq {\rm Char}(F)-1$ for ${\bm u}\neq{\bm 0}$, which implies that ${\rm gcd}({\rm deg}(g_{\bm u}),{\rm Char}(F))=1$, and thus the Weil bound can be applied to estimate $|\sum_{\alpha\in F}\chi(g_{{\bm u}}(\alpha)\beta^{(s)})|$, then $({\rm ii})$ holds. Finally, $({\rm iii})$ is derived from $t\leq m$ and ${\rm deg}(g_{\bm u})\leq r-1$, while $({\rm iv})$ is because $m\big(q^{\frac{\ell}{2}-m}+(r-2)\cdot\frac{q^m-1}{q^m}\big)$ reaches the maximum at $m=1$. 
\end{proof}

\section{proof of Lemma \ref{r3cond}}\label{lem17}

\begin{proof}
First note that when $t'\leq\ell-d+1$, the only constraint on $a_s$ is $0\leq a_s\leq m-1$ for $s\in[t']$, because the condition $\sum_{i=1}^{t'}a_i\leq (\ell-d+1)m$ is already implied by the previous constraint. Thus, $2^{d-m}\sum_{i=1}^{t'}2^{a_i}\leq t'2^{d-1}\leq (\ell-d+1)2^{d-1}<(\ell-d+2)2^{d-1}$. The equality can not hold in this case.

Next, we consider the case $t'\geq \ell-d+2$. For simplicity,
denote $x_i=a_i+d-m$, then it is equivalent to consider the following integer programming problem:
\begin{align}
	&\max_{t',m,x_1,...,x_{t'}} \quad   \sum_{i=1}^{t'}2^{x_i} \notag \\
	&{\rm s.t.} \left\{
	\begin{array}{ll}
d-1\geq x_1\geq x_2\geq\cdots\geq x_{t'}\geq d-m \\
(\ell-d+2)d-m\geq\sum_{i=1}^{\ell-d+2}x_i \\
\ell\geq m\geq t'\geq\ell-d+2\\
	\end{array}
	\right..\label{conds}
\end{align}
It is easy to see the maximum reaches if and only if $t'=m$ and $x_i=x_{\ell-d+2}$ for all $i\in[\ell-d+3,m]$. Denote $G(x_1,...,x_{\ell-d+1},x,m)=\sum_{i=1}^{\ell-d+1}2^{x_i}+(m-(\ell-d+1))2^{x}$. Then, the problem reduces to  	
\begin{align}
	&\max_{x_1,...,x_{\ell-d+1},x,m} \quad   G(x_1,...,x_{\ell-d+1},x,m) \notag \\
		&{\rm s.t.} \left\{
		\begin{array}{ll}
		d-1\geq x_1\geq x_2\geq\cdots\geq x_{\ell-d+1}\geq x\geq d-m \\
			(\ell-d+2)d-m-x\geq\sum_{i=1}^{\ell-d+1}x_i \\
			\ell\geq m\geq\ell-d+2\\
		\end{array}
		\right..\label{conds2}
\end{align}
Noting that the second constraint is
\begin{equation}\label{con27}
\sum_{i=1}^{\ell-d+1}x_i+x\leq(\ell-d+1)(d-1)+(\ell+1-m)
\end{equation}  and $d-1\geq\ell+1-m$,
we analyze the maximum value in the following two cases.

\begin{itemize}
  \item[(1)] If $x\leq\ell+1-m$, one can see the second constraint holds with equality at $x_1=\cdots=x_{\ell-d+1}=d-1$ and $x=\ell+1-m$, and then the maximum reaches with the value $(\ell-d+1)2^{d-1}+(m-\ell+d-1)2^{\ell+1-m}$. Denote $z=\ell+1-m$, then the latter term in the sum becomes $(d-z)2^z$. Since $1\leq z\leq d-1$, $(d-z)2^z$ reaches the maximum value $2^{d-1}$ at $z=d-1$ or $d-2$, and thus $G(x_1,...,x_{\ell-d+1},x,m)\leq (\ell-d+2)2^{d-1}$ in this case.

  \item[(2)] If $x>\ell+1-m$, i.e., $x\geq\ell+2-m$, combining (\ref{con27}) and
  the first constraint, we can deduce in this case
  $\min\{x_1,...,x_{\ell-d+1}\}\leq d-2$. As a result, it actually holds
  $\ell+2-m\leq x\leq d-2$ and thus $m\geq\ell-d+4$. Next, we analyze the influence
   of $m$ on the maximum value. For each fixed $m\in[\ell-d+4,\ell]$ and $x\in[\ell+2-m,d-2]$, denote
      $$G_m(x)=\max_{x_1,...,x_{\ell-d+1}}G(x_1,...,x_{\ell-d+1},x,m)$$
      where the maximum is taken as $x_1,...,x_{\ell-d+1}$ satisfy the constraints in (\ref{conds2}). Then we have
      the following claim.

  {\bf Claim.} $G_{m+1}(x)\leq G_{m}(x)$.

{\it proof of the claim.}
For each $x\in[\ell+2-m,d-2]$, suppose $G_{m+1}(x)=G(z_1,...,z_{\ell-d+1},x,m+1)$.
Recall that $\min\{z_1,...,z_{\ell-d+1}\}\leq d-2$. Let $i_0\in[\ell-d+1]$ be the
smallest index such that $z_{i}\leq d-2$, i.e., $z_1=\cdots=z_{i_{0}-1}=d-1, z_{i_0}\leq d-2$. For $i\in[\ell-d+1]$, set
		
$$z'_i=\begin{cases}
		z_i+1&{\rm if}~i=i_0\\
    	z_i	   &{\rm if}~ i\neq i_0\\
		\end{cases}. $$
It can be verified that $(z'_1,...,z'_{\ell-d+1},x,m)$ also
satisfies (\ref{conds2}) because $\sum_{i=1}^{\ell-d+1}z'_i=
\sum_{i=1}^{\ell-d+1}z_i+1\leq (\ell-d+2)d-m-x$, where the last inequality follows
from $\sum_{i=1}^{\ell-d+1}z_i\leq (\ell-d+2)d-(m+1)-x$ since
$(z_1,...,z_{\ell-d+1},x,m+1)$ satisfies (\ref{conds2}).		
Moreover,
\begin{align*}
	&\quad\quad G(z'_1,...,z'_{\ell-d+1},x,m)-G(z_1,...,z_{\ell-d+1},x,m+1)\\
	&=\big(\sum_{i=1}^{\ell-d+1}2^{z'_i}+(m-(\ell-d+1))2^{x}\big)-\big(\sum_{i=1}^{\ell-d+1}2^{z_i}+(m+1-(\ell-d+1))2^{x}\big)\\
	 &=2^{z_{i_0}}-2^{x}\\ &\geq0.
\end{align*}
Therefore, $G_{m+1}(x)=G(z_1,...,z_{\ell-d+1},x,m+1)\leq G(z'_1,...,z'_{\ell-d+1},x,m)\leq G_{m}(x)$. The claim is proved.

Since $m\geq\ell+2-x$ in this case, we know $G_{m}(x)\leq G_{\ell+2-x}(x)$
where
\begin{align}
	&G_{\ell+2-x}(x)=\max_{x_1,...,x_{\ell-d+1}} \quad \sum_{i=1}^{\ell-d+1}2^{x_i}+(d+1-x)2^{x} \notag \\
	&\notag{\rm s.t.} \left\{
	\begin{array}{l}
		d-1\geq x_1\geq x_2\geq\cdots\geq x_{\ell-d+1}\geq x \\
		\sum_{i=1}^{\ell-d+1}x_i\leq (\ell-d)(d-1)+d-2\\
	\end{array}
	\right..
	\end{align}	
It can be easily seen that $\sum_{i=1}^{\ell-d+1}2^{x_i}$ achieves its maximum
at $x_1=\cdots=x_{\ell-d}=d-1,x_{\ell-d+1}=d-2$. Thus,
$G_{\ell+2-x}(x)=(\ell-d)2^{d-1}+2^{d-2}+(d+1-x)2^{x}$. Since $x\leq d-2$,
it follows $(d+1-x)2^{x}$ achieves its maximum
value $3\cdot2^{d-2}$ if and only if $x=d-2$. Consequently,
\begin{equation}\label{32}
\max_{m,x}G_{m}(x)=\max_{x}G_{\ell+2-x}(x)=G_{\ell-d+4}(d-2)=(\ell-d+2)2^{d-1}.
\end{equation}
\end{itemize}
Combining the two cases, one can see $2^{d-m}\sum_{i=1}^{t'}2^{a_i}\leq
G(x_1,...,x_{\ell-d+1},x,m)\leq (\ell-d+2)2^{d-1}$. Moreover, the equality holds only if
$x_1=\cdots=x_{\ell-d}=d-1$ and $x=d-1$ or $d-2$.
Note the condition in (\ref{conds2}) imply
$(\ell-d+2)d-m-x\geq\sum_{i=1}^{\ell-d+1}x_i\geq(\ell-d+1)x$, i.e., $m\leq (\ell-d+2)(d-x)$.
Therefore, a necessary condition for the equality to hold is $m\leq2(\ell-d+2)$.		

\end{proof}

\end{sloppypar}						
\end{document}